\def\th@plain{%
  \thm@notefont{}
  \itshape 
}
\def\th@definition{%
  \thm@notefont{}
  \normalfont 
}
\newtheorem{theorem}{Theorem}
\newtheorem{lemma}{Lemma}[section]
\newtheorem{corollary}[lemma]{Corollary}
\theoremstyle{definition}  \newtheorem{definition}[lemma]{Definition}
\theoremstyle{remark}  
\newlist{thmparts}{enumerate}{1}
\setlist[thmparts]{labelindent=\parindent,leftmargin=*,itemsep=2pt,font=\normalfont,label=(\thetheorem.\arabic*)}
\Crefname{thmpartsi}{Subresult}{Subresults} 
\declaretheoremstyle[%
  spaceabove=-6pt,%
  spacebelow=6pt,%
  headfont=\normalfont\itshape,%
  postheadspace=1em,%
  qed=\qedsymbol%
]{mystyle} 
\declaretheorem[name={Proof},style=mystyle,unnumbered,
]{prf}
\algnewcommand{\LineComment}[1]{\State \textcolor{blue}{$\triangleright$ #1}}
\DeclareMathOperator{\EE}{\mathbb{E}}
\newcommand{\FF}{\mathbb{F}}
\newcommand{\ZZ}{\mathbb{Z}}
\newcommand{\indic}{\mathbbm{1}}
\newcommand{\cC}{\mathcal{C}}
\newcommand{\cF}{\mathcal{F}}
\newcommand{\cH}{\mathcal{H}}
\newcommand{\cP}{\mathcal{P}}
\newcommand{\cQ}{\mathcal{Q}}
\newcommand{\cR}{\mathcal{R}}
\newcommand{\cU}{\mathcal{U}}
\newcommand{\bzero}{\mathbf{0}}
\newcommand{\bone}{\mathbf{1}}
\newcommand{\ba}{\mathbf{a}}
\newcommand{\bi}{\mathbf{i}}
\newcommand{\bj}{\mathbf{j}}
\newcommand{\bw}{\mathbf{w}}
\newcommand{\tO}{\widetilde{O}}
\DeclareMathOperator{\Free}{Free}
\DeclareMathOperator{\mono}{mono}
\DeclareMathOperator{\poly}{poly}
\DeclareMathOperator{\polylog}{polylog}
\DeclareMathOperator{\Var}{Var}
\DeclareMathOperator{\Nhd}{N}
\newcommand{\slackwrt}[2]{\slack(#2 \mid #1)}  
\newcommand{\invalpha}{\alpha^{-1}}            
\newcommand{\dconf}{\mathrm{d}_{\textup{conf}}}  
\newcommand{\ang}[1]{{\langle{#1}\rangle}}
\newcommand{\ceil}[1]{{\left\lceil{#1}\right\rceil}}
\newcommand{\floor}[1]{{\left\lfloor{#1}\right\rfloor}}
\renewcommand{\b}{\{0,1\}}
\newcommand{\setm}{\smallsetminus}
\newcommand{\ontop}[2]{\genfrac{}{}{0pt}{2}{#1}{#2}}
\newcommand{\expln}[2]{\stackrel{\text{\scriptsize{#1}}}{#2}}   
\newcommand\squeezepar{\@startsection{paragraph}{4}{\z@}{1.5ex \@plus1ex \@minus.2ex}{-1em}{\normalfont\normalsize\bfseries}}
\renewcommand{\paragraph}[1]{\squeezepar{{#1}.}}
\newcommand{\curr}{\mathtt{curr}}
\title{Coloring in Graph Streams via Deterministic and\\ Adversarially Robust Algorithms\footnote{
This work was supported in part by NSF under awards CCF-1907738 and CCF-2006589.
}}
\author{Sepehr Assadi\thanks{Department of Computer Science, Rutgers University. Research supported in part by a
NSF CAREER Grant CCF-2047061, a Google Research gift, and a Fulcrum award from Rutgers Research Council.} \and Amit Chakrabarti\thanks{Department of Computer Science, Dartmouth College.} \and Prantar Ghosh\thanks{DIMACS, Rutgers Univeristy. Work done in part while the author was at Dartmouth College.} \and Manuel Stoeckl\footnotemark[3]}
\date{}
\begin{document}

\maketitle
\thispagestyle{empty}
\begin{abstract}

\noindent 
In recent years, there has been a growing interest in solving various graph
coloring problems in the streaming model. The initial algorithms in this line
of work are all crucially randomized, raising natural questions about how
important a role randomization plays in streaming graph coloring. A couple of
very recent works have made progress on this question: they prove that
deterministic or even adversarially robust coloring algorithms (that work on
streams whose updates may depend on the algorithm's past outputs) are
considerably weaker than standard randomized ones. However, there is still a
significant gap between the upper and lower bounds for the number of colors
needed (as a function of the maximum degree $\Delta$) for robust coloring and multipass
deterministic coloring. We contribute to this line of work by proving the
following results. 
  
  \begin{itemize}
    \item In the deterministic semi-streaming (i.e., $O(n \cdot
    \text{polylog } n)$ space) regime, we present an algorithm that achieves a
    combinatorially optimal $(\Delta+1)$-coloring using
    $O(\log{\Delta} \log\log{\Delta})$ passes.  This improves upon the
    prior $O(\Delta)$-coloring algorithm of Assadi, Chen, and Sun (STOC 2022)
    at the cost of only an $O(\log\log{\Delta})$ factor in the number of
    passes.   
    \item In the adversarially robust semi-streaming regime, we design an
    $O(\Delta^{5/2})$-coloring algorithm that improves upon the previously
    best $O(\Delta^{3})$-coloring algorithm of Chakrabarti, Ghosh, and Stoeckl
    (ITCS 2022). Further, we obtain a smooth colors/space tradeoff that
    improves upon another algorithm of the said work: whereas their algorithm
    uses $O(\Delta^2)$ colors and $O(n\Delta^{1/2})$ space, ours, in
    particular, achieves (i)~$O(\Delta^2)$ colors in $O(n\Delta^{1/3})$ space,
    and (ii)~$O(\Delta^{7/4})$ colors in $O(n\Delta^{1/2})$ space.   
  \end{itemize}

\end{abstract}

\newpage

\addtocounter{page}{-1}
\section{Introduction}

In the graph coloring problem, we are given an undirected graph and the goal
is to assign colors to the nodes of the graph such that adjacent nodes receive
different colors.  Graph coloring is a fundamental problem in graph theory
with numerous applications in computer science, including in databases, data
mining, register allocation, and
scheduling~\cite{Chaitin82,LotfiS86,peng2016vcolor}; see, e.g., the
application to parallel query optimization by Hasan and
Motwani~\cite{HasanM95}. The emergence of massive graphs in many
of these application domains has necessitated the study of graph coloring
algorithms that are capable of handling such graphs efficiently on modern
architecture. Of particular interest is the family of \emph{graph streaming}
algorithms: each such algorithm computes its solution using only a small
number of sequential passes over the edges of the input graph, while using a
sublinear amount of memory. 

Several graph coloring problems have been studied in the streaming setting,
typically with the goal of achieving a palette size (total number of colors
used) proportional to the graph's chromatic
number~\cite{CormodeDK19,AbboudCKP19}, maximum vertex-degree~\cite{AssadiCK19,
BeraG18, AlonA20, AssadiCS22, AssadiKM22}, arboricity~\cite{BeraG18}, or
degeneracy~\cite{BeraCG20}. Also studied is the closely-related problem of
(degree+1)-list-coloring~\cite{HalldorssonKNT22} (see also~\cite{AlonA20}).
Furthermore, graph coloring has been considered under different streaming
paradigms such as random stream order and the vertex-arrival model
\cite{BhattacharyaBMU21}. Most of these works consider the
\emph{semi-streaming} regime, where we are restricted to $O(n\cdot
\text{polylog } n)$ space for processing an $n$-vertex graph.
Since even just storing the output coloring can require $\Omega(n\log n)$ space,
this is close to optimal for the problem. We study
semi-streaming graph coloring, focusing on  \emph{the} most popular color
parameter in this line of work, namely the maximum degree $\Delta$ of the
graph: we call this ``$\Delta$-based coloring.''

A trivial greedy algorithm achieves a $(\Delta+1)$-coloring in the offline
setting. However, obtaining this color bound in the \emph{streaming} model is
fairly challenging. A breakthrough work by Assadi, Chen, and Khanna
\cite{AssadiCK19} did achieve such a coloring in semi-streaming space. An
aspect of this algorithm, shared with almost all subsequent streaming coloring
algorithms, is that it is inherently randomized. This raises the natural
question: to what extent is \emph{randomization} necessary for $\Delta$-based
coloring? 
Indeed, a derandomized version can be advantageous in multiple scenarios, e.g., having low or zero error even when the algorithm is rerun a huge (maybe exponential) number of times, or for robustness against input streams generated based on the algorithm's past outputs or internal states. 

Two recent works have addressed this question. On the one hand, Assadi, Chen,
and Sun~\cite{AssadiCS22} ruled out non-trivial single-pass deterministic
algorithms for $\Delta$-based coloring: any such algorithm requires
$\exp(\Delta^{\Omega(1)})$ colors for semi-streaming space (and
$\Delta^{\Omega(1/\alpha)}$ colors for $O(n^{1+\alpha})$ space). They further
showed that allowing \emph{multiple} semi-streaming passes over the stream
makes better tradeoffs possible: one can get an $O(\Delta^2)$-coloring in $2$
passes, and an $O(\Delta)$-coloring in $O(\log \Delta)$ passes.  On the other
hand, Chakrabarti, Ghosh, and Stoeckl~\cite{ChakrabartiGS22}, considered a
``middle ground'' between deterministic and randomized algorithms, namely the
\emph{adversarially robust} algorithms introduced by~\cite{BenEliezerJWY20}. These algorithms work even when stream updates are generated by an adaptive adversary, depending on the
algorithm's previous outputs (and thus implicitly on its internal randomness;
observe that deterministic algorithms are always robust). They showed that a
(possibly randomized) robust semi-streaming coloring algorithm requires
$\Omega(\Delta^2)$ colors, while an $O(\Delta)$-coloring admits no
$o(n\Delta)$-space robust algorithm. The same work also gave a robust
semi-streaming algorithm achieving $O(\Delta^3)$ colors.  Thus, the results
in~\cite{AssadiCS22,ChakrabartiGS22} establish a neat trichotomy for
single-pass semi-streaming graph coloring: (i)~a $(\Delta+1)$-color palette
suffices for standard randomized streaming; (ii)~$\poly(\Delta)$ colors are
necessary and sufficient for robust streaming; and~(iii) $\exp(\Delta)$ colors
are needed for deterministic algorithms. 

Many questions in this line of work, however, remain unresolved. Here are two
particular ones: 
\begin{enumerate}[label=(\roman*),font=\rm] \em
  \item For deterministic algorithms, how many passes are needed to
  achieve a tight $(\Delta+1)$-coloring?
  \item For robust algorithms, where in the range
  $[\Delta^2,\Delta^3]$ does the above \textup{``$\poly(\Delta)$''} number of
  colors lie?
\end{enumerate}

This paper takes steps towards resolving both these questions.

\subsection{Our Contributions} 

\paragraph{The Deterministic Setting} In this setting, our main result is a
multi-pass algorithm for $(\Delta+1)$-coloring that runs in semi-streaming
space.
 
\begin{theorem} \label{thm:det-multipass-coloring}
  There is an efficient deterministic semi-streaming algorithm to
  $(\Delta+1)$-color an $n$-vertex graph, given a stream of its edges arriving in an adversarial order. The
  algorithm uses $O(n \log^2 n)$ bits of space and runs in $O(\log\Delta
  \log\log\Delta)$ passes. 
\end{theorem}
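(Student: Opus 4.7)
The plan follows the iterative partial coloring paradigm. I maintain a set $U$ of uncolored vertices (initially $V$) and, implicitly, for each $v \in U$ a list $L(v) \subseteq [\Delta+1]$ of colors not used by the already-colored neighbors of $v$, preserving the slack invariant $|L(v)| \geq d_U(v) + 1$, where $d_U(v)$ is $v$'s uncolored degree (this holds initially since $L(v) = [\Delta+1]$ and $d_U(v) = \deg(v) \leq \Delta$). The algorithm will run $O(\log \Delta)$ \emph{halving phases}, each coloring at least half of the current $U$, so that after these phases $|U| \leq n / \poly(\Delta)$; at that point $G[U]$ has $O(n)$ edges, which fit in semi-streaming memory and are colored greedily in $O(1)$ further passes. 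The core subgoal is thus to implement one halving phase in $O(\log \log \Delta)$ passes.

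For a single phase, I would derandomize the classical randomized strategy in which each $v \in U$ picks a uniformly random color from $L(v)$: by the slack invariant, such a uniform choice leaves $v$ uncontested (i.e., no uncolored neighbor picks the same color) with constant probability, so in expectation a constant fraction of $U$ is successfully colored, and a Markov-type argument gives the halving. To derandomize within a stream, I would employ the method of conditional expectations over a low-independence (say, pairwise) hash family whose sample space has size $\poly(\Delta)$. The pessimistic estimator --- the expected number of contested vertices given the current partial seed --- decomposes as a sum over stream edges and hence is evaluable in a single pass. Each subsequent pass commits additional seed bits in a geometrically growing chunk (choosing the best among a constant number of continuations per pass), so $O(\log \log \Delta)$ passes suffice to fix an $O(\log \Delta)$-bit seed and produce a deterministic coloring for at least $|U|/2$ vertices.

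The main obstacles, I expect, are (a)~respecting the $O(n \log^2 n)$-bit space budget, and (b)~certifying that each refinement step makes geometric progress in the number of committed seed bits. For (a), since $\sum_v |L(v)|$ can be $\Theta(n\Delta)$ in the worst case, the lists cannot be stored explicitly; the algorithm instead recomputes forbidden-color sets on the fly each pass, storing only $O(n \log n)$ bits for the current partial coloring, $O(\log \Delta)$ bits for the partial seed, and $O(n \log n)$ bits of per-vertex bookkeeping. For (b), the edge-decomposable structure of the estimator is what enables selecting the ``best'' continuation among a constant-size candidate set per pass; exploiting the $\poly(\Delta)$-size sample space then supports a doubling schedule that fixes $\Theta(\log \Delta)$ seed bits in $O(\log \log \Delta)$ passes. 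A secondary subtlety is that the adversarial stream order forces us to compute neighbor-color information anew in every pass rather than amortize it across phases; this is handled by keeping the per-vertex bookkeeping minimal and reading edges afresh each pass. Combining $O(\log \Delta)$ halving phases, $O(\log \log \Delta)$ passes per phase, and $O(1)$ cleanup passes then yields the claimed pass and space bounds.
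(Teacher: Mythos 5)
Your high-level skeleton (iterative halving phases, derandomize a random partial coloring, greedy cleanup once $|U|$ is small) matches the paper, but the per-phase mechanism you describe has several gaps that prevent it from giving the stated bounds.

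First, you propose to derandomize ``each $v$ picks a uniform color from $L(v)$'' with a pairwise-independent hash family and the estimator ``expected number of contested vertices.'' That estimator does \emph{not} decompose over stream edges --- $\Pr[v \text{ contested}]$ is a conjunction over $v$'s whole uncolored neighborhood, not a linear combination of edge-level events. The first moment that \emph{does} decompose is the expected number of monochromatic edges, and bounding it by $O(|U|)$ does not by itself make a constant fraction of $U$ uncontested; one must pass through an independent set in the conflict graph (Tur\'an), which is exactly the role of the potential $\Phi$ and the end-of-epoch independent-set step in the paper. Second, and more fundamentally, sampling uniformly from the variable-size list $L(v)$ is not something a single hash function $V \to [\Delta+1]$ implements: you would have to reduce $h(v)$ modulo $|L(v)|$, but $|L(v)|$ is not directly accessible in semi-streaming space, and you cannot store $\sum_v |L(v)| = \Theta(n\Delta)$ bits. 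The relaxation of accepting $h(v)$ only when it lands in $L(v)$ fails, since a vertex with slack $1$ is tentatively colored with probability $\Theta(1/\Delta)$, so a phase does not halve $U$. The paper circumvents this precisely via the partially committed coloring / subcube structure together with the streaming-computable surrogate slack $\slackwrt{T}{x}$: the proposed color set $P_x$ is refined a few bits at a time, with weights given by slack, so the algorithm never needs to know $L(v)$ itself and is still guaranteed a free color at the end (Lemma~\ref{lem:slack-positive}). Third, your pass accounting is internally inconsistent: committing seed bits in ``geometrically growing chunks'' while only evaluating a ``constant number of continuations per pass'' cannot both hold, since a chunk of $c$ new bits has $2^c$ continuations; and the $\poly(\Delta)$-size sample space you assume is not achievable for a pairwise-independent family over $V$ (one needs $\poly(n)$ functions). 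Finally, the paper's $O(\log\Delta\log\log\Delta)$ does not arise from $O(\log\Delta)$ phases each costing $O(\log\log\Delta)$ passes. Each stage's derandomization takes $O(1)$ passes (a two-round tournament over $\sqrt{|\cH|}$ accumulators); what varies is the \emph{number of stages per epoch}, namely $\lceil \log\Delta / k_r\rceil$ with $k_r \approx 1 + \lfloor\log(n/|U|)\rfloor$, which is forced by the space budget $|U|\,2^{k_r} = O(n)$ for the slack counters. Early epochs, where $|U|\approx n$, genuinely need $\Theta(\log\Delta)$ passes; the $\log\log\Delta$ factor comes from the harmonic sum $\sum_r 1/k_r$ across epochs, a schedule your uniform per-phase count does not capture.
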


The above result shows that we can improve the $O(\Delta)$-coloring result
of~\cite{AssadiCS22} to the combinatorially optimal $(\Delta+1)$-coloring by
paying only an additional $O(\log\log{\Delta})$ factor in the number of passes. It
is worth pointing out here that in the streaming model, as well as several
other cases, it is known that $O(\Delta)$-coloring is an ``algorithmically
\emph{much} easier'' problem than $(\Delta+1)$-coloring. For instance, there are
quite simple single-pass randomized algorithms known for $O(\Delta)$-coloring \cite{BeraG18, AssadiCK19},
whereas the only known streaming $(\Delta+1)$-coloring algorithm, due
to~\cite{AssadiCK19}, uses sophisticated tools and a combinatorially involved
analysis.\footnote{Similar examples of this difference appear in the
(randomized) LOCAL algorithms~\cite{SchneiderW10,ChangLP18}, (deterministic)
dynamic graph algorithms~\cite{BhattacharyaCHN18}, or even provable
separations for the ``palette sparsification''
technique~\cite{AssadiCK19,AlonA20}. Yet another example is the closely
related problem of $O(\text{degeneracy})$-coloring versus
$(\text{degeneracy}+1)$-coloring studied by~\cite{BeraCG20} who proved that
the former admits a (randomized) single-pass semi-streaming algorithm while
the latter does not.}

Our algorithm in~\Cref{thm:det-multipass-coloring} uses a variety of novel
ideas and techniques. It is inspired by a recent distributed algorithm of
Ghaffari and Kuhn~\cite{GhaffariK21} that solves $(\Delta+1)$-coloring in the
CONGEST model. That algorithm was in turn inspired by earlier algorithms of
\cite{BambergerKM20} and \cite{Kuhn20}. We build on these works with non-trivial
modifications, additional methodology, and careful analyses. In particular, we
must contend with the limitation that the semi-streaming model does not allow
enough space for a typical vertex to ``know'' much of its neighborhood; this
is in sharp contrast to distributed computing models (including CONGEST).
Moreover, our algorithm achieves roughly $O(\log\Delta)$ passes, whereas the
\cite{GhaffariK21} algorithm uses $O(\log^2 \Delta\log n)$ distributed rounds;
this quantitative difference stems, in part, from our delicate tuning of
parameters in an iterative process that colors vertices in batches.

As a by-product of the technology developed for establishing
\Cref{thm:det-multipass-coloring}, we also obtain a similarly efficient
algorithm for the more general problem of $(\text{degree}+1)$-list-coloring.
In this problem, the input specifies a graph $G$ as usual and, for each vertex
$x$, a list $L_x$ of at least $\deg(x)+1$ allowed colors for $x$; the goal is
to properly color $G$ subject to these lists. In a streaming setting, the
input is a sequence of tokens, each either an edge of $G$ or a pair $(x,L_x)$
for some vertex $x$; these tokens may be interleaved arbitrarily. We obtain
the following algorithmic result.

\begin{theorem} \label{thm:d1lc-det-multipass}
  Let $C$ be a set of colors of size $O(n^2)$. There is a deterministic
  semi-streaming algorithm for $(\text{degree}+1)$-list-coloring a graph $G$
  given a stream consisting of, in any order, the edges of $G$ and $(x,L_x)$
  pairs specifying the list $L_x$ of allowed colors for a vertex $x$,
  where $L_x \subseteq C$. The algorithm uses $O(n \log^2 n)$ bits of space
  and runs in $O(\log\Delta \log\log\Delta)$ passes. 
\end{theorem}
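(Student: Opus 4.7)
My plan is to extend the algorithm proving \Cref{thm:det-multipass-coloring} by replacing the global palette $\{1,\ldots,\Delta+1\}$ with the per-vertex lists $L_x$. The correctness of the underlying algorithm rests on a slack invariant: whenever a vertex $x$ is to be assigned its final color, at most $\deg(x)$ of its neighbors have been colored previously, so at least one of the $\Delta+1$ palette colors is free. The (degree+1)-list-coloring hypothesis $|L_x| \ge \deg(x)+1$ supplies precisely this slack per vertex: at least $|L_x|$ minus the number of colored neighbors of $x$, which is at least $1$, colors from $L_x$ remain available. Replacing ``pick any color in $\{1,\ldots,\Delta+1\}$ not used by a colored neighbor of $x$'' with ``pick any color in $L_x$ not used by a colored neighbor of $x$'' therefore preserves correctness, leaving the coloring schedule and pass count $O(\log\Delta\log\log\Delta)$ intact.

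The main obstacle is representing the lists in semi-streaming space: a list $L_x \subseteq C$ may contain up to $|C| = O(n^2)$ colors, so lists cannot be stored explicitly, and even truncating each list to its first $\deg(x)+1$ colors (all that is ever needed) would sum to $\Theta(m \log n)$ bits across all vertices, exceeding the semi-streaming budget when $\Delta = \omega(\log n)$. I would sidestep this by using multi-pass access to the stream in place of explicit storage. First, a single opening pass computes $\deg(x)$ for every vertex $x$, using $O(n \log n)$ bits. Then, in each coloring phase of the base algorithm, when a token $(x, L_x)$ is observed for an active vertex $x$, the algorithm scans $L_x$ on the fly, comparing each color against a precomputed set of colors already used by $x$'s colored neighbors, and keeps the first color not in that set. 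The forbidden-color information needed for this scan is exactly what the base algorithm already works with when verifying conflict-freeness, and since $|C| = O(n^2)$ every color occupies $O(\log n)$ bits (matching the per-color cost assumed in \Cref{thm:det-multipass-coloring}'s space accounting), the $O(n \log^2 n)$-bit space bound transfers directly.

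The subtlest step to verify is that this on-the-fly list access can be organized into only a constant number of additional sub-passes per coloring phase, rather than one sub-pass per active vertex. I expect this to follow by computing, in a preparatory sub-pass per phase, the colored-neighbor color sets for all currently active vertices simultaneously, and then using a single further pass to scan the $(x, L_x)$ tokens and record the first available color for each $x$. With this organization, the total pass count grows only by an additive $O(1)$ for the degree-computation pass and a constant multiplicative factor absorbed into the $O(\cdot)$, yielding $O(\log\Delta\log\log\Delta)$ passes as claimed. Correctness is immediate from the slack invariant: the chosen color lies in $L_x$ by construction and conflicts with no colored neighbor by the scan, while the hypothesis $|L_x| \ge \deg(x)+1$ guarantees that such a color always exists.
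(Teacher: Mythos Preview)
Your proposal rests on a mischaracterization of the base algorithm. \Cref{alg:determ} does not color vertices by ``picking any color not used by a colored neighbor''; it maintains, for every uncolored $x$, a proposal set $P_x$ that is a subcube of $\{0,1\}^b$, and each stage shrinks all $P_x$ simultaneously by fixing the next $k$ bits, using a derandomized hash to keep the potential $\Phi$ nearly constant. Crucially, the algorithm never stores the set of colors used by the neighbors of $x$; it only stores \emph{counts} of the form $|\{y\in N(x)\setminus U:\chi(y)\in P_x\cap Q\}|$, one per subcube part. Your ``precomputed set of colors already used by $x$'s colored neighbors'' would cost $\Theta(n\Delta)$ bits, and the base algorithm is specifically engineered to avoid this.

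The genuine obstacle you miss is that the bit-fixing subcube partitions $\cQ^{(i)}$ are tied to the structure of the palette $[\Delta+1]$: when $L_x=[\Delta+1]$, each subcube of $P_x$ intersects $L_x$ in a predictable share, so the slack invariant $s_x\ge 1$ survives every shrinking step. With arbitrary lists $L_x\subseteq C$, a subcube of $C$ may contain all of $L_x$ or none of it, so bit-fixing gives no control over $|L_x\cap P_x|$, and the slack and potential analyses break. The paper's fix is a new combinatorial lemma (\Cref{lem:d1lc-partitioner}): for any target $s$, there is a small family $\cF$ of partitions of $C$ into $s$ parts (built from a $2$-universal hash family on $C$) such that some $\cR\in\cF$ satisfies $\sum_x\max_{S\in\cR}(|L_x\cap P_x\cap S|-1)\le s^{-1/2}\sum_x(|L_x\cap P_x|-1)$. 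A good partition is located by a $4$-pass search over $\cF$ (computing the relevant sums on the fly as the $(x,L_x)$ tokens stream by), and then used in place of $\cQ^{(i)}$. Because the shrinkage rate is only $s^{-1/2}$ rather than $s^{-1}$, the number of stages roughly doubles, and one final special stage handles the transition to singletons; the pass count remains $O(\log\Delta\log\log\Delta)$. None of this machinery appears in your proposal.
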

 
\paragraph{The Adversarially Robust Setting} In this setting, our algorithm
needs to be correct against an adversary who constructs the input graph
adaptively by inserting upcoming edges based on the colorings returned by the
algorithm. This is inherently a single-pass setting. However, we are now
allowed to use randomness.  The interaction with the active adversary means
that the stream elements might depend on past outputs, which in turn depend on
the random bits used by the algorithm. While \cite{AssadiCK19} gave a
semi-streaming $(\Delta+1)$-coloring algorithm in the ``non-robust'' 
setting, where the stream is fixed in advance, \cite{ChakrabartiGS22}
showed that a \emph{robust} semi-streaming algorithm must use $\Omega(\Delta^2)$
colors. Our main result in the robust setting is the following.

\begin{theorem}\label{thm:adv-2.5-coloring}  
  There is an $O(\Delta^{5/2})$-coloring algorithm which is robust (with total
  error probability $\le \delta$) against adaptive adversaries, and runs in
  $O(n \log^{O(1)} n \cdot \log\delta^{-1})$ bits of space, given oracle access to
  $O(n \Delta)$ bits of randomness.
  %
  %
\end{theorem}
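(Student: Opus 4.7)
The plan is to build on the robust $O(\Delta^3)$-coloring algorithm of Chakrabarti, Ghosh, and Stoeckl, refining its color-assignment step to save a factor of $\Delta^{1/2}$. At a high level, I would maintain a single ``current color'' $c_v$ per vertex together with $O(\polylog n)$ bits of auxiliary state, processing each edge as it arrives. Whenever an edge $(u,v)$ arrives with $c_u = c_v$, one endpoint (say $v$) is recolored using fresh randomness drawn from the oracle, and the change is exposed to the adversary. The first design decision is to work with a palette of size $C = \Theta(\Delta^{5/2})$ and, for each vertex $v$, an implicit infinite sequence of candidate colors $L_v[1], L_v[2], \ldots$ drawn i.i.d.\ uniformly from $[C]$ via oracle queries at indices $(v,i)$. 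Only a pointer $\tau_v$ and a small ``conflict sketch'' $S_v$ summarizing the colors of previously seen neighbors of $v$ are stored explicitly.

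Processing edge $(u,v)$ goes as follows: update $S_v$ and $S_u$ with the currently observed colors, and if $c_u = c_v$, advance $\tau_v$ to the smallest index $\tau > \tau_v$ such that $L_v[\tau]$ is not flagged by $S_v$, then publish $c_v := L_v[\tau]$ as the new color. The adversary learns $c_v$, but the tail $L_v[\tau_v+1], L_v[\tau_v+2], \ldots$ remains hidden and (conditionally) uniform.

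The central claim to prove is that with probability at least $1 - \delta/n$ the pointer $\tau_v$ stays bounded by a $\polylog n$ quantity throughout the stream. The intuition is that, conditioned on the adversary's view, the next draw $L_v[\tau_v+1]$ is uniform in $[C]$ and independent of everything the adversary has observed so far. Since at most $\Delta$ distinct neighbor-colors can ever block $v$, the blocking probability of a fresh draw is at most $\Delta/C = \Delta^{-3/2}$, and a geometric tail bound followed by a union bound over $v$ yields the theorem. The accompanying tradeoff $c \cdot s^{3/2} = \Delta^{5/2}$ then arises naturally: giving each vertex $s$ slots in its sketch allows $S_v$ to record neighbor colors more faithfully, reducing the ``effective'' blocking budget and permitting a correspondingly smaller palette $C = \Theta(\Delta^{5/2}/s^{3/2})$, matching the parameter points stated in the paper's tradeoff result.

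The main obstacle is the adaptive-adversary analysis. Although the sequences $\{L_v\}$ are i.i.d., each published color is a partial revelation of $L_v$, and the adversary's future edge choices depend on the full transcript of past colors; hence I must argue that, conditioned on the adversary's posterior, the unused suffix of each $L_v$ is still a fresh independent uniform string. This is the standard transcript-conditioning argument used in adversarial-robustness proofs, but several subtleties arise here: (i)~recoloring one vertex changes the color landscape that enters the sketches of all its neighbors, creating cross-vertex dependencies that must be untangled via a carefully chosen filtration; (ii)~the sketch $S_v$ is probabilistic and so may force spurious advances of $\tau_v$, whose cumulative effect must be bounded; and (iii)~the adversary may try to exploit correlations among different sketches to concentrate conflicts on a single vertex, which forces the oracle-randomness budget of $O(n\Delta)$ bits and dictates the level of independence we can demand between vertices. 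I expect the bulk of the technical work to be in constructing $S_v$ and the conditioning argument so that these three issues are simultaneously controlled within the claimed space bound.
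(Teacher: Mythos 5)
Your proposal takes a genuinely different route from the paper---eager recolor-on-conflict with a per-vertex pointer into a pre-committed random color sequence---but it has gaps that would need to be fixed before it could work, and they are not minor.

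First, the central claim that $\tau_v$ stays bounded by $\polylog n$ is false. An adaptive adversary can observe $c_v$ after each recoloring, locate (or manufacture) a vertex $u$ with $c_u = c_v$, and insert the edge $\{u,v\}$, forcing another advance of $\tau_v$. This can be repeated up to $\Delta$ times, so $\tau_v = \Omega(\Delta)$ in the worst case; the geometric-tail argument bounds only the \emph{spurious} advances caused by sketch collisions, not the adversary-forced ones. This isn't necessarily fatal for storing $\tau_v$ (it's just an integer), but it undermines the stated proof plan and, more importantly, the assumption underlying the transcript-conditioning argument.

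Second, and more seriously, the algorithm as described does not maintain a proper coloring. When $v$ recolors to escape a conflict on one edge, every neighbor $u$ of $v$ that was processed earlier has a stale sketch $S_u$ recording $v$'s \emph{old} color. If $u$ later resamples and happens to land on $v$'s \emph{new} color, the edge $\{u,v\}$ becomes monochromatic, and nothing in the algorithm detects or repairs this. Your item (i) gestures at this ``cross-vertex dependency,'' but it is actually a correctness bug, not just an analysis subtlety: the invariant ``the published coloring is proper'' is not maintained. Fixing it seems to require either re-scanning neighbors of $v$ on every recolor (too much space/time) or batching conflicts and resolving them at query time---which is exactly the direction the paper takes.

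Third, $S_v$ is a placeholder with no concrete construction, and your three listed obstacles (filtration for cross-vertex dependencies, spurious advances, and concentration of conflicts) are precisely the load-bearing parts of the argument, left open.

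For contrast, the paper avoids all three issues by never maintaining a rolling per-vertex color at all. It stores batches of edges under random block-hash functions $h_i, g_i$, together with a buffer $B$ of the last $n$ edges, and crucially arranges things so that the randomness consulted at query time in epoch $t$ (namely $h_t$ for slow vertices, and $g_\ell$ for level-$\ell$ fast vertices) was never revealed to the adversary before the query. This ``future independence'' via sketch switching is the key idea, combined with a two-tier decomposition into slow vertices (degree $\le \sqrt{\Delta}$ in $B$, handled by degree$+1$ coloring inside $\Delta^2$ blocks) and fast vertices (handled by a degeneracy bound inside $\Delta^{3/2}$ blocks per level), both yielding $O(\sqrt{\Delta})$ colors per block and $O(\Delta^{5/2})$ overall. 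That structure, not a pointer-advancement scheme, is what lets the correctness and space arguments go through against an adaptive adversary.
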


The above result improves a robust algorithm of \cite{ChakrabartiGS22}, which
runs in a similar semi-streaming amount of space but only gives an
$O(\Delta^3)$-coloring. Further, our robust algorithm admits a smooth tradeoff
between the memory used and the number of colors. Setting the parameters
appropriately, we can improve upon a different robust algorithm of
\cite{ChakrabartiGS22} that gives an $O(\Delta^2)$-coloring using
$O(n\Delta^{1/2})$ space. Restricted to $O(\Delta^2)$ colors, we can improve
the space usage to $O(n\Delta^{1/3})$.  On the other hand, given
$O(n\Delta^{1/2})$ space, we can reduce the number of colors to
$O(\Delta^{7/4})$.

Our algorithm overcomes the challenges posed by the adaptive adversary by
crucially exploiting the graph structure and cleverly using modified versions
of the known techniques on subgraphs of the input graph. These techniques
include those in the adversarially robust literature, such as sketch
switching~\cite{BenEliezerJWY20, ChakrabartiGS22}, as well as those in the coloring literature,
such graph partitioning and degeneracy-based coloring~\cite{BeraCG20}. 

One caveat of the above result is the need for a large number of random bits. The same caveat applies to the aforementioned robust $O(\Delta^3)$-coloring algorithm of \cite{ChakrabartiGS22}. One could argue that, in practice, this is surmountable by using a cryptographic pseudorandom generator. However, if we wish to stick to the mathematical definition of adversarial robustness (which is an information-theoretic security guarantee), we can still obtain an improvement over past work, as shown in the following result.

\begin{theorem} \label{thm:rand-efficient-robust}
  There is an adversarially robust $O(\Delta^3)$-coloring algorithm that runs in semi-streaming space, even including the random bits used by the algorithm.
\end{theorem}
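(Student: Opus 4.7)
My plan is to take the adversarially robust $O(\Delta^3)$-coloring algorithm of Chakrabarti, Ghosh, and Stoeckl~\cite{ChakrabartiGS22} essentially as a black box and compress its $\Theta(n\Delta)$ bits of randomness down to $O(n\polylog n)$ by applying a pseudorandom generator for space-bounded computation. Since the working space of that algorithm is already semi-streaming, storing the PRG seed inside the same space budget will immediately yield the theorem.

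The first step is a structural examination of the reference algorithm: its randomness is consumed to instantiate many independent $\ell_0$-sampler-style sketches within the sketch-switching framework of~\cite{BenEliezerJWY20}, with each sketch using only $\polylog n$ random bits. Crucially, the algorithm accesses its random tape in a read-once manner, and at any moment describes itself in $O(n\polylog n)$ bits of state. This puts the algorithm in the form required by Nisan's pseudorandom generator, which fools any read-once branching program of width $2^S$ using a seed of length $O(S\log R)$, where $R$ is the total number of random bits consumed. With $S = O(n\polylog n)$ and $R = \poly(n,\Delta)$, the resulting seed length is $O(n\polylog n)$ bits, well within the semi-streaming budget; we store the seed explicitly and regenerate blocks of pseudorandom bits on demand whenever a sketch is constructed or queried.

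The main obstacle, and where most of the effort will go, is reconciling a space-bounded PRG with a computationally \emph{unbounded} adaptive adversary. I would address this by observing that although the adversary is unrestricted, its only access to the algorithm's internal randomness is through the $O(n\log n)$-bit coloring output after each edge. Consequently, for any fixed (randomized) adversary strategy, the mapping from random tape to the full output transcript factors through the algorithm's $O(n\polylog n)$-bit state and is therefore a read-once branching program of precisely the width Nisan's PRG can fool. Taking a union bound over the $\poly(n,\Delta)$ output events one needs to control, the overall deviation scales polynomially in the PRG's per-event error; choosing that error as $1/\poly(n,\Delta)$ inflates the seed length only by constant factors and can be absorbed into the original $\delta$-failure probability without degrading the $O(\Delta^3)$ color guarantee, completing the proof of~\Cref{thm:rand-efficient-robust}.
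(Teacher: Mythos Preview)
Your proposal has a genuine gap in the application of Nisan's PRG against an adaptive adversary. Nisan's generator fools tests computable by read-once branching programs of width $2^S$; here the test is ``does the algorithm fail against a fixed adversary $A$?'', and to evaluate it one must simulate the entire interaction between algorithm and adversary. The state of that simulation is not just the algorithm's $O(n\polylog n)$-bit state: after each edge the adversary's next move depends on the \emph{entire} history of colorings output so far, which the algorithm need not (and does not) remember. Your assertion that ``the mapping from random tape to the full output transcript factors through the algorithm's $O(n\polylog n)$-bit state'' is therefore false---the next input is determined by $(o_1,\ldots,o_t)$, not by $s_t$ alone. A ROBP simulating the interaction must carry the output transcript (up to $\Theta(m \cdot n\log\Delta)$ bits) in its state, so its width is $2^{\Omega(n^2)}$ and Nisan's seed length blows up well past semi-streaming. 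This is exactly why the paper remarks that a cryptographic PRG would suffice \emph{in practice} but does not yield the information-theoretic guarantee required by the adversarial model. A secondary issue: the base algorithm's randomness encodes full random functions $V\to[\Delta^2]$, each queried many times in an order dictated by the adversary, so the read-once assumption also fails unless you store the tables---which is precisely the $\Theta(n\Delta)$ bits you are trying to avoid.

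The paper takes a completely different route: rather than compressing the randomness of an existing algorithm, it redesigns the algorithm to need little randomness in the first place. It keeps the sketch-switching skeleton but replaces each fully random partition $h_i:V\to[\ell^2]$ by a hash drawn from a $4$-independent family (so only $O(\log n)$ random bits per function, $O(\Delta\log^2 n)$ total). The crucial structural fact is that $h_{i,j}$ is only read by the query routine once $\curr=i$, while all updates to the associated edge set $D_{i,j}$ occur when $\curr<i$; hence the stream prefix on which $D_{i,j}$ is built is independent of $h_{i,j}$, and a second-moment (Chebyshev) bound---which is all $4$-independence supports---shows $|D_{i,j}|\le O(n/\Delta)$ with probability $\ge 1/2$. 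Running $O(\log n)$ independent copies per epoch and capping each $D_{i,j}$ at $O(n/\Delta)$ edges then gives the semi-streaming space bound deterministically and correctness with high probability.
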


\subsection{Related work}

The study of graph coloring in the classical streaming model was initiated parallelly
and independently by Bera and Ghosh~\cite{BeraG18} and Assadi, Chen, and Khanna~\cite{AssadiCK19}. The former work
obtained an $O(\Delta)$-coloring algorithm in semi-streaming
space, while the latter achieved a tight $(\Delta+1)$-coloring in the same
amount of space. The latter work uses an elegant framework called
\emph{palette sparsification}: each node samples a list of roughly $\log n$
colors from the palette of size $\Delta+1$, and it is shown that w.h.p.~there exists
a proper list-coloring where each node uses a color only from its list. This
immediately gives a semi-streaming $(\Delta+1)$-coloring algorithm since one
can store only ``conflicting'' edges that can be shown to be only $\tO(n)$
many w.h.p.\footnote{The algorithm that is immediately implied is an
exponential-time one where one can store the conflicting edges and obtain the
list-coloring by brute force. An elaborate method was then needed to implement
it in polynomial time.} This framework implying semi-streaming coloring
algorithms was then explored by Alon and Assadi~\cite{AlonA20} under various
palette sizes (based on multiple color parameters) as well as list sizes.
Their results also implied interesting algorithms for coloring triangle-free
graphs and for (degree+1)-list coloring. 

Very recently, Assadi, Chen, and Sun \cite{AssadiCS22} studied deterministic $\Delta$-based
coloring and showed that for a single pass, no non-trivial streaming algorithm
can be obtained. For semi-streaming space, any deterministic algorithm needs
$\exp(\Delta^{\Omega(1)}))$ colors, whereas for $O(n^{1+\alpha})$ space,
$\Delta^{\Omega(1/\alpha)}$ colors are needed.  Observe that these bounds are
essentially matched by the trivial algorithm that stores the graph when
$\Delta \leq n^{\alpha}$ in order to $(\Delta+1)$-color it at the end; or just
color the graph trivially with $n = \Delta^{1/\alpha}$ colors, without even
reading the edges, when $\Delta > n^{\alpha}$.  In light of this, a natural
approach is to consider the problem allowing multiple passes over the input
stream. They show that in just one additional pass, an $O(\Delta^2)$-coloring
can be obtained deterministically, while with $O(\log \Delta)$ passes, we can
have a deterministic $O(\Delta)$-coloring algorithm. Another very recent work
on $\Delta$-based coloring is that of Assadi, Kumar, and Mittal \cite{AssadiKM22}, who surprisingly
proved Brooks's theorem in the semi-streaming setting: any (connected) graph
that is not a clique or an odd cycle can be colored using exactly $\Delta$
colors in semi-streaming space. 

Other works on streaming coloring include the work of Abboud, Censor-Hillel, Khoury, and Paz~\cite{AbboudCKP19} who show that coloring an $n$-vertex graph with the optimal chromatic number of colors requires $\Omega(n^2/p)$ space in $p$ passes. They also show that deciding $c$-colorability for $3 \leq c < n$ (that might be a function of $n$) needs $\Omega((n - c)^2/p)$ space in $p$ passes. Another notable work is that of Bera, Chakrabarti, and Ghosh \cite{BeraCG20}, who
considered the problem with respect to the \emph{degeneracy} parameter that
often yields more efficient colorings, especially for sparse graphs. They
designed a semi-streaming $\kappa(1+o(1))$-coloring algorithm for graphs of
degeneracy~$\kappa$. They also proved that a combinatorially tight
$(\kappa+1)$-coloring is not algorithmically possible in sublinear space. In particular,
semi-streaming coloring needs $\kappa+\Omega(\sqrt{\kappa})$ colors. Bhattacharya, Bishnu, Mishra, and Upasana~\cite{BhattacharyaBMU21} showed that verifying whether an input vertex-coloring of a graph is proper is hard in the vertex-arrival streaming model where each vertex arrives with its color and incident edges. Hence, they consider a relaxed version of the problem that asks for a $(1\pm \epsilon)$-estimate of the number of conflicting edges. They prove tight bounds for this problem on adversarial-order streams and further study it on random-order streams. Recently, Halldorsson, Kuhn, Nolin, and Tonayan~\cite{HalldorssonKNT22} gave a palette-sparsification-based semi-streaming
algorithm for $(\textrm{degree}+1)$-list-coloring for any arbitrary list of colors
assigned to the nodes, improving upon the work of \cite{AlonA20} whose algorithm works only when the color-list of each vertex $v$ is
$\{1,\ldots,\deg(v)+1\}$. Note that all the works mentioned above are in the ``static'' streaming model and all their algorithms, except those in \cite{AssadiCS22}, are randomized and non-robust.

Starting with the work of Ben-Eliezer, Jayaram, Woodruff, and Yogev~\cite{BenEliezerJWY20}, the adversarially robust
streaming model has seen a flurry of research in the last couple of years
\cite{BenEliezerY20, HassidimKMMS20, KaplanMNS21, BravermanHMSSZ21,
WoodruffZ21, AttiasCSS21, beneliezerEO21, ChakrabartiGS22, Cohen0NSSS22, Sto23}.
Chakrabarti, Ghosh, and Stoeckl \cite{ChakrabartiGS22} were the first to study
graph coloring in this model. They showed a separation between standard and
robust streaming coloring algorithms by establishing lower bounds of
(i)~$\Omega(\Delta^2)$ colors for robust semi-streaming coloring, and
(ii)~$\Omega(n\Delta)$ space for robust $O(\Delta)$-coloring. In fact, they prove a
smooth colors/space tradeoff: a robust $K$-coloring algorithm requires
$\Omega(n\Delta^2/K)$ space. On the upper bound side, they design an $O(\Delta^3)$-coloring robust algorithm in semi-streaming space, with oracle access to $\tO(n\Delta)$ many random bits. They also obtain an $O(\Delta^2)$-coloring in $\tO(n\sqrt{\Delta})$ space (including random bits used).


\section{Preliminaries}

\paragraph{Notation} Throughout the paper, ``$\log$'' denotes the base-$2$
logarithm; $[n]$ denotes the set $\{1,\ldots,n\}$; $\FF_p$ is the finite field
with $p$ elements; $\indic_{\mathtt{cond}}$ is the indicator function for
condition $\mathtt{cond}$, i.e., it takes the value $1$ when $\mathtt{cond}$
is true, and $0$ otherwise; and the notation $a \in_R A$ means that $a$ is
drawn uniformly at random from the finite set $A$.

A graph $G = (V,E)$ typically has $n = |V|$ vertices. We may identify $G$ with
its set of edges, and write $\{u,v\} \in G$ to mean that $\{u,v\}$ is an edge
in $G$. For $B \subseteq E$, $\deg_B(x)$ denotes the degree of $x$ in the
graph formed by the edges in $B$. For $X \subseteq V$, $G[X]$ denotes the
subgraph of $G$ induced by $X$.

\paragraph{Adversarially Robust Streaming} In the \emph{static} streaming
setting, an algorithm operates on a long sequence $\ang{e_1,e_2,\ldots}$ of
elements, reading them in order. It may make multiple passes over the stream.
We typically aim to design a streaming algorithm with parameters $\delta$ and
$S$ as low as possible so that, for all possible input streams, it uses $\le
S$ bits of space and errs with probability $\le \delta$. If the algorithm is
deterministic, then $\delta = 0$, and we seek to minimize space usage subject
to correctness on all inputs. 

In the \emph{adversarial} setting, we assume that the algorithm is one party
to a game between it and an \emph{adversary}; the adversary produces a
sequence $\ang{e_1,e_2,\ldots}$ of elements, and can ask the algorithm to
report an intermediate output $o_i$ after each new element $e_i$. Unlike the
static setting, the next element $e_{i+1}$ produced by the adversary may
depend (possibly randomly\footnote{However, there is always a deterministic
adversary at least as effective as any randomized one at making the algorithm
fail.}) on the transcript $\ang{e_1,o_1,\ldots,e_i,o_i}$ of the game. The
algorithm is said to err if at least one of its outputs is incorrect for the
problem at hand. In this setting, we typically aim to find streaming
algorithms minimizing $S,\delta$, where here we want the algorithm to
(a)~never exceed $S$ bits of space and (b)~err with probability $\le \delta$,
for all possible adversaries.

\paragraph{Colorings} A \emph{partial coloring} of a graph $G = (V,E)$ using a
palette $\cC$ (any nonempty finite set) is a tuple $(U,\chi)$ where $U
\subseteq V$ is the set of uncolored vertices and $\chi \colon V \to \cC \cup
\{\bot\}$ is a function such that $\chi(x) = \bot \Leftrightarrow x \in U$.
(we may also simply refer to $\chi$ as the partial coloring). The coloring is
said to be \emph{proper} if, for all $\{u,v\} \in E$ such that $u \notin U$
and $v \notin U$, we have $\chi(u) \ne \chi(v)$. A \emph{coloring} of $G$ is a
partial coloring where $U = \emptyset$.

Given a graph-theoretic parameter $\psi$, the $\psi$-coloring (algorithmic)
problem asks one to determine a proper coloring of an input graph $G$ using a
palette of size $|\cC| \le \psi$. This paper focuses first on
$(\Delta+1)$-coloring and later on $\poly(\Delta)$-coloring. We also consider
the {\em list coloring} problem, wherein each $x \in V$ has an associated list
(really a set) $L_x \subseteq \cC$ and we are to find a coloring satisfying
$\chi(x) \in L_x$ for all $x$. Specifically, we study the problem
$(\deg+1)$-list-coloring, in which $|L_x| = \deg(x)+1$ for each $x$.

\paragraph{Hash Functions} We will use the following standard properties of
families of hash functions.  A hash family $\cH$ of functions $A \rightarrow
B$ is \emph{$k$-independent} if, for all distinct $a_1,\ldots,a_k \in A$, and
arbitrary $b_1,\ldots,b_k \in B$,
\begin{align*}
  \Pr_{h \in_R \cH} \big[ h(a_1) = b_1 \land
    \cdots \land h(a_k) = b_k \big]
  = 1/|B|^k \,.
\end{align*}
The family is \emph{$2$-universal} if, for all distinct $a_1,a_2 \in A$,
\begin{align*}
  \Pr_{h \in_R \cH} \big[ h(a_1) = h(a_2) \big] \le 1/|B| \,.
\end{align*}

\paragraph{Useful Lemmas} These variations of standard lemmas 
are proved in \Cref{sec:deferred-proofs}, for completeness.

\begin{lemma}[A constructive variation on Tur\'an's theorem]\label{lem:find-iset}
  Given a graph with $n$ vertices and $m$ edges, one can find an independent
  set of size $\ge n^2 / (2m + n)$ in deterministic polynomial time. 
\end{lemma}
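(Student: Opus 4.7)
The plan is to give a deterministic greedy algorithm and bound its output via a standard Tur\'an-style argument. The algorithm repeatedly picks a vertex $v$ of minimum degree in the current graph, adds $v$ to the independent set $I$ being built, and deletes $v$ together with all of its neighbors; it terminates when the graph is empty. The set $I$ is independent by construction (each chosen vertex has its neighborhood removed before any further vertex is picked), and the procedure runs in polynomial time since each iteration takes $O(n+m)$ time and there are at most $n$ of them.

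For the size bound I would proceed by induction on $n$, with the vacuous graph as base case. Let $d$ be the degree of the selected vertex $v$ in the current graph on $n$ vertices and $m$ edges. By the handshake lemma the minimum degree satisfies $d \le 2m/n$, and because $d$ is minimum each of the $d$ neighbors of $v$ has degree at least $d$; double-counting then shows that at least $d(d+1)/2$ edges of the graph are incident to $N[v] = \{v\} \cup N(v)$. After deletion the residual graph has $n' = n - d - 1$ vertices and $m' \le m - d(d+1)/2$ edges, so by induction the recursion on the residual returns an independent set of size at least $(n')^2/(2m' + n')$. Together with $v$, this yields a final independent set of size at least
\[ 1 + \frac{(n-d-1)^2}{2m - d(d+1) + n - d - 1}, \]
and a direct algebraic calculation, using $d \le 2m/n$, shows that this quantity is at least $n^2/(2m+n)$, closing the induction.

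The main obstacle is precisely this last inequality: although elementary, its verification requires care to exploit both $d \le 2m/n$ and the quadratic savings $d(d+1)/2$ in the edge count. If the induction bookkeeping proves unpleasant, a clean alternative is to invoke the Caro--Wei bound $\alpha(G) \ge \sum_v 1/(\deg(v)+1)$ and then apply the harmonic--arithmetic mean inequality to get $\sum_v 1/(\deg(v)+1) \ge n^2 / \sum_v (\deg(v)+1) = n^2/(2m+n)$. This can be made constructive in polynomial time by derandomizing the standard random-permutation proof of Caro--Wei via the method of conditional expectations, with the sum of reciprocals on the uncommitted subgraph serving as a pessimistic estimator.
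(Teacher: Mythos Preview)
Your primary argument is correct, and in fact the algebraic step you flag as the main obstacle is cleaner than you suggest: writing $k=d+1$ and $a=2m+n$, the denominator $2m-d(d+1)+n-d-1$ simplifies to $a-k^2$, and the desired inequality $1+\frac{(n-k)^2}{a-k^2}\ge\frac{n^2}{a}$ clears to $(a-nk)^2\ge 0$ after cross-multiplying. The hypothesis $d\le 2m/n$ (equivalently $a\ge nk$) is only needed to guarantee $a-k^2>0$ when $n'>0$, so that the denominator is positive.

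Your route differs from the paper's. The paper proves the stronger Caro--Wei bound $\alpha(G)\ge\psi(G):=\sum_v 1/(\deg(v)+1)$ constructively, by repeatedly picking the vertex $x$ that \emph{minimizes} $\sum_{y\in N[x]}1/(\deg_{G[U]}(y)+1)$ (not the minimum-degree vertex) and showing this choice reduces the potential $\psi(G[U])$ by at most~$1$; Jensen then gives $\psi(G)\ge n^2/(2m+n)$. So the paper's algorithm is precisely the conditional-expectations derandomization you sketch as your fallback, with $\psi$ as the pessimistic estimator. Your min-degree greedy with direct induction on the Tur\'an quantity is a genuinely different (and arguably more elementary) path to the same target bound, though it does not recover Caro--Wei along the way.
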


\begin{lemma}[Mix of Chernoff bound and Azuma's inequality] \label{lem:forward-concentration}  
  Let $X_1,\ldots,X_k$ be a sequence of $\{0,1\}$ random variables, and $c \in
  [0,1]$ a real number for which, for all $i \in k$, $\EE[X_i \mid
  X_1,\ldots,X_{i-1}] \le c$. Then
  \begin{align*}
    \Pr\left[\sum_{i \in [k]} X_i \ge (1 + t) k c \right] \le 2^{- t k c} \,,
    \qquad \text{assuming $t \ge 3$.}
  \end{align*}
\end{lemma}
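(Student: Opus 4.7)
The plan is to follow the standard Chernoff exponential-method argument, but carry it out conditionally so that it works for any sequence of $\{0,1\}$ variables whose conditional means are bounded by $c$, rather than independent ones. This way the martingale-style assumption is used only through iterated conditional expectations, and the final bound comes out as a Chernoff tail in disguise.

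First I would bound the conditional moment generating function of each $X_i$. Since $X_i \in \{0,1\}$, for any $\lambda > 0$ we have $e^{\lambda X_i} = 1 + (e^\lambda - 1) X_i$. Letting $\cF_{i-1} = \sigma(X_1,\ldots,X_{i-1})$ and using the hypothesis $\EE[X_i \mid \cF_{i-1}] \le c$, one gets
\begin{align*}
  \EE\bigl[e^{\lambda X_i} \bigm| \cF_{i-1}\bigr]
  \;=\; 1 + (e^\lambda - 1)\, \EE[X_i \mid \cF_{i-1}]
  \;\le\; 1 + (e^\lambda - 1)c
  \;\le\; \exp\bigl((e^\lambda - 1)c\bigr).
\end{align*}
Then I would iterate: writing $S_k = \sum_{i \in [k]} X_i$ and using the tower property together with the pointwise bound above, induction on $k$ gives $\EE[e^{\lambda S_k}] \le \exp\bigl(kc(e^\lambda - 1)\bigr)$.

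Next I would apply Markov's inequality to $e^{\lambda S_k}$ and optimize. Choosing $\lambda = \ln(1+t)$ (so $e^\lambda - 1 = t$) yields
\begin{align*}
  \Pr\bigl[S_k \ge (1+t)kc\bigr]
  \;\le\; \frac{\EE[e^{\lambda S_k}]}{e^{\lambda (1+t)kc}}
  \;\le\; \left(\frac{e^t}{(1+t)^{1+t}}\right)^{\!kc}.
\end{align*}
This is the usual sharp Chernoff form and at this point the probabilistic content is finished.

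The last step is the numerical reduction to $2^{-tkc}$, which is the only real ``obstacle,'' though a mild one. It suffices to show $(1+t)\ln(1+t) - t \ge t\ln 2$, i.e., $g(t) := (1+t)\ln(1+t) - t(1+\ln 2) \ge 0$ for $t \ge 3$. Since $g'(t) = \ln((1+t)/2) \ge 0$ once $t \ge 1$, $g$ is nondecreasing on $[3,\infty)$, and a direct check at $t=3$ gives $4\ln 4 \approx 5.545 \ge 3(1+\ln 2) \approx 5.079$. Hence $e^t / (1+t)^{1+t} \le 2^{-t}$ for all $t \ge 3$, and raising both sides to the $kc$ power completes the proof. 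The only place where care is needed is this constant-verification step; the rest is a routine transcription of the Chernoff argument into the conditional setting.
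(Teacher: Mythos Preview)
Your proof is correct and follows essentially the same approach as the paper: bound the conditional moment generating function of each $X_i$ by $e^{(e^\lambda-1)c}$, iterate via the tower property, apply Markov with $\lambda=\ln(1+t)$ to obtain $\bigl(e^t/(1+t)^{1+t}\bigr)^{kc}$, and then verify numerically that this is at most $2^{-tkc}$ for $t\ge 3$. The only cosmetic difference is that you write out the $t=3$ check explicitly, whereas the paper just asserts the inequality $(1+t)\ln(1+t)\ge(1+\ln 2)t$.
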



\section{A (Multipass) Deterministic Algorithm}

This section presents our first main result, giving a multipass deterministic
semi-streaming algorithm for $(\Delta+1)$-coloring, proving
\Cref{thm:det-multipass-coloring}. As usual, let $G = (V,E)$ denote the input
graph, which has $n = |V|$ vertices and maximum degree $\Delta$. Later, we
shall extend our algorithm to the $(\deg+1)$-list-coloring problem, so it will
be helpful to think of each vertex $x \in V$ being associated with a set $L_x$
of allowed colors; for the algorithm we discuss first, $L_x = [\Delta+1]$ for
each $x \in V$.

\subsection{High-Level Organization}

The algorithm's passes are organized as follows. The algorithm proceeds in
{\em epochs}, where each epoch starts with a partial coloring $\chi$ that has
a certain subset $U \subseteq V$ uncolored and ends with a new partial
coloring that extends $\chi$ by coloring at least a constant fraction of the
vertices in $U$, thereby shrinking $|U|$ to $\alpha |U|$, for some constant
$\alpha < 1$. In the beginning, $U = V$. After at most
$\ceil{\log_{1/\alpha}\Delta}$ such epochs, we will have $|U| \le n/\Delta$:
at this point, the algorithm makes a final pass to collect all edges incident
to $U$ and greedily extend $\chi$ to a full coloring of $G$.

Each epoch of the algorithm is divided into {\em stages}, where each stage
whittles down a set of proposed colors for each uncolored vertex. To explain
this better, the following definition is useful.

\begin{definition}[partial commitment, slack, potential] \label{def:slack}
  A {\em partially committed coloring} (PCC) of $G$ is an assignment of colors
  and lists to the vertices satisfying the following conditions.
  \begin{itemize}[topsep=2pt, itemsep=0pt]
    \item Every vertex outside a subset $U \subseteq V$ of uncolored vertices
    is assigned a specific color $\chi(x) \in L_x$; the resulting $\chi$ is a
    proper partial coloring.
    \item Each $x \in U$ has an associated set $P_x$ of
    proposed colors, defining a collection $\cP = \{P_x\}_{x \in U}$. 
    \item For every two vertices $x, y \in U$, either $P_x = P_y$ or $P_x \cap
    P_y = \emptyset$.
  \end{itemize}
  We shall denote such a PCC by the tuple $(U,\chi,\cP)$.
  Given such a PCC, define the {\em slack} of a vertex with respect to a set
  $T$ of colors by
  \begin{align} 
    \slackwrt{T}{x} &= \max\{0,\, |T \cap L_x| - 
      |\{y \in \Nhd(x) \setm U:\, \chi(y) \in T\}|\} \,,
    \label{eq:slack-def}
  \end{align}
  and further define $s_x = \slackwrt{P_x}{x}$; that is, $s_x$ is the number
  of colors in $P_x$ that are available to $x$ in $L_x$ minus the number of
  \emph{times} the colors in $P_x$ have appeared in the already colored
  neighbors of $x$.
  Define the {\em potential} of the PCC to be
  \begin{align}
    \Phi = \Phi(U,\chi,\cP)
    &= \sum_{\{x,y\} \in E} \indic_{x \in U \land y \in U} \cdot \indic_{P_x = P_y} \cdot 
      \left( \frac{1}{s_x} + \frac{1}{s_y} \right) \,
    \label{eq:potential-def}
  \end{align}
  which sums the quantity $(1/s_x + 1/s_y)$ over all edges $\{x,y\}$ inside
  $U$ with $P_x = P_y$. \qed
\end{definition}

Intuitively, the slack defined here is a lower bound on the number of unused
colors available to a vertex. Our definition differs slightly from the "slack" defined
by \cite{HalldorssonKNT22}, where the number of colors used by the
neighbors is known exactly.
It turns such a lower bound on the number of unused colors is sufficient for our algorithm to progressively refine a PCC. 
The advantage of this lower bound -- equivalently, of using an upper bound on the number
of used colors, $|\{y \in \Nhd(x) \setm U:\, \chi(y) \in T\}|$, instead of the
exact quantity $|T \cap \{\chi(y) :\, y \in \Nhd(x) \setm U \}|$ -- is that the former is
a linear function of the data stream, and can be easily computed in $O(\log n)$ space.
Meanwhile, as a consequence of the set disjointness lower bound in communication complexity,
determining the latter can require up to $\Omega(\Delta)$ space. In the LOCAL and CONGEST 
models, each vertex can easily store and maintain a list of all its available colors 
(equivalently, colors used by its neighborhood), so the algorithms of 
\cite{GhaffariK21,BambergerKM20} do not need such a modified notion of "slack".

The set $\Free(T,x) := T \cap L_x \setm \{\chi(y):\, y \in \Nhd(x) \setm U\}$
is the set of all colors in $T$ that are available for $x$, in light of the
local constraints imposed by $L_x$ and $\chi$. Notice that $|\Free(T,x)| \ge
\slackwrt{T}{x}$, since a color in $T$ might be used more than once in the
neighborhood of $x$, thus reducing the LHS only once, but the RHS more than
once. Hence, if we extend $\chi$ to a full coloring by choosing, independently
for each $x \in U$, a uniformly random color in $\Free(P_x,x)$, the only
monochromatic edges we might create are within $U$ and the number,
$m_{\mono}(U,\chi,\cP)$, of such edges satisfies
\begin{align}
  \EE m_{\mono}(U,\chi,\cP) 
  = \sum_{\ontop{\{x,y\} \in E(G[U])}{P_x = P_y}} 
    \frac{|\Free(P_x,x) \cap \Free(P_y,y)|}{|\Free(P_x,x)| \cdot |\Free(P_y,y)|}
  \le \sum_{\ontop{\{x,y\} \in E(G[U])}{P_x = P_y}} 
    \left( \frac{1}{s_x} + \frac{1}{s_y} \right)
  = \Phi \,. \label{eq:potential-bound}
\end{align}

\subsection{The Logic of an Epoch: Extending a Partial Coloring}

Returning to the algorithm outline, at the start of an epoch, the current
partial coloring $\chi$ and its corresponding set $U$ of uncolored vertices
define a trivial PCC where $P_x = L_x = [\Delta+1]$ for each $x$. We shall
eventually show that the resulting potential $\Phi \le |U|$. Each stage in the
epoch shrinks these sets $P_x$ in such a way that the potential $\Phi$ does
not increase much. After several stages, each $P_x$ in the PCC becomes a
singleton and the bound on $\Phi$, together with \cref{eq:potential-bound},
ensures that assigning each $x \in U$ the sole surviving color in $P_x$ would
not create too many monochromatic edges.  Now, \Cref{lem:find-iset} allows us
to commit to these proposed colors for at least $(1-\alpha)|U|$ of the
uncolored vertices; this defines a new partial coloring and ends the epoch.

We now describe how to shrink the sets $P_x$. For this, view each color as a
$b$-bit vector where $b = \ceil{\log(\Delta+1)}$ according to some canonical
mapping, e.g., $\ba \in \b^b \mapsto 1 + \sum_{i=1}^b a_i 2^{i-1}$. Each set
$P_x$ will correspond to a subcube of $\b^b$ where the first several bits have
been fixed to particular values.%
\footnote{If $\Delta+1$ is not a power of $2$, $P_x$ might contain elements
not in $L_x$, but this doesn't matter because $\Free(T,x) \subseteq L_x$
always.} %
Each stage of the $r$th epoch (except perhaps the last, due to divisibility
issues) will shrink each $P_x$ by fixing an additional $k$ bits of its
subcube, thus reducing the dimension of the subcube. We choose $k := 1 +
\floor{\log(n/|U|)}$, so that $|U| 2^{k} \le 2n$; this bound will be important
when we analyze the space complexity. The epoch ends when all bits of each
$P_x$ have been fixed, making each $P_x$ a singleton; clearly, this happens
after $\ceil{b/k}$ stages.

This brings us to the heart of the algorithm: we need to describe, for each $x
\in U$ and the particular value of $k$ for the current epoch, how to fix the
next $k$ bits for $P_x$. Let $P_{x,\bj}$ be the subset of $P_x$ where the $k$
lowest-indexed free bits are set to $\bj \in \b^k$: this partitions $P_x$ into
$2^k$ subcubes. Define
\begin{align}
  w_{x,\bj} = \frac{\slackwrt{P_{x,\bj}}{x}}
    {\sum_{\bi \in \b^k} \slackwrt{P_{x,\bi}}{x}} \,.
    \label{eq:wxj-def}
\end{align}
An easy calculation shows that if, for each $x$, we choose $\bj$ at {\em
random} according to the distribution given by $(w_{x,\bj})_{\bj
\in \b^k}$ to obtain a new random collection $\widetilde{\cP}$ of proposed
color sets for each vertex, then
\begin{align}
  \EE \Phi(U,\chi,\widetilde{\cP}) = \Phi(U,\chi,\cP) \,.
  \label{eq:pcc-refinement}
\end{align}
Therefore, there {\em exists} a particular realization
$\cP'$ of $\widetilde{\cP}$ such that $\Phi(U,\chi,\cP') \le
\Phi(U,\chi,\cP)$. However, it is not clear how to identify such a $\cP'$
deterministically and in a space-efficient manner in a stream. 

A key idea that enables a space-efficient derandomization is to choose the
$\bj$ values for the vertices $x \in U$ in a pseudorandom fashion, using a
$2$-independent family $\cH$ of hash functions $V \mapsto [p]$ for a
not-too-large value $p$. By using a suitable map $g \colon U \times [p] \to
\b^k$, we can use a uniform random value in $[p]$ to sample from a
distribution {\em close enough} to the $(w_{x,\bj})$ distribution. Then, for
each $x$, we shrink $P_x$ to $P_{x,\bj(x)}$ where $\bj(x) = g(x,h(x))$ and $h
\in_R \cH$. Let $\cP_h$ denote the resulting collection of proposed color
sets.

It turns out that a prime $p = \Theta(n \log n)$ suffices for the guarantees
we will eventually need. Thus, by choosing (e.g.) the Carter--Wegman family of
affine functions on $\FF_p$, we can take $|\cH| = O(n^2 \log^2 n)$. This
enables us to use two streaming passes with $\tO(n)$ space to
identify a specific function $h \in \cH$ that is ``approximately best'' in the
sense of minimizing $\Phi(U,\chi,\cP_h)$. We will then show that the new
potential is at most $1 + O(1/\log n)$ times the old. Repeating this argument
for each of the $O(\log n)$ stages in the epoch shows that at the end of the
epoch, the potential will have increased by at most a constant factor which
will then allow us to shrink $U$ by a constant factor $\alpha$, as noted
earlier.

The above outline suggests $O(\log n)$ epochs, each using $O(\log n)$ stages,
each of which uses $O(1)$ passes. Later, we shall show that a more careful
analysis bounds the number of passes by $O(\log\Delta \log\log\Delta)$.

\subsection{Detailed Algorithm and Proof of Correctness}

We now describe the algorithm more formally, by fleshing out the precise logic
of an epoch. Let $\cQ^{(i)}$ denote the partition of the color space $\b^b$
into subcubes $\smash{Q^{(i)}_{\bj}}$ defined by setting the $i$th $k$-bit
block to each of the $2^k$ possible patterns $\bj$; i.e.,
\begin{align}
  Q^{(i)}_{\bj} := \big\{\ba \in \b^b: (a_{ki-k+1}, \ldots, a_{ki}) = \bj\big\} \,; \quad
  \cQ^{(i)} := \big\{ Q^{(i)}_{\bj} \big\}_{\bj \in \b^k} \,.
  \label{eq:qij-def}
\end{align}
If $k$ does not divide $b$, we must make an exception for the $\ceil{b/k}$th
partition, for which the relevant bit patterns $\bj$ would be shorter; for
clarity of presentation, we shall ignore this edge case in what follows.

Before we proceed, we also need the following lemma, whose proof is given in \Cref{sec:deferred-proofs}.
\begin{lemma}\label{lem:gw-property}
For $p \ge 8 n \log n$, and $\bw = (w_{x,\bj})_{x \in U, \bj \in \b^k}$ there is a function $g_{\bw} \colon U \times [p] \to \b^k$ satisfying:
    \begin{align*}
          \frac{|g_{\bw}^{-1}(x,\bj)|}{p} \le 
          w_{x,\bj} \left(1 + \frac{1}{8\log n}\right) \,, \quad
          \forall~ \bj \in \b^k
    \end{align*}
\end{lemma}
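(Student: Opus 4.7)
\medskip
\noindent\emph{Proof proposal.}
The plan is to construct $g_{\bw}$ by an explicit rounding-and-partition scheme. For each $x \in U$ with nonzero total slack $\sum_{\bi}\slackwrt{P_{x,\bi}}{x}$, the weights $w_{x,\bj}$ are well-defined and sum to $1$, and I would distribute the $p$ hash values among the $2^k$ outputs $\bj \in \b^k$ in proportion to the $w_{x,\bj}$, rounding slightly upward to absorb the discretization error. The degenerate case of zero total slack can be dispatched by setting $g_{\bw}(x,\cdot)$ to any constant function (the lemma's output bound is of no algorithmic consequence there, since the vertex has no slack to exploit anyway).

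Concretely, set $c_{x,\bj} := \bigl\lfloor p \cdot w_{x,\bj}(1 + 1/(8\log n)) \bigr\rfloor$. The required upper bound $c_{x,\bj}/p \le w_{x,\bj}(1 + 1/(8\log n))$ is then immediate from the floor. The main step is to verify that these integer shares sum to at least $p$, so that they can be trimmed to sum to exactly $p$ without breaking the upper bound. Using $\lfloor y \rfloor \ge y - 1$ and $\sum_{\bj} w_{x,\bj} = 1$, one gets
\begin{align*}
\sum_{\bj \in \b^k} c_{x,\bj}
\;\ge\; p\bigl(1 + \tfrac{1}{8\log n}\bigr) - 2^k
\;=\; p + \frac{p}{8\log n} - 2^k.
\end{align*}
Recall from the algorithm's setup that $k = 1 + \lfloor \log(n/|U|) \rfloor$, which yields $2^k \le 2n/|U| \le n$ in the nontrivial regime $|U| \ge 2$ (the case $|U| \le 1$ is handled by the final greedy pass). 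Combined with the hypothesis $p \ge 8n\log n$, this gives $p/(8\log n) \ge n \ge 2^k$, so the sum is indeed at least $p$.

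Once this inequality is in place, the remainder is routine: pick any nonnegative integers $c'_{x,\bj} \le c_{x,\bj}$ with $\sum_{\bj} c'_{x,\bj} = p$, and define $g_{\bw}(x,\cdot)$ to partition $[p]$ into consecutive blocks of sizes $c'_{x,\bj}$, mapping the $\bj$th block to $\bj$. Then $|g_{\bw}^{-1}(x,\bj)| = c'_{x,\bj} \le c_{x,\bj} \le p \cdot w_{x,\bj}(1 + 1/(8\log n))$, as required. The only real obstacle is the arithmetic inequality $p/(8\log n) \ge 2^k$: this is precisely why the lemma asks for $p = \Omega(n \log n)$ and why the stage parameter is tuned so that $2^k = O(n/|U|)$. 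There is no deeper technical content beyond this discretization argument.
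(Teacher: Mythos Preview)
Your construction is the same as the paper's---partition $[p]$ into consecutive blocks of sizes $\lfloor p\, w_{x,\bj}(1+1/(8\log n))\rfloor$---but the arithmetic verifying that the blocks cover all of $[p]$ is done differently, and your version leans on context outside the lemma. You bound $\sum_{\bj} c_{x,\bj} \ge p(1+1/(8\log n)) - 2^k$ and then need $2^k \le p/(8\log n)$, which you justify by importing the algorithm's choice $k = 1+\lfloor\log(n/|U|)\rfloor$ together with $|U|\ge 2$. That reasoning is correct in context, but the lemma as stated says nothing about $k$ or $|U|$.

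The paper's argument avoids this by exploiting the granularity of the weights themselves: each $w_{x,\bj}$ is a ratio of slack values, which are nonnegative integers bounded by~$n$, so every \emph{nonzero} weight satisfies $w_{x,\bj}\ge 1/n$. Then, term by term, $\lfloor p\,w_{x,\bj}(1+1/(8\log n))\rfloor \ge p\,w_{x,\bj} + p\,w_{x,\bj}/(8\log n) - 1 \ge p\,w_{x,\bj}$, since $p\,w_{x,\bj}/(8\log n) \ge (8n\log n)(1/n)/(8\log n) = 1$. Summing gives $\ge p$ immediately, with no dependence on $2^k$ at all. This is cleaner and keeps the lemma self-contained; your route works but only because the algorithm happens to cap $2^k$ suitably.
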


The full logic of the algorithm is given in \Cref{alg:determ}.

\begin{algorithm*}[!htbp]
\begin{algorithmic}[1]
\Procedure{Deterministic-Coloring}{streamed $n$-vertex graph $G = (V,E)$ with max degree $\Delta$}
    \State $U \gets V$;~ $\chi(x) \gets \bot$ for all $x \in V$ \Comment{all vertices uncolored}
    \Repeat
        \State \Call{Coloring-Epoch}{$G, U, \chi$} \Comment{shrinks $|U|$ to at most $\alpha|U|$}
    \Until{$|U| \le n/\Delta$}
    \State In one pass, collect every edge incident to a vertex in $U$ \label{line:final-pass} 
    \State Use these edges to greedily complete $\chi$ to a proper coloring of $G$
\EndProcedure
\Statex

\Procedure{Coloring-Epoch}{graph $G$, partial coloring $(U,\chi)$}

    \State $b \gets \ceil{\log(\Delta+1)}$ \Comment{each color is a $b$-bit vector}

    \State $k \gets 1 + \floor{\log\left(n/|U|\right)}$ \Comment{number of bits fixed in each stage}

    \ForAll{$x \in U$} $P_x \gets \b^b$ 
        \Comment{the initial, trivial PCC}
    \EndFor

    \ForAll{stage $i$, from $1$ through $\ceil{b/k}$}
        \State \textbf{pass 1:}
        \Indent
            \ForAll{$x \in U$ and $Q \in \cQ^{(i)}$}
                compute $\slackwrt{P_x \cap Q}{x}$ by
                using~\cref{eq:slack-def}
                \EndFor
        \EndIndent

        \State Determine all \smash{$w_{x,\bj}$} values using \cref{eq:wxj-def}, 
        noting that \smash{$P_{x,\bj} = P_x \cap Q^{(i)}_\bj$}

        \State $p \gets$ prime in $[8n\log n, 16n\log n]$;~
        $\cH \gets \{z \mapsto az+b:\, a,b\in\FF_p\}$ \label{line:cw-hash}
            \Comment{Carter--Wegman hashing}

        \State Implicitly construct $g_{\bw} \colon U \times [p] \to \b^k$ as per \Cref{lem:gw-property}.
        
        \State For each $h \in \cH$, define $\cP_h = \{P_{x,h}\}_{x \in U}$, where $P_{x,h} := P_{x} \cap Q^{(i)}_{g_{\bw}(x,h(x))}$ \label{line:new-potential} 

        \LineComment{Identify a specific $h^\star \in \cH$ for which
        $\Phi(U,\chi,\cP_{h^\star})$ is not much larger than average, as follows:}

        \State \textbf{pass 2:}
        \Indent
            \State Split $\cH$ into $\sqrt{|\cH|}$ parts 
            \State Estimate $\sum_{h} \Phi(U,\chi,\cP_h)$ for each part, 
            up to $(1+1/(8\log n))$ relative error \label{line:est-best-hash-part}
            \State Pick the part minimizing the estimated sum
        \EndIndent

        \State \textbf{pass 3:}
        \Indent
            \State Estimate $\Phi(U,\chi, h)$ for each $h$ within the chosen part, 
              up to $(1+1/(8\log n))$ relative error \label{line:est-best-hash-func}
            \State Choose $h^\star$ as the (approximate) minimizer
        \EndIndent
        
        \ForAll{$x \in U$} $P_x \gets P_{x, h^\star}$ \label{line:end-of-stage}
            \Comment{constrain the PCC more tightly}
        \EndFor
    \EndFor
    \Statex
    
    \State \textbf{end-of-epoch pass:} \Comment{each $P_x$ is now a singleton}
    \Indent
        \State Collect $F \gets \{\{u,v\} \in E:\, u \in U, v \in U,$ and $P_u = P_v\}$
            \Comment{we will prove that $|F| = O(|U|)$}
        \State In the graph $(V,F)$, find an independent set $I$ with $|I| \ge (1-\alpha)|U|$, using \Cref{lem:find-iset}
        \ForAll{$x \in I$} \Comment{extend $\chi$ by coloring $x$}
            \State $U \gets U \setm \{x\}$
            \State $\chi(x) \gets$ the sole element in $P_x$
        \EndFor
    \EndIndent
\EndProcedure
\end{algorithmic}
\caption{Deterministic Semi-Streaming Algorithm for $(\Delta+1)$-Coloring}
\label{alg:determ}
\end{algorithm*}

The most important aspect of the analysis is to quantify the progress made in
each epoch and establish that the colors proposed at the end of each stage do
not produce too many monochromatic edges (i.e., those in $F$.) This analysis
will demonstrate the utility of the potential defined in
\cref{eq:potential-def}.

Given a PCC $(U,\chi,\cP)$ where $\cP = \{P_x\}_{x\in U}$, define the
``conflict degree'' of each $x \in U$ by
\begin{align}
  \dconf(x) = \dconf(x; U,\chi,\cP) 
  := |\{y \in \Nhd(x) \cap U :\, P_y = P_x \}| \,,
  \label{eq:dconf-def}
\end{align}
which counts the neighbors of $x$ that could potentially form monochromatic
edges with $x$, were we to assign colors from $\cP$ to the uncolored vertices.
Recall the quantities $s_x = \slackwrt{P_x}{x}$ from \Cref{def:slack}.

\begin{lemma} \label{lem:dconf}
  For every PCC, $\Phi(U,\chi,\cP) = \sum_{x \in U} \dconf(x)/s_x$.
\end{lemma}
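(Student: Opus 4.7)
The plan is to prove the identity by rewriting the edge-indexed sum defining $\Phi$ as a vertex-indexed sum, exploiting the symmetry of the summand $(1/s_x + 1/s_y)$.

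First, I would start from the definition
\begin{align*}
\Phi(U,\chi,\cP) = \sum_{\{x,y\} \in E} \indic_{x \in U \land y \in U} \cdot \indic_{P_x = P_y} \cdot \left( \frac{1}{s_x} + \frac{1}{s_y} \right)
\end{align*}
and split the summand into the two pieces $1/s_x$ and $1/s_y$. Since the sum ranges over unordered edges $\{x,y\}$ and the summand is symmetric in $x,y$, I can rewrite it as a sum over ordered pairs of distinct vertices, dividing by $2$ to compensate:
\begin{align*}
\Phi(U,\chi,\cP) = \frac{1}{2} \sum_{(x,y) \in V \times V,\, \{x,y\} \in E} \indic_{x \in U \land y \in U} \cdot \indic_{P_x = P_y} \cdot \left( \frac{1}{s_x} + \frac{1}{s_y} \right).
\end{align*}
Again by the symmetry of the indicator and summand under swapping $x$ and $y$, this equals
\begin{align*}
\sum_{(x,y):\, \{x,y\} \in E} \indic_{x \in U \land y \in U} \cdot \indic_{P_x = P_y} \cdot \frac{1}{s_x}.
\end{align*}

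Next, I would factor the sum by first fixing $x$ and summing over $y$. Pulling out $1/s_x$ and the indicator $\indic_{x \in U}$, the remaining inner sum over $y$ counts exactly the vertices $y \in \Nhd(x) \cap U$ with $P_y = P_x$, which by \cref{eq:dconf-def} is $\dconf(x)$. Thus
\begin{align*}
\Phi(U,\chi,\cP) = \sum_{x \in U} \frac{\dconf(x)}{s_x},
\end{align*}
as claimed.

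I don't anticipate any real obstacle: the statement is a direct bookkeeping identity and the only subtlety is being careful about the factor of $2$ when switching between unordered edges and ordered pairs. One minor point worth noting is that the quantity $s_x$ might equal $0$ for some $x$, but whenever $\dconf(x) > 0$ we will need $s_x > 0$ (or else the PCC cannot be extended); in the intended use the terms with $s_x = 0$ can be excluded from $U$, or the convention $0/0 = 0$ adopted, so this does not affect the equality.
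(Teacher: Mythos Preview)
Your proposal is correct and follows essentially the same approach as the paper: both rewrite the edge-indexed sum $\sum_{\{u,v\}}(1/s_u+1/s_v)$ as the vertex-indexed sum $\sum_{x\in U}\dconf(x)/s_x$ via the obvious symmetry, with the paper simply compressing this into a single line of ``straightforward algebra.'' Your remark about $s_x=0$ is also handled in the paper, where an invariant $s_x\ge 1$ is established separately.
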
  
\begin{prf}
  From the definitions in \cref{eq:slack-def,eq:potential-def}, using some 
  straightforward algebra,
  \[
    \Phi(U,\chi,\cP)
    = \sum_{\ontop{\{u,v\} \in E(G[U])}{P_u = P_v}} 
      \left( \frac{1}{s_u} + \frac{1}{s_v} \right) 
    = \sum_{x \in U} \frac{|\{y \in U:\, \{x,y\} \in E \land P_x = P_y\}|}{ s_x }
    = \sum_{x \in U} \frac{\dconf(x)}{s_x} \,. \qedhere
  \]
\end{prf}

\begin{lemma} \label{lem:slack-subadditive}
  For all $x$ and disjoint sets $T_1, T_2$:
  $\slackwrt{T_1 \sqcup T_2}{x} \le \slackwrt{T_1}{x} + \slackwrt{T_2}{x}$.
\end{lemma}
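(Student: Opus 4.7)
\begin{prf}[Proof plan for \Cref{lem:slack-subadditive}]
The plan is to unfold the definition in \cref{eq:slack-def} on both sides and reduce the inequality to the elementary subadditivity $\max\{0,\alpha+\beta\} \le \max\{0,\alpha\} + \max\{0,\beta\}$ of the positive-part function.

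First I would set $a_i := |T_i \cap L_x|$ and $b_i := |\{y \in \Nhd(x)\setm U :\, \chi(y) \in T_i\}|$ for $i \in \{1,2\}$, so that $\slackwrt{T_i}{x} = \max\{0, a_i - b_i\}$. The key structural step is to observe that since $T_1$ and $T_2$ are disjoint, $(T_1 \sqcup T_2) \cap L_x$ partitions as $(T_1 \cap L_x) \sqcup (T_2 \cap L_x)$, and, crucially, every colored neighbor $y \in \Nhd(x)\setm U$ has a single color $\chi(y)$ lying in at most one of $T_1,T_2$; hence the sets $\{y:\,\chi(y)\in T_1\}$ and $\{y:\,\chi(y)\in T_2\}$ are themselves disjoint. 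Therefore $|(T_1 \sqcup T_2) \cap L_x| = a_1+a_2$ and the neighbor-count is exactly $b_1+b_2$, giving $\slackwrt{T_1 \sqcup T_2}{x} = \max\{0, (a_1-b_1)+(a_2-b_2)\}$.

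Finally I would invoke $\max\{0,\alpha+\beta\} \le \max\{0,\alpha\} + \max\{0,\beta\}$ with $\alpha = a_1-b_1$ and $\beta = a_2-b_2$ to conclude. I do not expect any step here to be an obstacle: the only conceptual ingredient is the disjointness of the neighbor contributions, which is immediate because a vertex is assigned at most one color by $\chi$.
\end{prf}
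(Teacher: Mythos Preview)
Your proposal is correct and matches the paper's own proof: the paper simply says the lemma is ``straightforward from \cref{eq:slack-def} and the fact that $\max\{0,a_1+a_2\} \le \max\{0,a_1\} + \max\{0,a_2\}$,'' which is exactly your argument with the additivity of the two counts made explicit.
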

\begin{prf}
  This is straightforward from \cref{eq:slack-def} and the fact that
  $\max\{0,a_1+a_2\} \le \max\{0,a_1\} + \max\{0,a_2\}$.
\end{prf}

\begin{lemma} \label{lem:potential-growth}
  Suppose we start a particular epoch with the partial coloring $(U,\chi)$ and
  the initial, trivial PCC $(U,\chi,\cP_0)$. Suppose there are $\ell$ stages
  in this epoch and the $i$th stage begins with the PCC $\cP_i$. Let $\Phi_i
  := \Phi(U,\chi,\cP_i)$ be the corresponding potential, for $0 \le i \le
  \ell$.  Then $\Phi_0 \le |U|$ and $\Phi_\ell \le 2|U|$.
\end{lemma}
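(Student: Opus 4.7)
The plan is to prove the two bounds separately. The initial bound $\Phi_0 \le |U|$ follows directly from \Cref{lem:dconf}: since every $P_x = \b^b$ in $\cP_0$, we have $\dconf(x) = |\Nhd(x) \cap U|$; and because $L_x = [\Delta+1] \subseteq \b^b$, the slack satisfies $s_x = \slackwrt{\b^b}{x} \ge (\Delta+1) - (\deg(x) - \dconf(x)) \ge 1 + \dconf(x)$. Hence $\dconf(x)/s_x < 1$ for every $x \in U$, and $\Phi_0 \le |U|$.

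The substantive part is the bound on $\Phi_\ell$. I would obtain it by showing a per-stage growth bound $\Phi_{i+1} \le (1 + 1/(8\log n))^4 \Phi_i$ and iterating: since $\ell \le \lceil b/k\rceil \le b \le \log n + O(1)$, the product telescopes to $\Phi_\ell \le e^{1/2 + o(1)} \Phi_0 < 2|U|$. To get the per-stage bound I would argue in three layers. First, consider an \emph{idealized} stage in which, independently for each $x \in U$, the pattern $\bj(x) \in \b^k$ is drawn from the distribution $(w_{x,\bj})_{\bj \in \b^k}$, producing a random refined PCC $\widetilde\cP$. For any edge $\{x,y\}$ with $P_x = P_y$, writing $s_{x,\bj} = \slackwrt{P_x \cap Q^{(i)}_\bj}{x}$ and $S_x = \sum_\bi s_{x,\bi}$, a direct computation gives
\[
\EE\!\left[\indic_{\bj(x) = \bj(y)}\!\left(\tfrac{1}{s_{x,\bj(x)}} + \tfrac{1}{s_{y,\bj(y)}}\right)\right]
= \sum_\bj \tfrac{s_{x,\bj}}{S_x}\tfrac{s_{y,\bj}}{S_y}\!\left(\tfrac{1}{s_{x,\bj}} + \tfrac{1}{s_{y,\bj}}\right)
= \tfrac{1}{S_x} + \tfrac{1}{S_y} \le \tfrac{1}{s_x} + \tfrac{1}{s_y}\,,
\]
where the last inequality uses $S_x \ge s_x$, a consequence of iterating \Cref{lem:slack-subadditive} over the partition of $P_x$ by $\cQ^{(i)}$. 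Summing over edges yields $\EE \Phi(U,\chi,\widetilde\cP) \le \Phi_i$.

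Second, I would pass from ideal independent sampling to the algorithm's hash-based construction. Since $\Phi$ decomposes additively over edges, the per-edge calculation needs only pairwise independence of the $\bj(x)$'s, which the Carter--Wegman family on line~\ref{line:cw-hash} provides. By \Cref{lem:gw-property}, the marginal $\Pr_h[g_{\bw}(x, h(x)) = \bj]$ is at most $(1 + 1/(8\log n))\,w_{x,\bj}$, so the per-edge bound inflates by at most $(1 + 1/(8\log n))^2$, yielding $\EE_{h \in_R \cH} \Phi(U,\chi,\cP_h) \le (1 + 1/(8\log n))^2 \Phi_i$. Third, the split-and-refine search on lines~\ref{line:est-best-hash-part}--\ref{line:est-best-hash-func} returns an $h^\star$ with $\Phi(U,\chi,\cP_{h^\star})$ at most $(1 + 1/(8\log n))^2$ times the minimum over $\cH$, and hence at most this times the mean. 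Chaining the two inflations gives the per-stage bound, completing the iteration.

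The main obstacle will be the per-edge calculation under ideal sampling together with clean bookkeeping of the three multiplicative errors (the ideal-sampling inequality, the marginal distortion from \Cref{lem:gw-property}, and the approximate minimizer). A subtle but essential feature is that slack subadditivity is precisely what guarantees $\EE \Phi(\widetilde\cP) \le \Phi_i$ in the idealized step rather than the reverse inequality; without this direction of monotonicity, the chain of small approximations would slip the wrong way and one could not close the factor-of-$2$ budget across $\Theta(\log n)$ stages.
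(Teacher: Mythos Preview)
Your approach is essentially the paper's: bound $\Phi_0$ via \Cref{lem:dconf}, establish a per-stage growth factor of $(1+O(1/\log n))$ by combining the per-edge expectation calculation with the hash-family distortion and the approximate-minimizer error, then iterate over $\ell = O(\log n)$ stages. The idealized computation, the appeal to \Cref{lem:slack-subadditive} for $S_x \ge s_x$, and the final telescoping all match the paper.

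One correction is needed in your third layer. The split-and-refine search does \emph{not} in general return an $h^\star$ with $\Phi(U,\chi,\cP_{h^\star})$ within $(1+1/(8\log n))^2$ of the global minimum over $\cH$: the minimizing $h$ may lie in a part whose sum is large, so that part is never selected in pass~2. What the two passes do guarantee is a bound against the \emph{mean}. The chosen part has sum at most $(1+\epsilon)$ times the minimum part sum, which is at most the average part sum $\sqrt{|\cH|}\cdot \EE_h \Phi(U,\chi,\cP_h)$; then within that part, the chosen $h^\star$ has potential at most $(1+\epsilon)$ times the part average, which equals the part sum divided by $\sqrt{|\cH|}$. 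Chaining these gives $\Phi(U,\chi,\cP_{h^\star}) \le (1+\epsilon)^2\, \EE_h \Phi(U,\chi,\cP_h)$ directly. Your final per-stage bound $(1+1/(8\log n))^4$ and hence the conclusion $\Phi_\ell \le e^{1/2+o(1)}|U| < 2|U|$ remain valid once this step is argued correctly.
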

\begin{proof}
  Recalling that each $L_x \cap P_x = L_x = [\Delta+1]$ for the initial PCC,
  we use \cref{eq:slack-def,eq:dconf-def} to derive
  \[
    s_x - \dconf(x) 
    = \max\{0, \Delta + 1 - |\Nhd(x) \setm U|\} - |\Nhd(x) \cap U|
    = \Delta + 1 - \deg (x)
    \ge 1 \,.
  \]
  Thus, $\dconf(x) / s_x \le 1$ (and is not ``$0/0$'') for all $x \in U$.
  \Cref{lem:dconf} now implies $\Phi_0 \le |U|$.

  We now argue that, between each pair of successive stages, the potential
  $\Phi_i$ does not increase by much. First observe that when $h$ is drawn
  uniformly at random from $\cH$, and $u\ne v$,
  \begin{align}
    \Pr\left[ g_{\bw}(u,h(u)) = g_{\bw}(v,h(v)) = \bj \right] 
    &= \Pr\left[ g_{\bw}(u,h(u)) = \bj \right] \cdot \Pr\left[ g_{\bw}(v,h(v)) = \bj \right] \notag\\
    &= \Pr\left[ h(u) \in g_{\bw}^{-1}(u,\bj) \right] \cdot \Pr\left[h(v) \in g_{\bw}^{-1}(v,\bj) \right] \notag\\
    & \le w_{u,\bj} w_{v,\bj} \left(1 + \frac{1}{8 \log n}\right)^2 \notag\\
    &\le e^{1/(4 \log n)} w_{u,\bj} w_{v,\bj} \,. \label{eq:collision-prob}
  \end{align}

  \renewcommand{\slackwrt}[2]{\textup{sk}(#2 \mid #1)} To keep the rest the
  derivation compact, let us abbreviate ``slack'' to ``sk.'' 
   The candidate PCCs $\cP_h$ defined in
  \cref{line:new-potential} are tightenings of the current PCC in which we pick subcubes according to the specific hash function $h$. With $h$ chosen uniformly
  at random from $\cH$:
  \newcommand*{\Widest}{\expln{lemma~3.3~~}{=}}%
  \newcommand*{\Center}[1]{\makebox[\widthof{$\Widest$}]{$#1$}}%
  \begin{alignat}{1}
    \EE \Phi(U,\chi,\cP_h)
    &\Center{=}
      \sum_{\{u,v\} \in E} \EE \indic_{u \in U} \indic_{v \in U} \cdot \indic_{P_{u,h} = P_{v,h}} \cdot 
      \left(\frac{1}{ \slackwrt{P_{u,h}}{u}} + \frac{1}{\slackwrt{P_{v,h}}{v}} \right) \notag\\
    &\Center{=}
      \sum_{\{u,v\} \in E(G[U])} \sum_{\bj \in \b^k} 
      \Pr\left[ P_{u,h} = P_{u,\bj} = P_{v,\bj} = P_{v,h} \right] 
      \left(\frac{1}{ \slackwrt{P_{u,\bj}}{u}} + \frac{1}{\slackwrt{P_{v,\bj}}{v}} \right) \notag\\ 
    &\Center{\expln{\cref{line:new-potential}}{=}}
      \sum_{\{u,v\} \in E(G[U])} \sum_{\bj \in \b^k} \indic_{P_u = P_v}
      \Pr\left[ g_{\bw}(u,h(u)) = g_{\bw}(v,h(v)) = \bj \right] 
      \left(\frac{1}{ \slackwrt{P_{u,\bj}}{u}} + \frac{1}{\slackwrt{P_{v,\bj}}{v}} \right) \notag\\
    &\Center{\expln{\cref{eq:collision-prob}}{\le}}
      \sum_{\ontop{\{u,v\} \in E(G[U])}{P_u = P_v}} 
      \sum_{\bj \in \b^k} e^{1/(4 \log n)} w_{u,\bj} w_{v,\bj} 
      \left(\frac{1}{ \slackwrt{P_{u,\bj}}{u}} + \frac{1}{\slackwrt{P_{v,\bj}}{v}} \right) \notag\\
    &\Center{\expln{\cref{eq:wxj-def}}{=}}
      e^{1/(4 \log n)}  \sum_{\ontop{\{u,v\} \in E(G[U])}{P_u = P_v}} \sum_{\bj \in \b^k} 
      \frac{\slackwrt{P_{u,\bj}}{u}}{\sum_{\bi}\slackwrt{P_{u,\bi}}{u}} \cdot
      \frac{\slackwrt{P_{v,\bj}}{v}}{\sum_{\bi}\slackwrt{P_{v,\bi}}{v}} \cdot
      \left(\frac{1}{ \slackwrt{P_{u,\bj}}{u}} + \frac{1}{\slackwrt{P_{v,\bj}}{v}} \right) \notag\\
    &\Center{=}
      e^{1/(4 \log n)}  \sum_{\ontop{\{u,v\} \in E(G[U])}{P_u = P_v}}
      \sum_{\bj \in \b^k} \frac{\slackwrt{P_{u,\bj}}{u} + \slackwrt{P_{v,\bj}}{v}}
      {\sum_{\bi}\slackwrt{P_{u,\bi}}{u} \cdot \sum_{\bi}\slackwrt{P_{v,\bi}}{v}} \notag\\
    &\Center{=}
      e^{1/(4 \log n)}  \sum_{\ontop{\{u,v\} \in E(G[U])}{P_u = P_v}}
      \left(\frac{1}{\sum_{\bj}\slackwrt{P_{u,\bj}}{u}} +
      \frac{1}{\sum_{\bj}\slackwrt{P_{v,\bj}}{v}}\right) \notag\\
    &\Center{\expln{\cref{lem:slack-subadditive}}{\le}}
      e^{1/(4 \log n)}  \sum_{\ontop{\{u,v\} \in E(G[U])}{P_u = P_v}}
      \left(\frac{1}{ \slackwrt{P_{u}}{u}} +
      \frac{1}{\slackwrt{P_{u}}{v}}\right) \notag\\
    &\Center{=}
      e^{1/(4 \log n)} \Phi_i \,. \label{eq:potential-growth-stage}
  \end{alignat}

  Thus, picking $h^\star$ with $\Phi(U,\chi,\cP_{h^\star})$ below average would ensure
  $\Phi_{i+1} \le e^{1/(4 \log n)} \Phi_{i} $. However, due to precision
  constraints, each of \cref{line:est-best-hash-part,line:est-best-hash-func}
  could contribute a relative error of $(1+1/(8\log n))$, so the $h^\star$
  actually picked by the algorithm gives only the following weaker guarantee:
  \begin{align*}
    \Phi_{i+1} 
    \le \left(1 + \frac{1}{8 \log n}\right)^2 e^{1/(4 \log n)} \Phi_{i}
    \le e^{1/(2 \log n)} \Phi_{i} \,.
  \end{align*}

  Since the number of stages in the epoch is $\ell \le \ceil{b/k} \le
  \log(\Delta + 1) \le \log n$, we have
  \[
    \Phi_\ell 
    \le \left(e^{1/(2 \log n)}\right)^{\ell} \Phi_0
    \le e^{1/2} |U| \le 2|U|\,. \qedhere
  \]
\end{proof}

The crucial combinatorial property of the $(\Delta+1)$-coloring problem is
that given any proper partial coloring, every uncolored vertex is guaranteed
to have a free color not in use by its colored neighbors. The next lemma
argues that even as we gradually tighten constraints in our PCC during the
stages of an epoch, a similar guarantee is maintained.

\begin{lemma} \label{lem:slack-positive}
  In each epoch, for all $x \in U$, the stages maintain the invariant that
  $s_x \ge 1$ and after the last stage we have $s_x = 1$.
\end{lemma}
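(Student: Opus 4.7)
The plan is to prove the invariant $s_x \ge 1$ by induction on the stages within a single epoch, with the final equality $s_x = 1$ following immediately once each $P_x$ has shrunk to a singleton.

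For the base case, at the start of the epoch we have $P_x = L_x = [\Delta+1]$, so directly from~\cref{eq:slack-def},
\begin{align*}
  s_x \ge (\Delta + 1) - |N(x) \setm U| \ge (\Delta + 1) - \deg(x) \ge 1 \,.
\end{align*}
For the inductive step, suppose $s_x \ge 1$ at the start of some stage $i$. The $2^k$ sets $P_{x,\bj} = P_x \cap Q^{(i)}_{\bj}$ partition $P_x$, so iteratively applying \Cref{lem:slack-subadditive} yields
\begin{align*}
  \sum_{\bj \in \b^k} \slackwrt{P_{x,\bj}}{x} \;\ge\; \slackwrt{P_x}{x} \;=\; s_x \;\ge\; 1 \,.
\end{align*}
This ensures the denominator in~\cref{eq:wxj-def} is positive, so the weights $w_{x,\bj}$ are well-defined, and at least one $\bj$ has $w_{x,\bj} > 0$.

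The key observation is that \Cref{lem:gw-property} gives $|g_{\bw}^{-1}(x,\bj)|/p \le w_{x,\bj}(1 + 1/(8 \log n))$, so $w_{x,\bj} = 0$ forces $g_{\bw}(x,z) \ne \bj$ for every $z \in [p]$. Thus, regardless of which $h^\star \in \cH$ the algorithm selects in \cref{line:est-best-hash-func,line:end-of-stage}, the new proposed set $P_x \gets P_{x, g_{\bw}(x, h^\star(x))}$ is a subcube with $w_{x,\bj} > 0$, i.e., with $\slackwrt{P_{x,\bj}}{x} \ge 1$ (recall that slack is a nonnegative integer). This is precisely the new value of $s_x$, so the invariant is preserved. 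For the final claim, after the last stage each $P_x$ is a singleton, giving $s_x \le |P_x \cap L_x| \le 1$ from~\cref{eq:slack-def}; combined with the invariant, $s_x = 1$.

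The main subtlety is the third paragraph above: a priori, the pseudorandom choice $g_{\bw}(x,h^\star(x))$ could conceivably land in a subcube of zero slack, even though a truly random draw from $(w_{x,\bj})_{\bj}$ never would. The one-sided upper bound in \Cref{lem:gw-property} is exactly what rules this out, since a zero-weight bucket is forced to have an empty preimage under $g_{\bw}(x,\cdot)$. This observation, together with subadditivity of slack, drives the whole argument; no quantitative estimate of the potential is needed here.
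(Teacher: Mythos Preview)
Your proof is correct and follows essentially the same route as the paper's: the base case, the use of subadditivity (\Cref{lem:slack-subadditive}) to show the weights are well-defined, the appeal to \Cref{lem:gw-property} to guarantee that $g_{\bw}(x,\cdot)$ never lands in a zero-weight bucket, and the final squeeze $1 \le s_x \le |P_x \cap L_x| \le 1$ all match the paper's argument. If anything, you are slightly more explicit than the paper in spelling out why \Cref{lem:gw-property} forces empty preimages for zero-weight patterns.
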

\begin{proof}
  At the start of the epoch, $s_x \ge |L_x| - |\Nhd(x)| = (\Delta+1) - \deg (x)
  \ge 1$. 

  Consider a particular stage, which begins with a PCC $(U,\chi,\cP)$, where
  $\cP = \{P_{x}\}_{x\in U}$. Fix a vertex $x \in U$. In the next PCC formed
  at the end of the stage, $P_x$ shrinks down to $P_{x,\bj} = P_x \cap
  Q^{(i)}_{\bj}$ for a pattern $\bj \in \b^k$ satisfying $w_{x,\bj} > 0$: the
  way $g_{\bw}$ is defined (\Cref{lem:gw-property}) ensures this. By
  \Cref{lem:slack-subadditive},
  \[
    \sum_{\bi \in \b^k} \slackwrt{P_{x,\bi}}{x} 
    \ge \slackwrt{P_{x}}{x} = s_x \ge 1 \,,
  \]
  so there exists $\bj \in \b^k$ for which $\slackwrt{P_{x,\bj}}{x} \ge 1$.
  One such $\bj$ must be picked as the chosen pattern for $x$, because
  $w_{x,\bj} > 0$ implies $\slackwrt{P_{x,\bj}}{x} > 0$.  Consequently, the
  new value of $P_{x}$ chosen at the end of the stage
  (\cref{line:end-of-stage}) will continue to satisfy the invariant $s_x \ge
  1$.

  After the last stage in the epoch, every set $P_x$ is a singleton because,
  in the corresponding subcube of $\b^b$, all bits have been fixed. It is not
  possible that $P_x$ is empty, because $|P_x \cap L_x| \ge s_x \ge 1$. Thus
  $|P_x \cap L_x| = s_x = 1$.
\end{proof}

\begin{lemma} \label{lem:end-of-epoch}
  The set $F$ collected at the end of an epoch satisfies $|F| \le |U|$.
\end{lemma}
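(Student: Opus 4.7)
The plan is to combine the three preceding lemmas: the potential bound from \Cref{lem:potential-growth}, the slack invariant from \Cref{lem:slack-positive}, and the identity from \Cref{lem:dconf}. The key observation is that after the last stage of the epoch every set $P_x$ has become a singleton, which collapses the potential $\Phi_\ell$ into something directly proportional to $|F|$.

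First I would note that by \Cref{lem:slack-positive}, after the final stage of the epoch, $s_x = 1$ for every $x \in U$. Plugging this into \Cref{lem:dconf} gives
\[
  \Phi_\ell \;=\; \sum_{x \in U} \frac{\dconf(x)}{s_x} \;=\; \sum_{x \in U} \dconf(x)\,.
\]

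Next, I would observe that the sum $\sum_{x \in U} \dconf(x)$ is exactly $2|F|$ by a standard handshake argument: using \cref{eq:dconf-def}, an edge $\{u,v\} \in E(G[U])$ with $P_u = P_v$ contributes $1$ to $\dconf(u)$ and $1$ to $\dconf(v)$, and no other edges contribute. Since $F$ is precisely the set of such edges, we get $\sum_{x \in U} \dconf(x) = 2|F|$, so $\Phi_\ell = 2|F|$.

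Finally, \Cref{lem:potential-growth} gives $\Phi_\ell \le 2|U|$, and combining with the previous identity yields $|F| \le |U|$, as required. No serious obstacle arises here; the lemma is simply the payoff of the potential-based accounting, and the only thing one needs to verify carefully is that after the last stage each $P_x$ is a singleton (so that $s_x = 1$ as asserted), which is exactly the second half of \Cref{lem:slack-positive}.
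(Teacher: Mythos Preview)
Your proposal is correct and follows essentially the same approach as the paper's own proof: combine \Cref{lem:potential-growth}, \Cref{lem:dconf}, and \Cref{lem:slack-positive}, then use the handshake identity $\sum_{x\in U}\dconf(x)=2|F|$ to conclude $2|F|=\Phi_\ell\le 2|U|$. The paper compresses this into a single displayed chain of equalities, but the logic is identical to yours.
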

\begin{proof}
  Using the terminology of \Cref{lem:potential-growth}, at the end of an
  epoch, we have
  \newcommand*{\Widest}{\expln{~~lemma~3.3~~}{=}}%
  \newcommand*{\Center}[1]{\makebox[\widthof{$\Widest$}]{$#1$}}%
  \[
    2|U|
    \Center{\expln{\cref{lem:potential-growth}}{\ge}}
      \Phi_\ell 
    \Center{\expln{\cref{lem:dconf}}{=}}
      \sum_{x \in U} \frac{\dconf(x)}{s_x} 
    \Center{\expln{\cref{lem:slack-positive}}{=}}
      \sum_{x \in U} \frac{|\{y \in \Nhd(x) \cap U:\, P_x = P_y\}|}{1} 
    = 2 |F| \,. \qedhere
  \]
\end{proof}

\begin{lemma} \label{lem:epoch-progress}
  Each epoch maintains the invariant that $(U,\chi)$ is a proper partial
  coloring and shrinks the set of uncolored vertices $U$ to a smaller $U'$
  with $|U'| \le \frac23 |U|$.
\end{lemma}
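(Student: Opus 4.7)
The plan is to establish both conclusions essentially by combining Lemma 3.4, Lemma 3.5, and the constructive Tur\'an bound (Lemma 1.1), with the independent set computation in the end-of-epoch pass playing the central role.

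First I would handle \emph{properness}. After the final stage of the epoch, Lemma 3.4 tells us that for every $x \in U$ the set $P_x$ has collapsed to a singleton $\{c_x\}$ with $c_x \in L_x$ and $s_x = \slackwrt{P_x}{x} = 1$. Unpacking \cref{eq:slack-def}, this forces $|\{y \in \Nhd(x)\setm U : \chi(y) = c_x\}| = 0$, so assigning $\chi(x) \gets c_x$ creates no conflict with vertices colored in previous epochs. The only remaining risk of a monochromatic edge is between two vertices $x,y$ that both get newly colored. Such an edge $\{x,y\}$ with $c_x = c_y$ would have $P_x = P_y$ and $x,y \in U$, hence would lie in $F$; but we only color vertices in the independent set $I$ of $(V,F)$, so no two newly colored neighbors share a color. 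Thus $(U',\chi')$ is a proper partial coloring.

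Next I would handle the \emph{shrinkage bound}. By Lemma 3.5, $|F| \le |U|$. Applying Lemma 1.1 to the graph $(U,F)$ on $|U|$ vertices and at most $|U|$ edges yields an independent set of size at least
\begin{align*}
  \frac{|U|^2}{2|F| + |U|} \;\ge\; \frac{|U|^2}{3|U|} \;=\; \frac{|U|}{3}\,.
\end{align*}
So $|I| \ge |U|/3$, which with $\alpha = 2/3$ meets the requirement $|I| \ge (1-\alpha)|U|$ used by the algorithm. Consequently $|U'| = |U| - |I| \le \tfrac{2}{3}|U|$, as claimed.

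I do not anticipate a genuine obstacle here: the potential machinery has already done the hard work of bounding $|F|$ (Lemma 3.5), and Lemma 3.4 guarantees that each singleton $P_x$ is a legitimately free color at $x$. The only subtle point worth stating explicitly in the write-up is that Tur\'an is applied to $(U,F)$ rather than $(V,F)$, so that the ratio $|E|/|V|$ is bounded by $1$; this is what yields the clean constant $\alpha = 2/3$, matching the choice made implicitly in the outline at the start of the section.
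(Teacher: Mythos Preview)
Your proposal is correct and follows essentially the same route as the paper: use the slack-positive lemma to certify that each singleton $P_x=\{c_x\}$ gives a color free at $x$, observe that the only possible new conflicts are edges in $F$, bound $|F|\le|U|$ via the end-of-epoch lemma, and apply the constructive Tur\'an bound to get $|I|\ge|U|/3$. Your closing remark is on point: the paper's text writes $(V,F)$ but then plugs $|U|$ into the Tur\'an formula, so your explicit restriction to the graph $(U,F)$ is the cleaner way to state the step.
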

\begin{proof}
  As noted before, at the end of the epoch, each set $P_x$ is a singleton and
  the sole color $c_x \in P_x$ is not used in $\Nhd(x)$ because $s_x \ne 0$
  (\Cref{lem:slack-positive}). Therefore, the set $F$ collected at the end is
  precisely the set of edges that would be monochromatic \emph{if} we colored
  each $x \in U$ with $c_x$. It follows that the end-of-epoch logic in the
  algorithm, which commits to these colors only on an \emph{independent} set
  in the graph $(V,F)$, maintains the invariant of a proper partial coloring.

  By \Cref{lem:find-iset}, $(V,F)$ contains an independent set $I$ of size
  \newcommand*{\Widest}{\expln{~~lemma~3.3~~}{=}}%
  \newcommand*{\Center}[1]{\makebox[\widthof{$\Widest$}]{$#1$}}%
  \[
    |I| \ge \frac{|U|^2}{2|F| + |U|} 
    \Center{\expln{\cref{lem:end-of-epoch}}{\ge}} \frac{|U|}{3} 
  \]
  and one can compute $I$ in polynomial time. Therefore, $|U'| = |U| - |I| \le
  \frac23 |U|$.
\end{proof}

\subsection{Space and Pass Complexity}

\begin{lemma} \label{lem:determ-space-bound}
  \Cref{alg:determ} runs in $O(n \log^2 n)$ bits of space and $O(\log\Delta \cdot 
  \log\log\Delta)$ streaming passes.
\end{lemma}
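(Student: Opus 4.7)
The plan is to handle space and pass complexity separately. For the pass count, I would observe that each stage of an epoch uses exactly $3$ streaming passes, plus there is one end-of-epoch pass and one final pass (\cref{line:final-pass}), so the task reduces to bounding the total number of stages across all epochs. For the space bound, I would audit the data structures the algorithm actually keeps between passes and show each one fits in $O(n\log^2 n)$ bits.

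For the pass bound, I would first observe, using \Cref{lem:epoch-progress}, that if $|U_r|$ denotes the size of $U$ at the start of the $r$th epoch then $|U_r| \le (2/3)^r n$, so the \textbf{repeat} loop terminates (i.e.\ $|U_r| \le n/\Delta$) after $R = O(\log \Delta)$ epochs. In epoch $r$ the parameter $k_r = 1 + \lfloor \log(n/|U_{r-1}|)\rfloor$ is at least $\Omega(r)$, while $b = \lceil\log(\Delta+1)\rceil = O(\log\Delta)$. Hence the epoch has at most $\lceil b/k_r\rceil$ stages, and the total stage count is
\begin{align*}
  \sum_{r=1}^{R} \left\lceil \frac{b}{k_r} \right\rceil
  \;\le\; R + O(\log \Delta)\sum_{r=1}^{R}\frac{1}{\max(1,r)}
  \;=\; O(\log\Delta \cdot \log R)
  \;=\; O(\log\Delta \cdot \log\log\Delta)\,,
\end{align*}
using the standard harmonic bound. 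Multiplying by the constant number of passes per stage and adding the $O(\log\Delta)$ end-of-epoch and final passes yields the claimed $O(\log\Delta\log\log\Delta)$ pass complexity.

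For the space bound, I would tabulate the persistent state: (i)~the partial coloring $\chi$ and the set $U$, using $O(n\log n)$ bits; (ii)~for each $x\in U$, a description of the current subcube $P_x$, specified by at most $b = O(\log n)$ fixed bits, for a total of $O(n\log n)$ bits. Within a stage, pass~1 computes $\slackwrt{P_x\cap Q}{x}$ for all $x\in U$ and $Q\in\cQ^{(i)}$; because of the crucial choice $k = 1 + \lfloor\log(n/|U|)\rfloor$ we have $|U|\cdot 2^k\le 2n$, so this table contains $O(n)$ counters of $O(\log n)$ bits each, i.e.\ $O(n\log n)$ bits. The hash family $\cH$ is specified by a prime $p = \Theta(n\log n)$ and a pair $(a,b)\in\FF_p^2$, using $O(\log n)$ bits. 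The dominant cost comes from passes~2 and~3: in pass~2 we maintain one approximate-counter estimate of $\sum_h \Phi(U,\chi,\cP_h)$ for each of $\sqrt{|\cH|} = O(n\log n)$ parts, and in pass~3 we maintain one such estimate for each of the $O(n\log n)$ hash functions in the chosen part. Each $(1+1/(8\log n))$-relative error Morris-style counter for a value bounded by $\poly(n)$ takes $O(\log n)$ bits, giving $O(n\log^2 n)$ bits overall, which dominates all other contributions.

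The main obstacle I expect is justifying the stream-friendliness of passes~2 and~3: per edge $\{u,v\}$ arriving in the stream, we must update the counters for every $h$ in the relevant part (or every part in pass~2). This requires that, using only the precomputed slack table and the compact description of $g_\bw$ from \Cref{lem:gw-property}, we can evaluate the edge's contribution $\indic_{P_u = P_v}\indic_{g_\bw(u,h(u)) = g_\bw(v,h(v))}\bigl(1/\slackwrt{P_{u,h}}{u} + 1/\slackwrt{P_{v,h}}{v}\bigr)$ for each candidate $h = (a,b)$ quickly; the Carter--Wegman structure makes this per-edge work $O(\sqrt{|\cH|})$ time but only $O(1)$ additional working space on top of the counters, so the space accounting is unaffected. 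Putting the two bounds together yields the lemma.
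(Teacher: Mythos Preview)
Your proposal is correct and follows essentially the same approach as the paper: the pass bound via the harmonic sum $\sum_{r=1}^{O(\log\Delta)} 1/k_r$ with $k_r = \Omega(r)$, and the space bound dominated by the $\sqrt{|\cH|} = \Theta(n\log n)$ accumulators of $O(\log n)$ bits each in passes~2 and~3.

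Two small points to tighten. First, you omit the end-of-epoch pass from the space audit; there one must store the edge set $F$, which fits in $O(n\log n)$ bits because $|F|\le |U|$ by \Cref{lem:end-of-epoch}. Second, calling the accumulators ``Morris-style'' is slightly off: Morris counters are randomized, whereas the algorithm is deterministic. What is needed (and what the paper implicitly uses) is a deterministic fixed-precision floating-point accumulator with $O(\log n)$ mantissa bits, which suffices for a $(1+1/(8\log n))$ relative error over $\poly(n)$ additions.
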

\begin{proof}
  For the space bound, it suffices to establish that \Call{Coloring-Epoch}{}
  runs in $O(n \log^2 n)$ space. At each stage of an epoch, the algorithm
  maintains the current PCC, consisting of the partial coloring $(U,\chi)$ and
  the collection $\cP = \{P_x\}_{x\in U}$. The former can be stored in
  $O(n\log\Delta)$ bits directly; so can the latter, since the subcube
  structure of $P_x$ allows for a natural $O(b) = O(\log\Delta)$-bit
  description.

  We now turn to the space required to execute the passes. Focus on stage~$i$
  within epoch~$r$. Computing the slack values in pass~$1$ requires $|U|2^k$
  counters, one for each pair $(x,Q^{(i)}_\bj)$, to determine $|\{y \in \Nhd(x):\, \chi(y) \in P_x \cap
  Q^{(i)}_\bj\}|$. Each such counter fits in $O(\log\Delta)$ bits. By our
  choice of $k$, the total space bound for these counters is $O(n\log\Delta)$.
  Moving on, identifying $h^\star$ requires keeping track of $\sqrt{|\cH|}$
  accumulators, to evaluate sums of the form given in
  \cref{line:new-potential}, in each of passes~$2$ and~$3$.  These
  accumulators do not need to be stored at full precision; a relative error of
  $(1+1/(8\log n))$ is acceptable, so $O(\log n)$ bits per accumulator
  suffice. Since $p = \Theta(\log n)$ and $|\cH| = p^2$ (\cref{line:cw-hash}), the total space cost of all the
  accumulators is $O(\sqrt{|H|} \log n) = O(n \log^2 n)$ bits.

  Next, we consider the end-of-epoch pass. 
  By~\Cref{lem:end-of-epoch}, $|{F}| \leq |{U}| = O(n)$ so this pass needs only $O(n\log
  n)$ bits to collect the edges in $F$. The rest of its computations happen
  offline and need no further storage. This completes the space complexity
  analysis.

  Finally, we account for the number of passes. In epoch~$r$, there are
  $\ceil{b/k_r}$ stages, where $k_r$ is the value of $k$ for the epoch; each
  such stage makes three streaming passes; additionally, there is one
  end-of-epoch pass. There is also one final pass after all epochs are done
  (\cref{line:final-pass}). By \Cref{lem:epoch-progress}, each epoch shrinks $|U|$ to at most $\alpha=2/3$ times its previous value. Notice that the epochs stop once $|U| \le
  n/\Delta$, so there are at most $\ceil{\log_{1/\alpha} \Delta}$ epochs.
  Furthermore, at the start of the $r$th epoch, $|U| \le \alpha^{r-1} n$,
  implying $k_r \ge 1 + \floor{(r-1)\log\invalpha}$ for this epoch, which in
  turn upper-bounds the number of stages of the epoch. Putting it all
  together, the total number of streaming passes, across all epochs, is
  \begin{align*}
      1 + \sum_{r=1}^{\ceil{\log_{1/\alpha} \Delta}} 
        \left( 3 \ceil{\frac{b}{k_r}} + 1 \right)
      &= O(\log\Delta) + O(b) \cdot \sum_{r=1}^{\ceil{\log_{1/\alpha} \Delta}} \frac{1}{k_r} \\
      &= O(\log\Delta) \cdot \sum_{r=1}^{\ceil{\log_{1/\alpha} \Delta}} \frac{1}{r} \\
      & = O(\log \Delta \cdot \log \log \Delta) \,. \qedhere
  \end{align*}
\end{proof}

This concludes the proof of our first major algorithmic result, which we now
recap.
\begin{theorem}[Restatement of \Cref{thm:det-multipass-coloring}]
  There is an efficient deterministic semi-streaming algorithm to
  $(\Delta+1)$-color an $n$-vertex graph, given a stream of its edges. The
  algorithm uses $O(n \log^2 n)$ bits of space and runs in $O(\log\Delta
  \log\log\Delta)$ passes. 
  \qed
\end{theorem}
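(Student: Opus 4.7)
The plan is to simply assemble the lemmas already proven in this section, since all the heavy lifting has been done. First I would verify the correctness of the algorithm. The main loop invariant to maintain is that $(U,\chi)$ is a proper partial coloring at the start and end of every epoch. \Cref{lem:epoch-progress} establishes exactly this, and also guarantees that each epoch shrinks $|U|$ by a factor of at least $2/3$. Thus after at most $\lceil \log_{3/2} \Delta \rceil = O(\log \Delta)$ epochs, the algorithm reaches the terminating condition $|U| \le n/\Delta$.

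Next I would handle the final pass (\cref{line:final-pass}). Since $|U| \le n/\Delta$ and every vertex has degree at most $\Delta$, the number of edges incident to $U$ is at most $|U| \cdot \Delta \le n$, so they can all be collected in a single pass using $O(n \log n)$ bits. With all such edges in memory, one can greedily extend $\chi$ to the remaining vertices of $U$: each uncolored vertex $x$ has at most $\Delta$ neighbors in the \emph{entire} graph, which use at most $\Delta$ colors from $[\Delta+1]$, so some color in $[\Delta+1]$ is always free. This produces a proper $(\Delta+1)$-coloring of $G$.

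Finally, the space and pass bounds come directly from \Cref{lem:determ-space-bound}, which gives $O(n \log^2 n)$ bits and $O(\log\Delta \cdot \log\log\Delta)$ passes in total (including the final pass). Putting these three ingredients together yields the theorem. Since each of the required facts is already a stated lemma, there is no real obstacle remaining; the only subtlety is checking that the final greedy pass fits within the claimed complexity, which it does because $|U|\Delta \le n$ bounds both its space usage and its single-pass cost. Hence the entire proof of the restated theorem amounts to a short paragraph of the form: ``By \Cref{lem:epoch-progress} the algorithm reaches $|U| \le n/\Delta$ after $O(\log \Delta)$ epochs while preserving properness; the final pass collects the $O(n)$ incident edges and greedily extends $\chi$; the bounds follow from \Cref{lem:determ-space-bound}.''
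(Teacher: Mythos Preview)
Your proposal is correct and matches the paper's approach exactly: the restated theorem is given with an immediate \qed, since all the work is contained in \Cref{lem:epoch-progress} (correctness and shrinkage of $U$) and \Cref{lem:determ-space-bound} (space and pass bounds), together with the observation that the final pass collects at most $|U|\cdot\Delta \le n$ edges and greedily completes the coloring. Your write-up simply spells out these already-implicit steps.
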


\subsection{Extensions: List Coloring and Communication Complexity}

We can extend~\Cref{alg:determ} to handle the more general problem of
$(\deg+1)$-list-coloring. This requires a new technical lemma and a careful
refinement of some of the low-level details of the previous algorithm. 

  
\begin{theorem}[Restatement of \Cref{thm:d1lc-det-multipass}]
  Let $C$ be a set of colors of size $O(n^2)$. There is a deterministic
  semi-streaming algorithm for $(\text{degree}+1)$-list-coloring a graph $G$
  given a stream consisting of, in any order, the edges of $G$ and $(x,L_x)$
  pairs specifying the list $L_x$ of allowed colors for a vertex $x$,
  where $L_x \subseteq C$. The algorithm uses $O(n \log^2 n)$ bits of space
  and runs in $O(\log\Delta \log\log\Delta)$ passes.
\end{theorem}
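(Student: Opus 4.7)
The plan is to adapt Algorithm~\ref{alg:determ} in two phases: first, generalize the PCC machinery to work with arbitrary lists $L_x \subseteq C$, then introduce a new technical lemma that keeps the per-epoch stage count at $O(\log\Delta/k_r)$ despite the large color universe.

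For the structural generalization, identify $C$ with $\{0,1\}^B$ for $B = O(\log n)$, and redefine each $P_x$ in a PCC to be a subcube of $\{0,1\}^B$. Each epoch begins with $P_x = \{0,1\}^B$ for every $x \in U$, trivially satisfying the partition property of \Cref{def:slack}. The slack $\slackwrt{T}{x} = \max\{0, |T\cap L_x| - |\{y \in \Nhd(x)\setm U : \chi(y)\in T\}|\}$ and all derived quantities transfer verbatim using $L_x$ in place of $[\Delta+1]$. The key epoch-start inequality becomes $s_x \ge |L_x| - |\Nhd(x)\setm U| \ge \deg(x) + 1 - \deg(x) = 1$, preserving \Cref{lem:slack-positive} and \Cref{lem:potential-growth}; the remaining structural lemmas (\Cref{lem:dconf,lem:slack-subadditive,lem:end-of-epoch,lem:epoch-progress}) are purely algebraic and carry through unchanged. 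Lists are not stored: when $(x, L_x)$ arrives, I iterate through $L_x$ and update the $O(2^k)$ counters tracking $|P_x \cap Q^{(i)}_{\bj} \cap L_x|$; edge tokens update neighbor-color counters as in Algorithm~\ref{alg:determ}. With $O(|U|\cdot 2^k) = O(n)$ counters of $O(\log n)$ bits per stage, the space remains $O(n\log^2 n)$.

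The main obstacle is the pass count. A naive transfer would use $b = B = O(\log n)$ refinement bits per epoch, yielding $\sum_r O(\log n / k_r) = O(\log n \cdot \log\log\Delta)$ total passes---too many when $\Delta \ll n$. The new technical lemma remedies this by observing that each epoch can safely terminate as soon as $|P_x \cap L_x| = 1$ for every $x \in U$, at which point each vertex has a unique allowable color in its current $P_x$, and that this condition is reached after only $O(\log(\Delta+1)/k_r)$ stages. The plan is to modify the $h^\star$-selection objective from the potential $\Phi$ alone to a combined objective $\Phi(U,\chi,\cP_h) + \lambda \sum_{x\in U} |P_{x,h} \cap L_x|$, for a suitable constant $\lambda$. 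The lemma would then show, by a counterpart of the calculation in \cref{eq:potential-growth-stage} (using \Cref{lem:gw-property} and the near-uniformity of the induced subcube draw), that both terms satisfy controlled per-stage recurrences: the potential grows by at most an $e^{O(1/\log n)}$ factor per stage, while the aggregate list-size $\sum_x |P_x \cap L_x|$ shrinks by a factor close to $2^k$ per stage. The combined objective can still be estimated to $(1 + 1/(8\log n))$ relative error in two passes with $\tO(n)$ space, so the per-stage pass count is unchanged, and summing as in \Cref{lem:determ-space-bound} yields $O(\log\Delta\log\log\Delta)$ passes overall.

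The hardest part will be making these two recurrences simultaneously sharp while keeping the selection procedure implementable in semi-streaming space, which is essentially what the new technical lemma is tailored to do. Any residual vertices with $|P_x \cap L_x| \ge 2$ at the end of an epoch can be absorbed into the end-of-epoch pass by adding their $O(\deg(x))$ incident edges to $F$; since the number of such residual vertices is $o(|U|)$ by the list-size recurrence, the bound $|F| = O(|U|)$ needed for \Cref{lem:find-iset} is preserved, and the epoch still shrinks $|U|$ by a constant factor as required by \Cref{lem:epoch-progress}.
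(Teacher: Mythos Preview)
Your plan has a structural gap in the list-size recurrence. You retain the bit-position subcube partitions $\cQ^{(i)}$ of $\{0,1\}^B$ and assert that $\sum_{x\in U} |P_x \cap L_x|$ shrinks by a factor close to $2^k$ per stage. This is false for adversarial lists: if every color in $L_x$ agrees on the $k$ bits being fixed at stage~$i$, then $|P_{x,\bj} \cap L_x|$ equals $|P_x \cap L_x|$ for exactly one $\bj$ and is zero for all others; the only $\bj$ with $w_{x,\bj} > 0$ is that one, so the list size for $x$ does not shrink at all, for any $h \in \cH$. Your combined objective $\Phi + \lambda\sum_x |P_{x,h}\cap L_x|$ cannot help, because $h$ only controls which subcube each vertex is assigned to; it has no influence on how $L_x$ is spread across the subcubes of the \emph{fixed} partition $\cQ^{(i)}$. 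The obstruction is the partition of the color space itself, not the vertex-to-subcube assignment, so no amount of optimizing over $\cH$ rescues the recurrence.

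The paper handles exactly this by abandoning bit-position partitions and adaptively choosing $\cQ^{(i)}$ from a family $\cF$ of $O(|C|^2)$ hash-based partitions of $C$ into $2^k$ parts (\Cref{lem:d1lc-partitioner}), which guarantees that some $\cR \in \cF$ achieves $\sum_x \max_{S \in \cR}(|P_x \cap L_x \cap S|-1) \le 2^{-k/2}\sum_x(|P_x \cap L_x|-1)$ for \emph{arbitrary} lists. A constant number of extra passes per stage suffice to locate such a partition via the same split-and-refine search used for $\cH$; the potential analysis then runs unchanged on the chosen $\cQ^{(i)}$, and after $\lceil 2\log(\Delta+1)/k\rceil$ stages the aggregate surplus drops below $|U|$, whereupon one final singleton-partition stage finishes the epoch. (A secondary problem with your plan: the residual-vertex fallback of dumping all their incident edges into $F$ can add $o(|U|)\cdot\Delta$ edges, which destroys the $|F|=O(|U|)$ bound that \Cref{lem:find-iset} needs.)
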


Here is a technical lemma that is key to the proof of the above.

\begin{lemma}\label{lem:d1lc-partitioner}
  Let $s\ge 1$ be an integer, and let $C$ be a set. There exists a family $\cF$ of $O(|C|^2)$ partitions of $C$ so that, for every collection 
  $L_1,\ldots,L_t$ of subsets of $C$:
  \begin{align}
     \frac{1}{|\cF|} \sum_{\cR \in \cF} \sum_{i \in [t]} \max_{S \in \cR} (|L_i \cap S| - 1) \le \frac{1}{\sqrt{s}} \sum_{i\in[t]} (|L_i| - 1) \,, \label{eq:d1lc-subavg}
  \end{align}
  In particular, there must exist $\cQ \in \cF$ where $\sum_{i \in [t]} \max_{S \in \cQ} (|L_i \cap S| - 1)$ is less than the right hand side.
\end{lemma}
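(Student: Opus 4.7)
The plan is to construct $\cF$ explicitly from a $2$-independent hash family and then to apply a second-moment analysis enabled by pairwise independence. Specifically, I would take a prime $p \in [|C|, 2|C|]$ and let $\cH = \{h_{a,b} : a, b \in \FF_p\}$ be the Carter--Wegman affine family $h_{a,b}(c) = ((ac + b) \bmod p) \bmod s$, so $|\cH| = p^2 = O(|C|^2)$. I then set $\cF = \{\cR_h : h \in \cH\}$, where $\cR_h$ is the partition of $C$ into the (at most $s$) nonempty fibers of $h$; this gives $|\cF| = O(|C|^2)$. The ``in particular'' clause of the lemma then follows immediately from \eqref{eq:d1lc-subavg} by averaging/pigeonhole, so the rest of the work reduces to proving the displayed inequality.

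By linearity of expectation over a uniformly chosen $\cR_h \in \cF$, inequality \eqref{eq:d1lc-subavg} reduces to the per-list statement
\[
  \EE_{h \in_R \cH}\bigl[\max_{S \in \cR_h}(|L \cap S| - 1)\bigr] \;\le\; \frac{|L|-1}{\sqrt{s}}
\]
for each fixed $L \subseteq C$, after which we sum over $L_1,\ldots,L_t$. For this, the key observation is that since $|L \cap S|$ is a nonnegative integer, $\bigl(\max_S (|L \cap S| - 1)^+\bigr)^2 \le \sum_{S \in \cR_h} \bigl((|L \cap S| - 1)^+\bigr)^2$. Expanding the right-hand side gives $\sum_S |L \cap S|^2 - 2|L| + \#\{S : L \cap S \ne \emptyset\}$. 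Under pairwise independence, $\EE_h[\sum_S |L \cap S|^2] = \EE_h |\{(c,c') \in L \times L : h(c) = h(c')\}| = |L| + |L|(|L|-1)/s$, and $\EE_h[\#\{S : L \cap S \ne \emptyset\}] \le |L|$ trivially; these combine to give $\EE_h[(\max_S(|L\cap S|-1))^2] \le |L|(|L|-1)/s$, and Jensen's inequality then bounds $\EE_h[\max_S(|L \cap S| - 1)]$ by $\sqrt{|L|(|L|-1)/s}$.

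The main obstacle is tightening this second-moment estimate to the precise target $(|L|-1)/\sqrt{s}$ on the right-hand side of \eqref{eq:d1lc-subavg}: the Jensen step loses a factor of about $\sqrt{|L|/(|L|-1)}$. I anticipate handling this by splitting into two regimes. For small $|L|$ (say $|L| \le 2\sqrt{s}$), I would instead use a direct Bonferroni estimate based on pairwise independence, namely $\EE_h[\#\{S : L \cap S \ne \emptyset\}] \ge |L| - \binom{|L|}{2}/s$, yielding $\EE_h[\max_S(|L \cap S| - 1)] \le \EE_h[\sum_S (|L \cap S| - 1)^+] \le |L|(|L|-1)/(2s)$, which lies below $(|L|-1)/\sqrt{s}$ in this regime. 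For larger $|L|$, the Jensen bound above is already within a small constant of the target, and any residual discrepancy can be absorbed either by a sharper moment calculation or by enlarging the prime $p$ by a constant factor while preserving $|\cF| = O(|C|^2)$.
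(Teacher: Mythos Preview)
Your construction of $\cF$ from a Carter--Wegman family and the overall strategy---reduce to a per-list bound, control a second-moment quantity via pairwise independence, then apply Jensen---matches the paper. The gap is in the last step, and your proposed fixes do not close it.

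Your second-moment estimate gives $\EE_h\bigl[(\max_S(|L\cap S|-1))^2\bigr] \le |L|(|L|-1)/s$, whence Jensen yields $\EE_h[\max_S(|L\cap S|-1)] \le \sqrt{|L|(|L|-1)/s}$. This exceeds the target $(|L|-1)/\sqrt{s}$ by exactly $\sqrt{|L|/(|L|-1)}$, for \emph{every} $|L|\ge 2$. Your small-$|L|$ regime is fine, but for $|L| > 2\sqrt{s}$ the discrepancy persists, and ``enlarging the prime $p$'' cannot help: the bound $\EE_h[\sum_S |L\cap S|^2] = |L| + |L|(|L|-1)/s$ is an identity for any $2$-independent family and does not involve $p$ at all. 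A ``sharper moment calculation'' is also not a plan---the second moment bound itself is essentially tight; the loss is purely from squaring and taking $\sqrt{\cdot}$.

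The paper avoids this loss by replacing $x\mapsto x^2$ with $f(x)=x(x+1)/2$, chosen so that $z-1 = f^{-1}\bigl(\binom{z}{2}\bigr)$ holds exactly. One then writes
\[
  \max_S(|L\cap S|-1) = f^{-1}\!\left(\max_S \tbinom{|L\cap S|}{2}\right) \le f^{-1}\!\left(\sum_S \tbinom{|L\cap S|}{2}\right),
\]
applies Jensen to the concave $f^{-1}$, uses pairwise independence to get $\EE\sum_S\binom{|L\cap S|}{2}\le \binom{|L|}{2}/s$, and finishes with the elementary algebraic inequality $f^{-1}\bigl(\binom{z}{2}/s\bigr)\le (z-1)/\sqrt{s}$. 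Switching from $(z-1)^2$ to $\binom{z}{2}=z(z-1)/2$ is exactly what absorbs the $\sqrt{|L|/(|L|-1)}$ factor.
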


\begin{proof}
  Let $\cH$ be a 2-universal hash family $C \rightarrow [s]$, with $|\cH| = O(|C|^2)$. (For example, $\cH = \{(x \mapsto (a x + b \bmod p) \bmod s) : a,b \in \ZZ_p, a \ne 0 \}$ for $p$ prime and $\ge |C|$, as per \cite{CarterW79}.) Let $h$ be a randomly chosen element of $\cH$, and let $\cR = \{R_1,\ldots,R_s\}$ be the random partition for which $R_i = \{ x \in C : h(x) = i\}$. Consider the function $f(x) = x(x+1)/2$ defined on $[0,\infty)$; because it is convex and increasing on $[0,\infty)$, $f^{-1}(x) = \sqrt{2x + 1/4} - 1/2$ is concave and increasing on $[0,\infty)$.
  Because for all $z \ge 1$,
  $z - 1 = f^{-1}(\binom{z}{2})$, we have for any $i \in [t]$ that:
  \begin{align*}
    \max_{j \in [s]} (|L_i \cap R_j| - 1) \le f^{-1}\left(\max_{j \in [s]} \binom{L_i \cap R_j}{2}\right) \le f^{-1}\left( \sum_{j \in [s]} \binom{L_i \cap R_j}{2} \right) \,.
  \end{align*}
  Taking expectations and using the concavity of $f$ to apply Jensen's inequality:
  \begin{align*}
    \EE \max_{j \in [s]} (|L_i \cap R_j| - 1) &\le \EE f^{-1}\left( \sum_{j \in [s]} \binom{L_i \cap R_j}{2} \right)
        \le f^{-1}\left( \EE \sum_{j \in [s]} \binom{L_i \cap R_j}{2} \right) \,.
  \end{align*}
  Expressing the sum under the inverse function in terms of $h$ lets us apply the universality of the hash family:
  \begin{align*}
    \EE \sum_{j \in [s]} \binom{L_i \cap R_j}{2} = \EE \sum_{x,y \in L_i : x \ne y} \indic_{h(x) = h(y)} = \sum_{x,y \in L_i : x \ne y} \Pr[h(x) = h(y)] \le \binom{|L_i|}{2} \frac{1}{s} \,.
  \end{align*}
  We briefly detour to prove an inequality for $f$, holding for all $z \ge 1$:
  \begin{align*}
    f\left(\frac{1}{\sqrt{s}} (z - 1)\right) = \frac{\frac{1}{\sqrt{s}} (z - 1) \cdot (\frac{1}{\sqrt{s}} (z - 1) + 1)}{2}
        = \frac{1}{s} \frac{(z-1)(z + \sqrt{s} - 1)}{2} \ge \frac{1}{s} \binom{z}{2}\,,
  \end{align*}
  which implies $f^{-1}(\frac{1}{s} \binom{z}{2}) \le \frac{1}{\sqrt{s}} (z - 1)$. Thus:
  \begin{align*}
    \EE \max_{j \in [s]} (|L_i \cap R_j| - 1) \le f^{-1}\left(  \binom{|L_i|}{2} \frac{1}{s} \right) \le \sqrt{\frac{1}{s}} (|L_i| - 1) \,.
  \end{align*}
  By linearity of expectation, it follows
  \begin{align*}
    \EE \sum_{i \in [t]} \max_{j \in [s]} (|L_i \cap R_j| - 1) \le  \sqrt{\frac{1}{s}} \sum_{i \in [t]} (|L_i| - 1) \,.
  \end{align*}
  This is equivalent to Eq. \ref{eq:d1lc-subavg}, if we let $\cF$ be the set of possible values of $\cR$.
  
\end{proof}

\begin{proof}[Proof of \Cref{thm:d1lc-det-multipass}]
  
  There are two main changes to the algorithm in Theorem \ref{thm:det-multipass-coloring}. First, because the color lists $L_x$ are no longer fixed, computing $\slackwrt{P_x \cap Q}{x}$ for each $x \in U$ and $Q \in \cQ^{(i)}$ requires counting both $|\{y \in \Nhd(x) \setm U : \chi(x) \in (P_x \cap Q) \}|$ as before, and $|P_x \cap Q \cap L_x|$. As both quantities are integers in $[0,\ldots,\Delta + 1]$, and can be computed by incrementing counters each time an edge or (vertex, list of colors) pair arrives, the total space usage from this stage is still $O(\log \Delta) |U| 2^{k}$.
  
  The other change is that we now adaptively pick the sequence of partitions $\cQ^{(1)},\ldots,\cQ^{(\ell)}$, and use more stages.
  Instead of letting the number $\ell$ of stages be $\ceil{\log(\Delta+1)/k}$, we use $\ell = \ceil{2  \log(\Delta+1)/k} + 1$ stages instead. 
  For the first $\ceil{2 \log(\Delta+1)/k}$ stages, we adaptively construct partitions using
  Lemma \ref{lem:d1lc-partitioner} on the $L_x$ with $s$ set to $2^{k}$; the resulting
  partitions use $O(\ell \log |\cC|) = O(\log \Delta \log n)$ space to store in total.
  
  Finding the best partition from Lemma \ref{lem:d1lc-partitioner} is complicated
  by the fact that the algorithm can not exactly store the color lists $L_x$ for each vertex. Let $\cF$ be the family of partitions from  Lemma \ref{lem:d1lc-partitioner}.
  At the start of each stage, we use four passes over the stream to identify a partition $\cR \in \cF$ for which the quantity $\sum_{x \in U} a_{\cR}(P_x \cap L_x)$ is below average, for $a_{\cR}(S) = \max_{R \in \cR} (| S \cap R | - 1 )$. This can be done using the same method as was used to identify an approximately sub-average hash function $h^\star$ in \Cref{alg:determ}. In the first pass, we split $\cF$ into $O(|\cF|^{1/4})$ parts, and compute
  $\sum_{\cR} \sum_{x \in U} a_{\cR}(P_x \cap L_x)$ for each part; after the pass completes, we pick the part with the least value of this sum, split it into $O(|\cF|^{1/4})$ smaller parts,  and repeat the process. The fourth pass will
  compute $\sum_{x \in U} a_{\cR}(P_x \cap L_x)$ for individual partitions $\cR$ of the family $\cF$; we let $\cQ^{(i)}$ be the best partition from this pass.
  All this is possible because the value of $a_{\cR}(P_x \cap L_x)$ can be computed as soon as the pair $(x,L_x)$ arrives in the stream. Consequently, it is possible to compute, for any family $\cF$ of partitions, $\sum_{\cR \in \cF} \sum_{x \in U} a_{\cR}(P_x \cap L_x)$ in a single pass over the stream, using $O(\log n)$ bits of space. (These sums have integer values, so no approximation is necessary.) As $|\cH| = O(|C|^2) = O(n^4)$, each individual pass requires storing only $O(n \log n)$ bits worth of counters. 
  
  At the start of the first stage, since all $|L_x| \le \Delta + 1$, we have $ \sum_{x \in U} (|L_x \cap P_x| - 1) \le \Delta |U|$. Letting $j_x$ be the index of $P_{x,j} = P_x \cap Q^{(i)}_{j}$ chosen to succeed $P_x$, we have (due to Lemma \ref{lem:d1lc-partitioner}).
  \begin{align*}
    \sum_{x \in U} (|L_x \cap P_{x,j_x}| - 1) \le \sum_{x \in U} \max_{j \in [s]} (|L_x \cap P_x \cap Q^{(i)}_{j}| - 1) \le 2^{-k/2} \sum_{x \in U} (|L_x \cap P_x| - 1)
  \end{align*}
  Each stage reduces $\sum_{x \in U} (|L_x \cap P_x| - 1)$ by a factor of
  $2^{-k/2}$, so after $\ell - 1 = \ceil{2 \log(\Delta+1)/k}$ stages, we have
  \begin{align*}
    \sum_{x \in U} (|L_x \cap P_x| - 1) \le \Delta |U| (2^{- k/2 (\ell - 1)}) \le \frac{\Delta}{\Delta+1} |U| \le |U|
  \end{align*}
  In the last stage, we set $\cQ = \{ \{ x \} : x \in \cC$, where $\cC =
  \bigcup_{x\in U} L_x$. Unlike the other stages, where $|\cQ| \le 2^{k}$, we need to run an additional pass to record, for each $x \in U$, the values of $|L_x \cap P_x|$. 
  This requires only $O(|U| \log n)$ bits. 
  In the following pass to compute $\slackwrt{P_x \cap Q}{x}$ for each $x \in U$ and $Q \in \cQ$, 
  we use the fact that $\slackwrt{P_x \cap Q}{x}$ will only be one if $Q \subseteq
  P_x \cap L_x$ and there is no $y \in \Nhd(x) \setm U$ satisfying $\chi(y)
  \in Q$ to save space; instead of tracking sums for every $(x,Q) \in U \times
  \cC$ combination, we store a $\{0,1\}$ value for each $(x,Q) \in \sqcup_{x
  \in U} \{ (x,Q) : Q \in L_x \cap P_x \}$ which is initialized to $1$ and set
  to $0$ if the stream contains an edge to a neighboring $y \in [n] \setm
  U$ with color in $Q$.  After this stage, the condition $|L_x| \le 1$ holds, as required for the proof of Theorem \ref{thm:det-multipass-coloring} to work.
  
  Despite the less efficient partitioning scheme, the algorithm still uses
  roughly the same amount of space; for all but the last stage, it still uses
  $2^{k} |U|$ counters. The last stage requires one bit for each element in a list $L_x$ -- but since $\sum_{x \in U} (|L_x| - 1) \le |U|$, we have $\sum_{x \in U} |L_x| \le 2 |U|$, which implies only $2 |U|$ bits are needed.
  
  Storing the per vertex partitions $P_x$ requires only $\ell k + \log(|\cC|) = O(\log n)$ bits, each, at a given point in the algorithm. As in the original algorithm, each partition $P_x$ can be determined using the sequence of sets from $\cQ^{(1)},\ldots,\cQ^{(\ell)}$ that contain it.
  
  The analysis to prove that the potential does not increase by much requires no adjustment.
\end{proof}

Finally, we record the following corollary of the above algorithms on the communication complexity of $(\Delta+1)$ coloring that may be of independent interest.  

\begin{corollary}\label{thm:delta-plus-1-communication}
 There is a communication protocol for finding a $(\Delta+1)$ coloring of any input graph whose edges are partitioned between two players using $O(n\log^4{n})$ bits of communication and $O(\log \Delta \log \log \Delta)$ rounds of communication. 
\end{corollary}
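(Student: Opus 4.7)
The plan is to directly simulate the multipass deterministic streaming algorithm of \Cref{thm:det-multipass-coloring} in the two-party edge-partition communication model. Suppose the edges of $G$ are partitioned as $E = E_A \sqcup E_B$, held by Alice and Bob respectively. Each streaming pass will be simulated by a single round consisting of two messages: Alice first runs the pass on $E_A$ starting from the current algorithm state to obtain a partial state and sends it to Bob; Bob then continues the pass on $E_B$ starting from Alice's partial state and sends the resulting full post-pass state back to Alice. After this exchange, both players hold the same state and are ready for the next pass.

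The key point that makes this simulation well-defined is that every pass of \Cref{alg:determ} is order-independent: each post-pass aggregate decomposes as a sum (or union) of per-edge contributions. Specifically, pass~1 of each stage only increments counters of the form $|\{y \in \Nhd(x) \setm U : \chi(y) \in P_x \cap Q\}|$; passes~2 and~3 accumulate estimators of the edge-indexed sums appearing in \cref{eq:potential-def}; and the end-of-epoch pass filters $E$ by a per-edge predicate to produce the set $F$. In each case, processing the edges in the block order $E_A$ followed by $E_B$ produces exactly the same post-pass state as any streaming order. All remaining offline computations---choosing the Carter--Wegman family in \cref{line:cw-hash}, constructing $g_{\bw}$ via \Cref{lem:gw-property}, selecting the near-optimal hash $h^\star$, and running \Cref{lem:find-iset} at the end of an epoch---depend only on information known identically to both players, so they can be performed without any communication.

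Bounding the costs, \Cref{thm:det-multipass-coloring} runs in $O(\log\Delta \log\log\Delta)$ passes using $O(n \log^2 n)$ bits of working memory, so our simulation uses $O(\log\Delta \log\log\Delta)$ rounds, each transmitting $O(n \log^2 n)$ bits in each direction. The total communication is therefore
\[
O\!\left( n \log^2 n \cdot \log \Delta \log\log \Delta \right) = O(n \log^4 n),
\]
using the crude estimate $\log\Delta \log\log\Delta = O(\log^2 n)$. Correctness of the output $(\Delta+1)$-coloring is inherited directly from \Cref{thm:det-multipass-coloring}. The main (indeed only) obstacle is verifying the order-independence of each pass, which is immediate once one observes that every per-pass aggregate is linear over edges.
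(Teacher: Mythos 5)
Your proposal is correct and follows essentially the same route as the paper: simulate each streaming pass by one round in which Alice processes her edges and forwards the state to Bob, who finishes the pass and returns the updated state, giving $O(n\log^2 n)$ bits per round over $O(\log\Delta\log\log\Delta)=O(\log^2 n)$ rounds. The only difference is cosmetic: the discussion of order-independence of each pass is superfluous, since the deterministic streaming algorithm already works for an arbitrary fixed (adversarial) stream order, so the two players can simply agree once and for all to use the order ``$E_A$ then $E_B$'' on every pass.
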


\begin{proof}
This follows from a standard reduction from a streaming algorithm to a communication protocol.

Let Alice and Bob be the two players, who receive disjoint sets of edges $A$ and $B$, respectively. They will run \Cref{alg:determ} on the stream whose first half contains the edges of $A$, and whose second half contains the edges of $B$. To do this, Alice initializes the streaming algorithm, and runs it on the first half of the stream. She then sends a message encoding the state of the algorithm to Bob, who decodes the message and runs the algorithm on the second half of the stream. Bob then sends the updated state of the streaming algorithm back to Alice. This process is repeated once for each pass of the streaming algorithm; since the algorithm uses $O(n \log^2 n)$ bits of space, uses $O(\log \Delta \log \log \Delta) = O(\log^2 n)$ passes, the total number of bits sent by this protocol is $O(n \log^4 n)$.
\end{proof}

While it is not hard to obtain an $O(n \cdot \polylog{(n)})$ communication protocol for $(\Delta+1)$ coloring by simulating the greedy algorithm (and running binary search between Alice and Bob for finding an available color for each vertex), the interesting part of~\Cref{thm:delta-plus-1-communication} is that we can achieve a similar communication guarantee in a much smaller number of rounds of communication.


\section{Coloring Robustly Against an Adaptive Adversary}\label{sec:coloring-robustly}

We now turn to the adversarially robust streaming setting. As a reminder, this
is inherently a single-pass setting and our algorithms are allowed to use
randomness. However, an algorithm needs to be correct against an adversary who
constructs the input graph adaptively by inserting upcoming edges based on the
colorings returned by the algorithm. This means that the stream elements might
depend on past outputs, which in turn depend on the random bits used by the
algorithm. While \cite{AssadiCK19} gave a semi-streaming $(\Delta+1)$-coloring
algorithm in the ``non-robust'' oblivious adversary setting where the stream
is fixed in advance, \cite{ChakrabartiGS22} showed that a robust
semi-streaming algorithm must use $\Omega(\Delta^2)$ colors. They also gave an
$O(\Delta^3)$-coloring robust algorithm. In this section, we give an improved
$O(\Delta^{5/2})$-coloring algorithm. 

We assume that $\sqrt{\Delta}$ is an integer (if not, we can work with $\left\lceil\sqrt{\Delta}\right\rceil$ which will not affect the asymptotic color or space bounds). We also assume that $\Delta = \Omega(\log^2 n)$; if $\Delta$ is smaller, we can store the entire graph in semi-streaming space and then color it optimally.

The following graph-theoretic concept plays a crucial role in our algorithm.

\begin{definition}[degeneracy]
  The \emph{degeneracy} of a graph $G$ is the least integer value $\kappa$ for
  which every induced subgraph of $G$ has a vertex of degree $\le \kappa$.
  Equivalently, it is the least value $\kappa$ for which there is an acyclic
  orientation of the graph where the maximum out-degree of any vertex is $\le
  \kappa$. By greedily assigning colors to the vertices of this orientation of
  $G$ in reverse topological order, one obtains a proper $(\kappa+1)$-coloring
  of $G$; we refer to this as a $(\text{degeneracy}+1)$-coloring.
\end{definition}

\subsection{High-Level Description and Techniques}

We first set up some terminology to help us outline our algorithm.

  \begin{itemize}[topsep=2pt,itemsep=1pt]
      \item \textbf{Buffer.} As the stream arrives, we explicitly store a buffer $B$ of at most $n$ edges. When the buffer is full (i.e., has reached its capacity of $n$ edges), we empty it completely, and move on to storing the next batch of $n$ edges. 
      
      \item \textbf{Epoch.} We say we are in the $i$th epoch when we are storing the $i$th chunk of $n$ edges in our buffer.
      
      \item \textbf{Level.} We define levels for the vertices with respect to their degree in the (entire) graph seen so far. At the point of query, we say that a vertex is in level $\ell$, if its degree in the current graph is in $((\ell-1)\sqrt{\Delta},\ell \sqrt{\Delta}]$. 
      
      \item \textbf{Zone (fast and slow).} We define zones (fast or slow) for
      the vertices with respect to their degree in the \emph{buffer} $B$. At
      the time of query, we say that a vertex $v$ is in the \emph{fast} zone
      if $\deg_B(v) >\sqrt{\Delta}$; otherwise, we say that it is in the
      \emph{slow} zone. We also use the terms \emph{slow vertex} and
      \emph{fast vertex}, respectively.
      
      \item \textbf{Block.} We have multiple coloring functions, denoted by $h_i$ and $g_i$, that assign each node a color uniformly at random from a palette of suitable size (not to be confused with the final proper coloring; these colorings are improper). As a result, we obtain a partition of the nodes into monochromatic classes that we call ``blocks.'' A block produced by a coloring function $f$ is called an $f$-block. More formally, for each $c$ in the range of $f$, the set of nodes $\{v\in V: f(v)=c\}$ is called an $f$-block. 
      
     \item \textbf{$f$-Monochromatic.} An edge $\{u,v\}$ with $f(u)=f(v)$ is called $f$-monochromatic.

     \item \textbf{$f$-Sketches.} For a function $f$ we call the underlying sketch of the algorithm, which receives edges of the graph and stores it only if it is $f$-monochromatic, as an $f$-sketch.
      
  \end{itemize}

Next, we describe how to color the slow vertices using $O(\Delta^{5/2})$
colors in semi-streaming space. Then we do the same for the fast vertices.

\paragraph{Coloring slow vertices} Consider breaking the edge stream into
$\Delta$ ``chunks'' of size $n$ each. As described above, our buffer $B$
basically stores a chunk from start to end, and then deletes it entirely and
moves on to the next chunk. We initialize $\Delta$ many coloring functions
$h_1, \ldots, h_\Delta$ that run in parallel. For each $i$, the function $h_i$
assigns each node a color from $[\Delta^2]$ uniformly at random.  An
$h_i$-sketch (see definition above) processes the prefix of the stream until
the end of chunk $i$. Recall that ``processing'' means it stores a received
edge $(u,v)$ in the set $A_i$ if it is $h_i$-monochromatic. 

Suppose a query arrives in the current epoch $\curr$. Fix a subgraph induced by \emph{only} the slow vertices in an arbitrary $h_\curr$-block on the edge set $A_{\curr-1}\cup B$ (set $A_0:=\emptyset$). Recolor this subgraph using an offline $\Delta'+1$-coloring algorithm where $\Delta'$ is its max-degree. Now do this for each $h_\curr$-block, using fresh palettes for the distinct blocks. We then return the resultant coloring (for the slow nodes). We now argue that the number of edges stored in $(\cup_i A_i)$ is roughly $O(n)$ and the number of colors used is $O(\Delta^{5/2})$.

Observe that for each $i$, the $h_i$-sketch processes the prefix of the stream until the end of epoch $i$. But note that, until that point, we only base our output on $A_j$s for $j<i$, which are independent on $h_i$ in particular. Therefore, we ensure that each $h_i$-sketch processes a part of the stream independent of their randomness. Hence, an edge $(u,v)$ received by an $h_i$-sketch is $h_i$-monochromatic with probability $1/\Delta^2$. Since it receives at most $n\Delta$ edges, it {\em stores} only $O(n\Delta/\Delta^2)=O(n/\Delta)$ edges in expectation in $A_i$. By a Chernoff Bound argument, the actual value is tightly concentrated around this expectation w.h.p. Then, the $\Delta$ sets $A_1,\ldots,A_\Delta$ store roughly $O(n/\Delta\cdot \Delta)=O(n)$ edges in total w.h.p. 

Now, we first verify that it properly colors the graph induced by the slow nodes. Observe that we indeed stored each edge of the input graph, which is contained in any $h_\curr$ block of slow vertices, in $A_{\curr-1}\cup B$. This is because if it is in $B$, we have definitely stored it, and otherwise, it was in an epoch $\leq \curr-1$. Therefore, the $h_{\curr-1}$-sketch received it and must have stored it in $A_{\curr-1}$. This means each intra-block edge is properly colored by the offline algorithm, and each inter-block edge is also properly colored since we use distinct palettes for distinct blocks.

Now we argue the color bound. For each slow node, an $h_i$-sketch receives at most $\Delta$ edges incident to it and hence, $A_i$ stores $O(\Delta\cdot 1/\Delta^2)= O(1/\Delta)$ edges incident to it in expectation (by the previous argument). By a Chernoff Bound argument and taking union bound over all nodes, we get that each of them has degree roughly $O(\log n)$ in $A_i$ w.h.p. Further, since these nodes are slow, they have degree at most $\sqrt{\Delta}$ in $B$. Therefore, the degree of each slow node in the edge set $A_{\curr-1}\cup B$ is $O(\sqrt{\Delta}+\log n)=O(\sqrt{\Delta})$ since $\Delta$ is assumed to be $\Omega(\log^2 n)$. Hence, each $h_\curr$-block of slow nodes induced on $A_{\curr-1}\cup B$ is colored with a fresh palette of $O(\sqrt{\Delta})$ colors by the offline algorithm. There are $\Delta^2$ many $h_\curr$-blocks, and therefore, we use $O(\Delta^2\cdot \sqrt{\Delta})=O(\Delta^{5/2})$ colors. \smallskip

\paragraph{Coloring fast vertices} To handle these, we use another
$\sqrt{\Delta}$ coloring functions $g_1,\ldots,g_{\sqrt{\Delta}}$. Each $g_i$
assigns each node a color from $[\Delta^{3/2}]$ uniformly at random. When an
edge $\{u,v\}$ arrives, let $\ell$ be the maximum between the two levels of
$u$ and $v$. We send it to the $g_i$-sketches for all $i\geq \ell+1$. Recall
that a $g_i$-sketch then stores the edge in the set $C_i$ only if it is
$g_i$-monochromatic, i.e., if $g_i(u)=g_i(v)$. 

We prove that each $g_i$-sketch processes edges independent of their randomness. This is the tricky part. Intuitively,
for each edge $\{u,v\}$ that a $g_i$-sketch receives, the levels of $u$ and $v$ were strictly smaller than $i$ when it
was inserted. Thus, the only values $g_j(u)$ and $g_j(v)$ that were used to return outputs until then were for $j<i$.
Hence, $g_i(u)$ and $g_i(v)$ were completely unknown to the adversary when $\{u,v\}$ was inserted. Thus, the edge
stream received by each $g_i$-sketch is independent of the randomness ``that matters'' in processing it. Hence, since
the probability that each edge is $g_i$-monochromatic is $1/\Delta^{3/2}$, each $g_i$-sketch stores roughly
$O(n\Delta/\Delta^{3/2})=O(n/\sqrt{\Delta})$ edges in $C_i$. Thus, the total number of edges stored by
$C_1,\ldots,C_{\sqrt{\Delta}}$ is $O(n/\sqrt{\Delta}\cdot \sqrt{\Delta})=O(n)$.

When a query arrives, for each level $i$, we consider the fast vertices in each $g_i$-block. Then consider the subgraph induced by these vertices on the edge set $C_i\cup B$. Color it using a (degeneracy+1)-coloring offline algorithm. We prove that this colors the fast vertices properly with $O(\Delta^{5/2})$ colors. 

To verify that it is a proper coloring, we need to show that the subgraph of $G$ induced on each $g_i$-block of fast vertices is stored in $C_i\cup B$. This follows from the ``fastness'' property of the nodes: if any such edge $\{u,v\}$ is not in the buffer $B$, then, since the degrees of $u$ and $v$ increased by at least $\sqrt{\Delta}$ in the buffer, the nodes $u$ and $v$ must have been at levels lower than $i$ when $\{u,v\}$ was inserted. Therefore, it was fed to the $g_i$-sketch, which stored it since it is $g_i$-monochromatic. Hence, each intra-block edge of fast vertices is properly colored by the offline algorithm, and each inter-block edge is also properly colored since we use distinct palettes for distinct blocks.

\subsection{The Robust Algorithm and its Analysis}

We now present the pseudocode of our algorithm in \Cref{alg:robust}. The analysis of correctness, robustness, space usage, and color bound is given below. 

 \begin{algorithm}[ht!]
  \caption{Adversarially Robust $O(\Delta^{2.5})$-Coloring in Semi-Streaming Space
    \label{alg:robust}}
  \begin{algorithmic}[1]
  \Statex \textbf{Input}: Stream of edge insertions of an $n$-vertex graph $G=(V,E)$
  \Statex
    \Statex \underline{\textbf{Initialize}:}
    \State $d(v) \gets 0$ for each $v\in V$ \Comment{degree counters}
    \For{$i$ from $1$ to $[\Delta]$}  \Comment{$\Delta$ parallel copies for $\Delta$ possible epochs}
    \State Let $h_i : V \rightarrow \left[\Delta^2\right]$ be uniformly random  \Comment{$h_i$ assigns each node a color from $\left[\Delta^2\right]$ u.a.r.}
    \State $A_i \gets \emptyset$ \Comment{edges stored by $h_i$-sketch}
    \EndFor

     \For{$i$ from $1$ to $\left[\sqrt{\Delta}\right]$} \Comment{$\sqrt{\Delta}$ parallel copies for $\sqrt{\Delta}$ possible levels}
    \State Let $g_i : V \rightarrow \left[\Delta^{3/2}\right]$ be uniformly random \Comment{$g_i$ assigns each node a color from $\left[\Delta^{3/2}\right]$ u.a.r.}
    \State $C_i \gets \emptyset$ \Comment{edges stored by $g_i$-sketch}
    \EndFor

      \State $B\gets \emptyset$ \Comment{buffer}
      \State ${\curr} \gets 1$ \Comment{current epoch number}
  \Statex 
  \Statex\underline{\textbf{Process}(edge $\{u,v\}$):}
  \If{$|B| = n$} \label{step:buffer-replace-check}
      \State $B \gets \emptyset$; $\curr\gets\curr+1$ \Comment{Empty buffer if full and update epoch number}
      \EndIf
  \State $B \gets B \cup \{\{u,v\}\}$; 
 \Comment{Update buffer and buffer size}
            \State $d(u) \gets d(u) + 1$; $d(v) \gets d(v) + 1$ \label{step:deg-increase} \Comment{Increase degrees of $u$ and $v$}
      \For{$i$ from  $({\curr}+1)$ to $\Delta$} \Comment{Consider copies corresponding to higher epochs}
      \If{$h_i(u) = h_i(v)$} $A_i \gets A_i \cup \{\{u,v\}\}$ \Comment{Store $h_i$-monochromatic edges in $A_i$} \label{step:slow-record} 
      \EndIf
      \EndFor
      \For{$i$ from $\left\lceil\frac{\max\{d(u),d(v)\}}  {\sqrt{\Delta}}\right\rceil + 1 $ to $\Delta$} \Comment{Consider levels higher than both $u$ and $v$}
      \If{$g_i(u) = g_i(v)$} 
      $C_i \gets C_i \cup \{\{u,v\}\}$. \Comment{Store $g_i$-monochromatic edges in $C_i$} \label{step:fast-record}
      \EndIf
      \EndFor
      
      \Statex
      \Statex \underline{\textbf{Query}():}
      \State $F \gets \{ v \in V : \deg_B(v) > \sqrt{\Delta} \}$ \Comment{$F$ contains fast vertices that have received $>\sqrt{\Delta}$ edges in the buffer}
      \State $S \gets V \setm F$ \Comment{$S$ contains the remaining slow vertices}
       \For{$c$ from $1$ to $[\Delta^2]$} \label{step:slow-color} 
       \State $S_{\curr}(c)\gets \left\{w \in S: h_{\curr}(w) = c\right\}$ \Comment{Consider each $h_\curr$-block among slow vertices}
      \State Using fresh colors, (degree+1)-color subgraph induced by $S_{\curr}(c)$ on edge set $A_{\curr-1}\cup B$
      \EndFor
      \For{$\ell$ from $1$ to $\left[\sqrt{\Delta}\right]$}  \label{step:fast-color}
      \For{$c$ from $1$ to $\left[\Delta^{3/2}\right]$}
       \State $F(\ell,c)\gets \left\{w \in F: \left\lceil  \frac{d(w)}{\sqrt{\Delta}}\right\rceil=\ell, \text{ and } g_{\ell}(w) = c\right\}$  \Comment{Consider each $g_\ell$-block among fast vertices}
      \State Using fresh colors, (degeneracy+1)-color subgraph induced by $F(\ell,c)$ on edge set $C_{\ell}\cup B$
      \EndFor
      \EndFor
      \State Output resultant coloring for $S\cup F=V$
      
  \end{algorithmic}
\end{algorithm}

\begin{lemma}\label{lem:c-degree-limit}
  With high probability, for all vertices $x \in V$, we have $\sum_{i\in [\sqrt{\Delta}]} d_{C_i}(v) = O(\log n)$.
\end{lemma}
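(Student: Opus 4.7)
The plan is to fix an arbitrary vertex $v \in V$, express $\sum_{i \in [\sqrt{\Delta}]} d_{C_i}(v)$ as a sum of at most $\Delta^{3/2}$ Bernoulli-like random variables, and invoke \Cref{lem:forward-concentration}. For each $r \in [\Delta]$ and each $i \in [\sqrt{\Delta}]$, I would define an indicator $Y_{r,i}$ as follows. If $r \le \deg(v)$, so that there is an $r$-th edge $e^v_r = \{u_r,v\}$ incident to $v$ (arriving at some time $\tau_r$), let $Y_{r,i} = 1$ exactly when $e^v_r$ is fed to the $g_i$-sketch on \Cref{step:fast-record} and $g_i(u_r) = g_i(v)$; otherwise set $Y_{r,i} = 0$ (this padding is what lets us use a deterministic count of variables despite $\deg(v)$ being random). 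Then $\sum_{r,i} Y_{r,i} = \sum_{i \in [\sqrt{\Delta}]} d_{C_i}(v)$, and the total number of indicators is $k = \Delta \cdot \sqrt{\Delta} = \Delta^{3/2}$.

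Enumerate these indicators lexicographically in $(r,i)$. The heart of the argument is to show that in this order, each $Y_{r,i}$ has conditional expectation at most $c := 1/\Delta^{3/2}$ given its predecessors. When $r > \deg(v)$, $Y_{r,i} = 0$ deterministically and the bound is trivial. When $r \le \deg(v)$, the event that $e^v_r$ is sent to the $g_i$-sketch is fully determined by the graph state at time $\tau_r$ (it depends only on $\max\{d(u_r),d(v)\}$ then). The crucial observation, already highlighted in the high-level description, is that this event forces both $u_r$ and $v$ to have had level strictly less than $i$ at every time $s \le \tau_r$, by monotonicity of levels. Consequently, none of the outputs released at the queries preceding time $\tau_r$ invoked $g_i(u_r)$ or $g_i(v)$ -- since each such output uses $g_\ell(w)$ only for the \emph{current} level $\ell$ of $w$, which was $< i$ throughout -- so $g_i(u_r)$ and $g_i(v)$ remain jointly uniform in $[\Delta^{3/2}]$ and independent of the entire transcript the adaptive adversary has observed. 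Hence, conditioned on $e^v_r$ being sent to the $g_i$-sketch, $\Pr[g_i(u_r) = g_i(v)] = 1/\Delta^{3/2}$; otherwise $Y_{r,i} = 0$. Either way, the conditional-expectation bound holds.

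With $k = \Delta^{3/2}$ and $c = 1/\Delta^{3/2}$ so that $kc = 1$, \Cref{lem:forward-concentration} with $t = C \log n$ for a sufficiently large constant $C$ (so $t \ge 3$) gives $\Pr\bigl[\sum_{r,i} Y_{r,i} \ge 1 + C\log n\bigr] \le 2^{-C \log n} = n^{-C}$. A union bound over the $n$ choices of $v$ then establishes that $\sum_{i \in [\sqrt{\Delta}]} d_{C_i}(v) = O(\log n)$ simultaneously for every $v \in V$ with probability at least $1 - n^{-(C-1)}$, proving the claim.

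The main obstacle, and the point that needs the most care, is the independence claim used to bound the conditional expectation. Every other step -- expressing the sum, counting variables, padding out to a deterministic count, invoking the lemma, and union-bounding -- is essentially mechanical; what makes the analysis specific to the adversarial setting is the argument that the adaptive adversary has learned nothing about $g_i(u_r)$ or $g_i(v)$ by time $\tau_r$, which rests on the monotonicity of levels and on the observation that each query's output only reveals $g_\ell(w)$ at $w$'s current level $\ell$, never at a higher index.
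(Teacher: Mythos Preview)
Your proposal is correct and follows essentially the same approach as the paper: define indicator variables $Y_{r,i}$ (the paper calls them $Z_{k,\ell}$) for each of the at most $\Delta$ edges incident to the fixed vertex and each of the $\sqrt{\Delta}$ levels, order them lexicographically, bound each conditional expectation by $1/\Delta^{3/2}$ via the ``freshness'' of $g_i$ at vertices whose level is still below $i$, apply \Cref{lem:forward-concentration}, and union-bound over vertices. One small refinement worth noting: your independence argument asserts that both $g_i(u_r)$ and $g_i(v)$ remain uniform and independent of the transcript, but the conditioning is on the \emph{predecessors} $Y_{r',i'}$, and for $r' < r$, $i' = i$ these already involve $g_i(v)$; the paper's formal argument (deferred to an appendix lemma) therefore focuses on the freshness of $g_i(u_r)$ alone, which suffices since $u_r$ is distinct from all earlier $u_{r'}$ and has not been queried at level $i$.
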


\begin{proof}
  For any $x \in V$, let $D$ be the random variable for the degree of $x$ at the end of the stream, and let $\{x,Y_1\}$, $\{x,Y_2\}$, ... $\{x,Y_D \}$ be the edges added adjacent to $x$ by the adversary, in order. For all $k \in [\Delta]$ and $\ell \in \sqrt{\Delta}$, let $Z_{k, \ell}$ be the random variable which is 1 if $k \le D$ and the algorithm stores the edge $\{x,Y_k\}$ in the set $C_\ell$, and zero otherwise. The edge $\{x,Y_k\}$, assuming it exists, will be stored in $C_i$ only if $g_i(x) = g_i(Y)$ and $i \ge \ceil{\frac{\max(d(x),d(Y_k))}{\sqrt{\Delta}}} + 1$, where $d(x)$ and $d(Y_k)$ are the values of the degree counter at the time the edge was added. Now, consider the sequence of random variables,
  \begin{align}
    Z_{1,1},\ldots,Z_{1,\sqrt{\Delta}}, Z_{2,1},\ldots,Z_{2,\sqrt{\Delta}}, \ldots, Z_{\Delta,1},\ldots,Z_{\Delta,\sqrt{\Delta}} \label{eq:rvsequence}
  \end{align}
  Their sum is precisely $\sum_{\ell \in [\sqrt{\Delta}]} d_{C_\ell }(x)$. In order to bound this sum with high probability, we would like to use \Cref{lem:forward-concentration}, but in order for that to work we need to prove that the expectation of a given $Z_{k,\ell}$, conditional on all the earlier terms in the sequence, is bounded. Let $\prec$ indicate the lexicographic order on pairs of the form $(k',\ell')$, so that $(k'',\ell'') \prec (k', \ell')$ if either $k'' < k'$, or ($k'' = k'$ and $\ell'' < \ell'$.). Define $Z_{\prec (k,\ell)}$ to be the vector  $(Z_{k',\ell'})_{(k',\ell') \prec (k, \ell)}$. We want to prove an upper bound on $\EE[Z_{k,\ell} \mid Z_{\prec (k,\ell)}]$. Intuitively, the edge $\{x,Y_k\}$ chosen by the adversary will either definitely not be stored in $C_k$ -- because e.g. one of the degrees of the endpoints is too large -- or, when it is time to check whether $g_\ell(x) = g_\ell(Y_k)$, the value read from $g_\ell(Y_k)$ will not have been revealed to the adversary so far, nor will it have been read as part of any test to determine if $\{x,Y_{k'}\}$ should be stored in $C_{k'}$, for $(k',k') \prec (k,k)$; so $g_\ell(Y_k)$ will be independent of the variables in $Z_{\prec (k,\ell)}$, and will equal $g_\ell(x)$ with probability exactly $1/\Delta^{3/2}$. Either way, we will find $\EE[Z_{k,\ell} \mid Z_{\prec (k,\ell)}] \le 1/\Delta^{3/2}$. (A more formal proof of this fact is provided as \Cref{lem:c-degree-limit-formal} in \Cref{sec:deferred-proofs})
  
  Now, applying \Cref{lem:forward-concentration} to the sequence of random variables from Eq. \ref{eq:rvsequence}, we obtain:
  \[
     \Pr\left[ \sum_{\ell \in [\sqrt{\Delta}]} d_{C_\ell}(x) \ge 5 \log n \right] 
     \le \Pr\left[ \sum_{k \in [\Delta]} \sum_{\ell \in [\sqrt{\Delta}]} Z_{k,\ell} \ge \Delta^{3/2} \cdot \frac{1}{\Delta^{3/2}} (1 + 4 \log n) \right]
     \le 2^{- 4 \log n} = \frac{1}{n^4} \,.
  \]
  
  Then taking a union bound of this event for each $x \in V$, we conclude that $\sum_{i\in [\sqrt{\Delta}]} d_{C_i}(x) = O(\log n)$ holds for \emph{all} $x$ with high probability.
\end{proof}

\begin{lemma}\label{lem:a-degree-limit}
   With high probability, for all vertices $x \in V$, we have $\sum_{i\in [\Delta]} d_{A_i}(v) = O(\log n)$.
\end{lemma}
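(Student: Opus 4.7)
The plan is to mirror the proof of \Cref{lem:c-degree-limit} almost verbatim, replacing the $C_i/g_i$ machinery with the $A_i/h_i$ machinery and recomputing the bounds with the new storing rule (epoch-based rather than level-based) and the new collision probability $1/\Delta^2$.

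Fix a vertex $x \in V$ and let $D$ be its final degree, with $\{x,Y_1\},\ldots,\{x,Y_D\}$ being the edges incident to $x$ in the order the adversary inserts them. For each $k \in [\Delta]$ and $i \in [\Delta]$, let $Z_{k,i}$ be the indicator that $k \le D$ and the algorithm places $\{x,Y_k\}$ into $A_i$; then $\sum_{i \in [\Delta]} d_{A_i}(x) = \sum_{k,i} Z_{k,i}$. I would order the variables lexicographically on $(k,i)$, write $Z_{\prec(k,i)}$ for the prefix, and establish the conditional bound
\begin{align*}
  \EE\!\left[Z_{k,i} \,\big|\, Z_{\prec(k,i)}\right] \le \frac{1}{\Delta^2}\,.
\end{align*}
If $\{x,Y_k\}$ is inserted during an epoch $c_k \ge i$, the $h_i$-sketch never receives the edge, so $Z_{k,i}=0$ deterministically (see \cref{step:slow-record}). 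Otherwise $c_k < i$, and the key observation is that every Query output delivered to the adversary strictly before this insertion depends only on the hash functions $h_1,\ldots,h_{c_k}$ and on sets $A_j$ with $j \le c_k$; the hash $h_i$ has never been exposed. Thus $h_i$ is independent of $Z_{\prec(k,i)}$ and in particular of the adversary's choice of $Y_k$, so $\Pr[h_i(x)=h_i(Y_k) \mid Z_{\prec(k,i)}] = 1/\Delta^2$.

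Applying \Cref{lem:forward-concentration} to the sequence of $\Delta^2$ variables with $c = 1/\Delta^2$ (so $kc = 1$) and $t = 4\log n$ yields
\begin{align*}
  \Pr\!\left[\sum_{i \in [\Delta]} d_{A_i}(x) \ge 1 + 4\log n\right] \le 2^{-4 \log n} = \frac{1}{n^4}\,,
\end{align*}
and a union bound over the $n$ choices of $x$ gives the claimed high-probability bound $\sum_i d_{A_i}(x) = O(\log n)$ for all $x$ simultaneously.

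The main obstacle is the conditional-independence claim used to justify $\EE[Z_{k,i}\mid Z_{\prec(k,i)}] \le 1/\Delta^2$: the game transcript interleaves buffer updates, epoch advancements, per-edge sketch updates, and potentially many adversarial queries, and one must verify that no such query leaks information about $h_i$ before epoch $i$ begins. This is handled exactly as in \Cref{lem:c-degree-limit-formal}: Query in epoch $c$ reads $h_c$ (and, through $A_{c-1}$, information determined by $h_{c-1}$), while Process routes an edge to $A_i$ only for $i > \curr$, so when $c_k < i$, every quantity on which the adversary could condition its choice of $Y_k$ is a measurable function of randomness disjoint from $h_i$. A formal statement and proof can be obtained by the same induction on the prefix of the transcript used for the analogous claim in the previous lemma.
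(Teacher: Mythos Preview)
Your proposal is correct and follows essentially the same approach as the paper: define indicators $Z_{k,i}$ for storing $\{x,Y_k\}$ in $A_i$, order them lexicographically, bound each conditional expectation by $1/\Delta^2$ using the observation that $h_i(Y_k)$ has not been revealed to the adversary when $c_k<i$, apply \Cref{lem:forward-concentration}, and finish with a union bound. The paper's own proof is in fact just a sketch pointing back to \Cref{lem:c-degree-limit} and \Cref{lem:c-degree-limit-formal} with exactly these substitutions, so your write-up is if anything more detailed than what appears there. One minor imprecision: you write ``$h_i$ is independent of $Z_{\prec(k,i)}$,'' but $Z_{k',i}$ for $k'<k$ already depends on $h_i(x)$ and $h_i(Y_{k'})$; what you need (and what your discussion of the obstacle correctly identifies) is that $h_i(Y_k)$ is independent of the conditioning, which follows since the $Y_{k'}$ are distinct and the adversary's choice of $Y_k$ cannot depend on $h_i$.
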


\begin{proof}
  The argument here is essentially the same as for the proof of \Cref{lem:c-degree-limit}, so we will skip most of the details, and describe briefly what changes.
  
  Instead of defining indicator random variables $Z_{k,\ell}$ for the event that
  the \Cref{alg:robust} stores a given edge $\{x,Y_k\}$ in $C_\ell$, we define indicator random variables $Z_{k,i}$, for $i \in [\Delta]$, for the event that 
  the algorithm stores $\{x,Y_k\}$ in $A_i$. With a similar lexicographically ordered sequence of the $Z_{k,i}$, one can prove that each random variable $Z_{k,i}$ has expectation $\le \frac{1}{\Delta^2}$, even after conditioning on the values of all the earlier random variables in the sequence. This will use the observation that, if the answer to whether the edge $\{x,Y_k\}$ will be stored in the set $A_i$ depends on the value of $h_i(Y_k)$, then the value of $h_i(Y_k)$ has not been revealed to the adversary. Applying \Cref{lem:forward-concentration}, one will then find:
  \[
     \Pr\left[ \sum_{i\in [\Delta]} d_{A_i}(x) \ge 5 \log n \right] 
     \le \Pr\left[ \sum_{k \in [\Delta]} \sum_{i \in [\Delta]} Z_{k,i} \ge \Delta^{2} \cdot \frac{1}{\Delta^{2}} (1 + 4 \log n) \right]
     \le 2^{- 4 \log n} = \frac{1}{n^4} \,.
  \]
  The proof is completed by taking a union bound.
\end{proof}

\begin{lemma}\label{lem:space-usage}
  The space usage of \Cref{alg:robust} is $\tO(n)$ bits, with high probability.
\end{lemma}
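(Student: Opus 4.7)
The plan is to account for each component of the algorithm's state and show that, with high probability, the total bit-complexity is $O(n\,\polylog n)$. The state consists of: (i)~the per-vertex degree counters $d(v)$; (ii)~the buffer $B$; (iii)~the epoch counter $\curr$; (iv)~the stored edge sets $A_1,\ldots,A_\Delta$; and (v)~the stored edge sets $C_1,\ldots,C_{\sqrt{\Delta}}$. The hash functions $h_i$ and $g_i$ are \emph{not} stored explicitly but are accessed via the random-bits oracle provided in the hypothesis of \Cref{thm:adv-2.5-coloring}, so they contribute nothing to the working memory.

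The easy items are handled first. Each $d(v)$ is bounded by $\Delta \le n$, so storing them all costs $O(n\log n)$ bits. The buffer has at most $n$ edges by the reset condition in \cref{step:buffer-replace-check}, and each edge is described by $O(\log n)$ bits, so $B$ uses $O(n\log n)$ bits. The counter $\curr$ fits in $O(\log \Delta)$ bits.

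The main step is bounding the total size of $\bigcup_i A_i$ and $\bigcup_i C_i$. This is exactly where \Cref{lem:a-degree-limit,lem:c-degree-limit} come in. By the handshake identity applied within each stored edge set,
\begin{align*}
  2 \sum_{i \in [\Delta]} |A_i| \;=\; \sum_{x \in V}\sum_{i \in [\Delta]} d_{A_i}(x)\,,
  \qquad
  2 \sum_{i \in [\sqrt{\Delta}]} |C_i| \;=\; \sum_{x \in V}\sum_{i \in [\sqrt{\Delta}]} d_{C_i}(x)\,.
\end{align*}
By \Cref{lem:a-degree-limit}, with high probability every $x \in V$ satisfies $\sum_i d_{A_i}(x) = O(\log n)$; likewise \Cref{lem:c-degree-limit} gives $\sum_i d_{C_i}(x) = O(\log n)$ w.h.p.\ for every $x$. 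A union bound across these two events then shows that $\sum_i |A_i| + \sum_i |C_i| = O(n \log n)$ with high probability. Since each edge is stored using $O(\log n)$ bits (we only need the two endpoint labels; we do not need to store the index $i$ explicitly because each $A_i$ and $C_i$ can be represented as a separate list, and the indices $i$ occupy only $O(\log \Delta)$ bits of overhead per nonempty list), the total space used by these edge sets is $O(n \log^2 n)$ bits w.h.p.

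Summing over all components yields a total of $O(n \log^2 n) = \tO(n)$ bits w.h.p., as required. The only nontrivial obstacle is the high-probability bound on $\sum_i |A_i|$ and $\sum_i |C_i|$, which is exactly what was established in \Cref{lem:a-degree-limit,lem:c-degree-limit}; everything else is bookkeeping.
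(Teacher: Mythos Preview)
Your proposal is correct and follows essentially the same approach as the paper: both invoke \Cref{lem:a-degree-limit,lem:c-degree-limit} to bound the per-vertex degree sums in the stored sets $A_i$ and $C_i$, convert these via the handshake identity into an $O(n\log n)$ bound on the total number of stored edges, and then account separately for the buffer, degree counters, and auxiliary counters. Your version is slightly more explicit (noting the oracle access for the hash functions and the per-edge bit cost), but the argument is the same.
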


\begin{proof}
  By Lemmas \ref{lem:a-degree-limit} and  \ref{lem:c-degree-limit}, all vertices $x \in V$ satisfies $\sum_{i\in [\Delta]} d_{C_i}(v) = O(\log n)$, and $\sum_{i\in [\Delta]} d_{C_i}(v) = O(\log n)$, with high probability.  Since $|C_i| = \frac{1}{2} \sum_{x \in V} d_{C_i} (x)$, and $|A_i| = \frac{1}{2} \sum_{x \in V} d_{A_i} (x)$, it follows \Cref{alg:robust} stores $O(n \log n)$ edges in total in $\bigcup_{i \in [\Delta]} A_i \cup \bigcup_{i \in [\sqrt{\Delta}]} C_i$.
  Additionally, it stores a buffer $B$ of $n$ edges. Hence, the algorithm stores $\tO(n)$ edges in total. Further, it stores a degree counter for each node and a couple of counters for tracking the buffer size and the epoch number. These take an additional $\tO(n)$ bits of space. Thus, the total space usage of the algorithm is $\tO(n)$ bits.
\end{proof}

\begin{lemma}\label{lem:advrob-degeneracy}
  At any point in the stream, for each $\ell\in [\sqrt{\Delta}]$ and $c\in [\Delta^{3/2}]$, the degeneracy of the subgraph induced by the vertex set $F(\ell,c)$ on the edge set $C_\ell\cup B$ is $O(\sqrt{\Delta})$, w.h.p.
\end{lemma}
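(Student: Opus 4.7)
The plan is to exhibit an acyclic orientation of the subgraph of $(V, C_\ell \cup B)$ induced by $F(\ell,c)$ in which every vertex has out-degree $O(\sqrt{\Delta})$, which immediately yields the desired degeneracy bound. For each $v \in F(\ell,c)$, let $t_v$ be the moment at which the counter $d(v)$ first reaches $(\ell-1)\sqrt{\Delta}+1$, i.e.\ the step at which $v$ crosses into level $\ell$. A useful preliminary observation is that every such $t_v$ lies inside the current epoch: since $v$ is fast, $d_B(v) > \sqrt{\Delta}$, while $v$ has level $\ell$, so $d(v) \le \ell\sqrt{\Delta}$; subtracting, the degree of $v$ at the start of the current epoch is strictly less than $(\ell-1)\sqrt{\Delta}$. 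I would order the vertices of $F(\ell,c)$ by $t_v$, breaking ties arbitrarily, and orient every edge of the induced subgraph from the endpoint with smaller $t$ to the one with larger $t$.

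Now fix $v \in F(\ell,c)$ and bound its out-degree. An out-neighbor $u$ satisfies $t_u > t_v$; let $\tau$ denote the insertion time of the edge $\{u,v\}$. In the case $\tau < t_v$, both endpoints had their degree counter strictly below $(\ell-1)\sqrt{\Delta}+1$ at time $\tau$, so $\lceil \max(d(u),d(v))/\sqrt{\Delta}\rceil \le \ell-1$ and the Process routine forwards the edge to the $g_\ell$-sketch. Since $u,v \in F(\ell,c)$ forces $g_\ell(u)=g_\ell(v)=c$, the edge is stored in $C_\ell$. Hence the number of such out-neighbors is at most $d_{C_\ell}(v)$, which is $O(\log n)$ with high probability by \Cref{lem:c-degree-limit}.

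In the case $\tau \ge t_v$, the edge $\{u,v\}$ is one of the edges touching $v$ inserted after $v$ has reached degree $(\ell-1)\sqrt{\Delta}+1$. Because $d(v) \le \ell\sqrt{\Delta}$ at query time, there can be at most $\sqrt{\Delta}$ such edges. Summing the two cases, the out-degree of $v$ in the oriented induced subgraph is $O(\log n) + \sqrt{\Delta} = O(\sqrt{\Delta})$, invoking the standing assumption $\Delta = \Omega(\log^2 n)$, and the high-probability clause is inherited directly from \Cref{lem:c-degree-limit}. The only delicate step is the first case: one must match the algorithm's level-based forwarding condition ($i > \lceil \max(d(u),d(v))/\sqrt{\Delta}\rceil$) with the common block membership to certify that every pre-$t_v$ edge is recorded in $C_\ell$. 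Everything else is a straightforward degree-increase counting argument tied to the definition of level $\ell$.
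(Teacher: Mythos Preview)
Your proof is correct and follows essentially the same approach as the paper's. Both define $t_v$ as the time vertex $v$ enters level $\ell$, orient edges by $t_v$, invoke \Cref{lem:c-degree-limit} to bound the $C_\ell$-contribution by $O(\log n)$, and bound the remaining out-edges by $\sqrt{\Delta}$ via the level constraint $d(v)\le\ell\sqrt{\Delta}$. The only cosmetic difference is that the paper splits the edge set as $C_\ell$ versus $B\setminus C_\ell$ and applies the orientation only to the latter piece, whereas you orient all edges at once and split on the insertion time $\tau$ relative to $t_v$; the two case analyses are equivalent.
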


\begin{proof}
  To every vertex $v \in F(\ell, c)$, define $t_v$ to be the length of the input stream at the time that the degree counter $d(v)$ of $v$ increased from $(\ell - 1) \sqrt{\Delta}$ to $(\ell - 1) \sqrt{\Delta} + 1$; in other words, the time that vertex $v$ entered level $\ell$. By \Cref{lem:c-degree-limit}, with high probability it holds that $d_{C_\ell}(v) = O(\log n)$, so the set $C_\ell$ contributes at most $O(\log n) = O(\sqrt{\Delta})$ to the degeneracy of the induced subgraph of the edge set $C_\ell\cup B$ on the vertex set $F(\ell, c)$.
  
  It thus suffices to prove that the degeneracy of the graph $H$ on vertices of $F(\ell, c)$ formed by edges from $B \setm C_\ell$ is $\le \sqrt{\Delta}$. Orient each edge $\{u,v\}$ in $H$ from $u$ to $v$ if $t_v \ge t_u$, and from $v$ to $u$ otherwise. We will prove that the out-degree of each vertex from $F(\ell, c)$ in $H$ will be $\le \sqrt{\Delta}$. 
  
  Fix some $x \in F(\ell, c)$; for each edge $(x,y) \in H$, let $d_{xy}$ be the value of $d(x)$ directly after the streaming algorithm processed the edge $\{x,y\}$. Because $x \in F(\ell, c)$, we have $d_{x y} \le \ell \sqrt{\Delta}$. Since $x,y \in F(\ell, c)$, $g_\ell(x) = g_\ell(y) = c$. Because $\{x,y\} \in B \setm C_\ell$, $\max(d_{x y}, d_{y x})$ must have been $\ge (\ell - 1) \sqrt{\Delta} + 1$ -- otherwise the algorithm would have recorded the edge $\{x,y\}$ in $C_\ell$. Now the orientation of the edge applies: because $t_y \ge t_x$,
  the vertex $x$ must have reached degree $ (\ell - 1) \sqrt{\Delta} + 1$ at the same time or before $y$ did. Thus $d_{x y} \le (\ell - 1) \sqrt{\Delta}$ implies $d_{y x} \le (\ell - 1) \sqrt{\Delta}$; since we know $\max(d_{x y}, d_{y x}) > (\ell - 1) \sqrt{\Delta}$, it follows $d_{x y} \le (\ell - 1) \sqrt{\Delta}$. Since the variable $d(x)$ increases with each new edge adjacent to $x$ that arrives, and $d_{x y}  \in [(\ell - 1) \sqrt{\Delta} + 1, \ell  \sqrt{\Delta}]$ for all out-edges $(x,y)$ of $x$ in $H$, we conclude by the pigeonhole principle that $x$ has out-degree $\le \sqrt{\Delta}$ in $H$. This completes the proof that the degeneracy of $H$ is $\sqrt{\Delta}$, and thus of the lemma.

\end{proof}

\begin{lemma}
  Whenever queried, \Cref{alg:robust} outputs a proper coloring of the current graph $G$ and uses $O(\Delta^{5/2})$ colors w.h.p. 
\end{lemma}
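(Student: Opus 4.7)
The plan is to prove two things separately: (i)~the coloring output is a proper coloring of the current graph $G$, and (ii)~the total number of colors used is $O(\Delta^{5/2})$, both with high probability. Both tasks lean directly on the degree bounds from \Cref{lem:a-degree-limit,lem:c-degree-limit} and the degeneracy bound from \Cref{lem:advrob-degeneracy}, and a single union bound converts the various w.h.p.\ guarantees into one.

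For properness, I would first observe that the offline coloring routines for distinct blocks -- whether among the slow partition $\{S_\curr(c)\}_c$ or the fast partition $\{F(\ell,c)\}_{\ell,c}$ -- draw from disjoint fresh palettes, and the palettes used for slow versus fast vertices are also disjoint. Thus any edge of $G$ whose endpoints lie in different blocks is automatically properly colored, and it remains to handle edges internal to a single block. For a slow-slow edge $\{u,v\}$ with $h_\curr(u) = h_\curr(v) = c$, I would argue that the edge sits in $A_{\curr-1}\cup B$: if it arrived during the current epoch it is in $B$; otherwise it arrived in an earlier epoch and was fed to the appropriate $h$-sketch, where, being monochromatic under that hash, it was recorded in $A$. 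For a fast-fast edge $\{u,v\}$ with both endpoints in $F(\ell, c)$, I would argue that $\{u,v\}\in C_\ell \cup B$: if $\{u,v\}\notin B$, then since each of $u,v$ has accumulated more than $\sqrt{\Delta}$ incident buffer edges \emph{after} $\{u,v\}$'s arrival, both endpoints had degree at most $(\ell-1)\sqrt{\Delta}$ when the edge was processed, so by the Process loop the edge was delivered to the $g_\ell$-sketch and, being $g_\ell$-monochromatic, was stored in $C_\ell$. The internal offline routines -- standard $(\deg+1)$-coloring for the slow blocks and $(\text{degeneracy}+1)$-coloring for the fast blocks -- then properly color each block by definition.

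For the color count I would total the colors used on the slow and fast sides separately. On the slow side there are at most $\Delta^2$ blocks $S_\curr(c)$, and within each the maximum degree in $A_{\curr-1}\cup B$ is $O(\sqrt{\Delta})$: by \Cref{lem:a-degree-limit} the contribution from $A$ is $O(\log n)$ w.h.p., while by the very definition of a slow vertex the contribution from $B$ is at most $\sqrt{\Delta}$; the standing assumption $\Delta = \Omega(\log^2 n)$ absorbs the $\log n$ into $O(\sqrt{\Delta})$. Each slow block therefore uses $O(\sqrt{\Delta})$ fresh colors, yielding $\Delta^2\cdot O(\sqrt{\Delta})=O(\Delta^{5/2})$. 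On the fast side there are at most $\sqrt{\Delta}\cdot \Delta^{3/2}=\Delta^2$ blocks $F(\ell,c)$, and by \Cref{lem:advrob-degeneracy} each induced subgraph on $C_\ell\cup B$ has degeneracy $O(\sqrt{\Delta})$, so every block again needs $O(\sqrt{\Delta})$ colors, contributing another $O(\Delta^{5/2})$ in total. Summing gives $O(\Delta^{5/2})$ colors overall.

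The main obstacle is the fast-fast case of the properness argument: it rests on the ``fastness'' bookkeeping, namely the claim that an in-block edge $\{u,v\}$ absent from $B$ must have reached the $g_\ell$-sketch because both endpoints were still at level strictly below $\ell$ at the time of arrival. Pinning down this degree-growth inequality -- using that both endpoints are currently fast (so each has collected $>\sqrt{\Delta}$ edges during the present epoch) and each has final degree at most $\ell\sqrt{\Delta}$ -- is the most delicate step; once it is in place the remainder of the proof is routine assembly of \Cref{lem:a-degree-limit,lem:c-degree-limit,lem:advrob-degeneracy} together with a union bound.
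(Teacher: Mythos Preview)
Your proposal is correct and follows essentially the same approach as the paper: split into slow and fast blocks, argue properness by showing that every intra-block edge is captured in the relevant stored edge set, and then bound the palette per block via the degree/degeneracy bounds from \Cref{lem:a-degree-limit,lem:c-degree-limit,lem:advrob-degeneracy}. The only cosmetic difference is in the fast-fast case: you split on whether the edge lies in $B$ and deduce it lies in $C_\ell$ when it does not, whereas the paper splits on whether $\ell_{x,y}=\ell$ or $\ell_{x,y}<\ell$ and in the former case shows the edge must be among the last $\sqrt{\Delta}$ edges of one endpoint and hence in $B$; these are contrapositive versions of the same bookkeeping inequality, and your version is arguably the cleaner one.
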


\begin{proof}
  By \Cref{lem:a-degree-limit} and \Cref{lem:c-degree-limit}, with high probability,
  \begin{align}
      \max_{x \in V} \left( \sum_{i\in [\sqrt{\Delta}]} d_{C_i}(v) + \sum_{i\in [\Delta]} d_{A_i}(v) \right) = O(\log n) \label{eq:coloring-assumption}
  \end{align} We shall see that if this holds, then \Cref{alg:robust} will produce an $O(\Delta^{2.5})$ coloring of the graph.
  
  The total number of colors used is the sum of the number of colors used for the coloring of each of the subsets of vertices $S_\curr(c)$ (for $c \in [\Delta^2]$) and $F(\ell,c)$ (for $c \in \Delta^{3/2}, \ell \in \sqrt{\Delta}$). Because each of these subsets uses a fresh set of colors, and the subsets together disjointly cover the entire vertex set, the coloring output by \Cref{alg:robust} is valid if an only if all the individual subset colorings are valid.
  
  For each $c \in [\Delta^2]$, consider the set $S_\curr(c)$. For each edge $\{x,y\}$ in the graph, both of whose endpoints are in $S_\curr(c)$, we observe that either the edge  $\{x,y\}$ was added while the value of $\curr$ was less than it was now -- in which the algorithm would have stored $\{x,y\} \in A_\curr$, because $h_\curr(x) = h_\curr(y)$ -- or edge $\{x,y\}$ was added while $\curr$ had its current value -- in which case $\{x,y\}$ is in the set $B$. Thus, $A_\curr \cup B$ includes all the edges of the subgraph of $G$ induced by $S_\curr(c)$, so the degree + 1 coloring of $S_\curr(c)$ will be valid.
  
  Every vertex $x$ in $S_\curr(c)$ satisfies $\deg_{B}(x) \le \sqrt{\Delta}$, by the definition of the set $S$ of slow vertices. By Eq. \ref{eq:coloring-assumption}, $\deg_{C_\curr}(x) = O(\log n) = O(\sqrt{\Delta})$. Thus the maximum degree the edge set $C_\curr \cup B$ for any vertex in $S_\curr(c)$ will be $O(\sqrt{\Delta})$, and so a degree+1 coloring will only use $O(\sqrt{\Delta})$ colors.
  
  Now for $c \in [\Delta^{3/2}]$ and $\ell \in [\sqrt{\Delta}]$, consider the set $F(\ell, c)$ of vertices. To prove that the coloring of this set is correct, we must show that every edge $\{x,y\}$ which is contained in $G$, and which has both endpoints in $F(\ell, c)$, must be recorded in either $B$ or in $C_\ell$. Let $d_x$ and $d_y$ be the values of $d(x)$ and $d(y)$ after the \Cref{alg:robust} processes the edge $\{x,y\}$, i.e., after Line  \ref{step:deg-increase} has executed. We have two cases: either $\ell_{x,y} = \ceil{ \max(d_x,d_y)/\sqrt{\Delta} }$ is equal to $\ell$, or it must be less than $\ell$. If $\ell_{x,y} < \ell$, then the edge will be recorded in $C_\ell$ by Line \ref{step:fast-record}. Both the degree check and the check that  $g_\ell(x) = g_\ell(y)$ will pass, the latter because $x,y \in F(\ell, c)$ implies $g_\ell(x) = g_\ell(y) = c$. On the other hand, if $\ell_{x,y} = \ell$, then say without loss of generality that $\ceil{ d_x/\sqrt{\Delta} } = \ell$ -- this implies the degree of $x$ just after the edge $\{x,y\}$ was added was at least $(\ell - 1) \sqrt{\Delta} + 1$. Meanwhile, because $x \in F(\ell, c)$, the current degree of $x$ must be at most $(\ell - 1) \sqrt{\Delta}$. As each new edge adjacent to $x$ increases $d(x)$ by one, $\{x,y\}$ must be one of the $\sqrt{\Delta}$ most recent edges added adjacent to $x$. Since $x \in F$, the last $\sqrt{\Delta}$ edges adjacent to $x$ are all stored in $B$, and thus $\{x,y\} \in B$. The completes the proof that the coloring of $F(\ell, c)$ will be correct.
  
  By \Cref{lem:advrob-degeneracy}, the degeneracy of the subgraph induced by the vertex set $F(\ell, c)$ on edge set $C_\ell \cup B$ will be $O(\sqrt{\Delta})$, assuming Eq. \ref{eq:coloring-assumption} holds. As \Cref{alg:robust} computes a degeneracy+1 coloring of this graph, it will use $O(\sqrt{\Delta})$ colors.
  
  We have proven that each of the subsets of the form $S_\curr(c)$ or $F(\ell, c)$ will be properly colored using $O(\sqrt{\Delta})$ fresh colors. Since there are $2 \Delta^2$ such subsets in total, we conclude that algorithm \Cref{alg:robust} produces an $O(\Delta^{5/2})$ coloring of the graph as a whole.
\end{proof}

\begin{corollary}\label{cor:adv-space-color-tradeoff}
   By adjusting parameters of \Cref{alg:robust}, we can obtain a robust $O(\Delta^{(5 - 3\beta)/2})$-coloring algorithm using $O(n \Delta^\beta)$ space.
\end{corollary}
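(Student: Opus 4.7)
The plan is to re-run \Cref{alg:robust} with all of its parameters rescaled by a single knob $\beta \in [0,1]$. Specifically, we would set the buffer capacity to $n\Delta^\beta$ edges (replacing $n$), take a common level-width and fast/slow threshold of $w := \Delta^{(1+\beta)/2}$ (replacing $\sqrt{\Delta}$), use $E := \Delta^{1-\beta}$ epochs and $L := \Delta^{(1-\beta)/2}$ levels (so that $E\cdot n\Delta^\beta = n\Delta$ and $L\cdot w=\Delta$), and finally use palette sizes $K_1 := \Delta^{2-2\beta}$ for the $h_i$'s (slow blocks) and $K_2 := \Delta^{(3-3\beta)/2}$ for the $g_i$'s (fast blocks per level). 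Plugging $\beta=0$ recovers \Cref{alg:robust}, while $\beta=1$ degenerates to simply buffering the whole graph.

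With these choices, the space analysis follows the template of \Cref{lem:space-usage}. The buffer contributes $\tO(n\Delta^\beta)$ bits directly. For the $h_i$-sketches, a union bound combined with the same martingale argument as in \Cref{lem:a-degree-limit} shows that each vertex has at most $O(\log n + \Delta/K_1) = O(\Delta^{2\beta-1} + \log n)$ neighbors in any given $A_i$, and summing over the $E$ epochs yields total $A$-storage $\tO(n\cdot E\cdot \Delta/K_1) = \tO(n\Delta^\beta)$. Analogously, \Cref{lem:c-degree-limit} generalizes to give per-vertex $C_\ell$-degree $O(\log n + \Delta/K_2) = O(w)$, and summing over the $L$ levels gives total $C$-storage $\tO(n\Delta^\beta)$.

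Correctness of the coloring needs essentially no change. For the slow vertices, $|S_\curr(c)|$-coloring uses edges in $A_{\curr-1}\cup B$; each slow vertex has $\le w$ neighbors in $B$ by definition and $O(w)$ neighbors in $A_\curr$ by the concentration bound above, so a $(\deg+1)$-coloring per block spends $O(w)$ fresh colors, for a total of $K_1\cdot O(w) = O(\Delta^{(5-3\beta)/2})$ colors on slow vertices. For the fast vertices, the key inclusion ``edges with both endpoints in $F(\ell,c)$ lie in $B\cup C_\ell$'' is exactly the argument given before \Cref{lem:advrob-degeneracy}: it uses only the fact that the fast threshold equals the level width (both are $w$ here). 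The degeneracy argument of \Cref{lem:advrob-degeneracy} then gives degeneracy $O(w)$ in $C_\ell\cup B$ on $F(\ell,c)$, so each fast block costs $O(w)$ fresh colors, yielding $L\cdot K_2\cdot O(w) = \Delta^{(1-\beta)/2}\cdot \Delta^{(3-3\beta)/2}\cdot \Delta^{(1+\beta)/2} = O(\Delta^{(5-3\beta)/2})$ colors on fast vertices. Summing the two bounds proves the corollary.

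The main obstacle is bookkeeping rather than any new idea: one must verify that the $O(\log n)$ additive slack coming from the concentration inequalities in \Cref{lem:a-degree-limit} and \Cref{lem:c-degree-limit} is still absorbed by $w$, which needs only the standing assumption $\Delta = \Omega(\log^2 n)$; and one must check that the parameter choices minimize the maximum of the ``slow'' color count $K_1 w$ and the ``fast'' color count $LK_2 w = \Delta K_2$ subject to the space budget, which is exactly why the choice $w = \Delta^{(1+\beta)/2}$ is made (it balances these two quantities, so neither dominates). All other steps are direct reproductions of the proofs in \Cref{sec:coloring-robustly} with the new symbols substituted in.
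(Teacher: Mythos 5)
Your proposal is correct and matches the paper's own proof essentially line for line: both rescale the buffer capacity to $n\Delta^\beta$, the epoch count to $\Delta^{1-\beta}$, the level count to $\Delta^{(1-\beta)/2}$, the fast/slow threshold and level width to $\Delta^{(1+\beta)/2}$, and the two hash ranges to $\Delta^{2-2\beta}$ and $\Delta^{3(1-\beta)/2}$, then rerun the original space and color analyses with these symbols substituted. (Incidentally, your $C$-storage calculation of $\tO(n\Delta^\beta)$ is the correct value; the paper's displayed arithmetic for that line reports $O(n\Delta^{2\beta})$, which appears to be a small slip in exponent.)
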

\begin{proof}
   These parameter changes do not significantly affect the proofs of correctness for \Cref{alg:robust}. 

   As before, we assume that the powers of $\Delta$ given here are integers, and that $\Delta = \Omega(\log^2 n)$:
   \begin{itemize}
       \item Change the buffer replacement frequency (Line \ref{step:buffer-replace-check}) from $n$ to $n \Delta^\beta$. Because a graph stream with maximum degree $\Delta$ contains at most $n \Delta / 2$ edges, reduce the number of epochs from $\Delta$ to  $\Delta^{1-\beta}$. The \texttt{for} loops initializing, updating, and querying the variables $h_i$ and $A_i$ should have bounds adjusted accordingly.
       
       \item Reduce the range of the functions $h_i$ from $[\Delta^2]$ to $[\Delta^{2-2 \beta}]$. The expected number of edges stored in all of the sets $A_i$ will now be roughly:
       \begin{align*}
           \frac{\text{\# epochs} \times |G|}{\text{\# slow blocks}} = \frac{\Delta^{1 - \beta } \cdot O(n \Delta)}{\Delta^{2 - 2 \beta} } =   O(n \Delta^\beta) \,,
       \end{align*}
       and with high probability, the space usage should not exceed this by more than a logarithmic factor.
       
       \item Increase the threshold for a vertex to be considered "fast" from $\sqrt{\Delta}$ to $\Delta^{(1+\beta)/ 2}$. To match this, the level of a vertex will now be computed as $\ceil{ \frac{d(v)}{ \Delta^{(1+\beta)/ 2 }} }$, and the number of levels reduced from $\sqrt{\Delta}$ to  $\Delta^{(1-\beta)/ 2}$. Again, all of the \texttt{for} loops related to the fast zone of the algorithm need to have their bounds adjusted.
       
       \item Reduce the range of the functions $g_\ell$ from $[\Delta^{3/2}]$ to 
       $[\Delta^{(1-\beta) 3 / 2}]$. The expected number of edges stored in all of the sets $C_\ell$ will now be roughly:
       \begin{align*}
           \frac{\text{\# levels} \times |G|}{\text{\# fast blocks}} = \frac{\Delta^{(1-\beta)/ 2} \cdot O(n \Delta)}{\Delta^{(1-\beta) 3 / 2} } =   O(n \Delta^{2\beta}) \,.
       \end{align*}

   \end{itemize}
   
   The number of colors used by the vertices in the slow zone will be:
    \begin{align*}
       \text{\# slow blocks} \times (O(\text{\# fast threshold}) + O(\log n)) = \Delta^{2-\beta} O(\Delta^{(1+\beta)/2}) = O(\Delta^{(5 - 3 \beta) / 2}) \,,
   \end{align*}
   and by the fast zone:
   \begin{align*}
       \text{\# levels} \times \text{\# fast blocks} \times \text{O(\# fast threshold}) + \log n)) = \Delta^{(1-\beta) / 2} \Delta^{(1-\beta) 3/2} O(\Delta^{(1+\beta) / 2} ) = O(\Delta^{(5 - 3 \beta) / 2}) \,.
   \end{align*}
   Combining the two, we find the modified algorithm produces a $O(\Delta^{(5 - 3 \beta) / 2})$ coloring with high probability.
\end{proof}

\subsection{A Randomness-Efficient Robust Algorithm}

\begin{theorem}
\Cref{alg:lowrandom} is an adversarially robust $O(\Delta^3)$ coloring algorithm, which uses $\tO(n)$ bits of space (including random bits used by the algorithm).
\end{theorem}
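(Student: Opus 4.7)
My plan is to design \Cref{alg:lowrandom} as a simplification of \Cref{alg:robust} that dispenses with the slow/fast dichotomy entirely and replaces the fully random hash functions by pseudorandom ones drawn from a limited-independence family. Concretely, I would partition the stream into $O(\Delta)$ epochs (each processing up to $n$ new edges into the explicit buffer $B$), and for each epoch $i \in [\Delta]$ keep a single hash function $h_i : V \to [\Delta^2]$ drawn independently from a $k$-wise independent family with $k = \Theta(\log n)$ (say, the degree-$(k-1)$ polynomial-hash family over $\FF_p$ for a prime $p = \Theta(n^2)$). Each $h_i$-sketch $A_i$ stores only the $h_i$-monochromatic edges seen during epochs prior to $i$'s close. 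On query during current epoch $\curr$, for each color $c \in [\Delta^2]$ I form the block $V_{\curr}(c) = \{v : h_{\curr}(v) = c\}$ and $(\deg+1)$-color the subgraph induced by $V_{\curr}(c)$ on $A_{\curr-1} \cup B$ using a fresh palette. Each hash function is specified in $O(k \log n) = O(\log^2 n)$ bits, so across the $\Delta$ epochs the total random seed length is $O(\Delta \log^2 n) = \tO(n)$ as long as $\Delta = \tO(n)$.

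Next I would reuse the correctness skeleton from \Cref{alg:robust}: every edge $\{u,v\}$ of $G$ with $h_{\curr}(u) = h_{\curr}(v) = c$ was either inserted in the current epoch (hence it lies in $B$) or in some prior epoch (hence it is stored in $A_{\curr-1}$ because $h_{\curr}$-monochromatic edges are admitted by the sketch throughout). So the induced subgraph on each block is fully recoverable at query time, each block's coloring is proper, and distinct blocks use disjoint palettes. The color bound follows from the degree-in-sketch bound: each block has maximum degree $O(\Delta)$, so $(\deg+1)$-coloring consumes $O(\Delta)$ colors per block and $O(\Delta^2) \cdot O(\Delta) = O(\Delta^3)$ colors in total.

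The main technical step is re-establishing the analogue of \Cref{lem:a-degree-limit} under $k$-wise independence. As in that proof I would fix a vertex $x$, enumerate the (at most $\Delta$) edges $\{x,Y_k\}$ added adversarially in order, and define indicator variables $Z_{k,i}$ for the event that $\{x,Y_k\}$ is ultimately stored in $A_i$. The crucial observation for adversarial robustness is that before epoch $i$ completes, no output of the algorithm has depended on $h_i$ at all, and after epoch $i$ completes the sketch $A_i$ is frozen so further adversary choices cannot affect it; hence, conditioned on all prior events in a lexicographic enumeration, each $Z_{k,i}$ involves a value $h_i(Y_k)$ whose distribution (over the seed) is uniform even after exposing the prior $Z_{k',i'}$-values, provided the number of such prior evaluations of $h_i$ is at most $k-1$. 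I would set $k = \Theta(\log n)$ so that any concentration event involving up to $O(\log n)$ evaluations of a single hash is handled by the standard $k$-wise independent tail inequality (in place of \Cref{lem:forward-concentration}), yielding $\Pr\bigl[\sum_i d_{A_i}(x) \ge C \log n\bigr] \le n^{-3}$. A union bound over $x \in V$ finishes the concentration argument, and then the space bound of $\tO(n)$ follows as in \Cref{lem:space-usage}.

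The hard part will be giving a clean argument that limited independence does not interact badly with the adaptive adversary. In the fully random case, independence \emph{between} distinct $h_i$'s is automatic from the choice of independent seeds, while independence \emph{within} each $h_i$ was used implicitly when one claimed the edge stream fed into the $h_i$-sketch is independent of its randomness. With only $k$-wise independence within each hash, I must show carefully that the adversary's past outputs reveal at most $O(\log n)$ ``effective'' evaluations of $h_i$ relevant to the deviation event for $x$, which is exactly the role the enumeration and lexicographic conditioning play; once that is set up, the $k$-wise tail bound substitutes transparently for the Chernoff/Azuma step and no further changes are needed to the correctness and color-count arguments.
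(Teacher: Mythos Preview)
Your design and the paper's \Cref{alg:lowrandom} differ substantially. The paper does \emph{not} use $\Theta(\log n)$-wise independence with a single sketch per epoch. Instead it draws each hash $h_{i,j}$ from a \emph{4-independent} family, runs $P = \Theta(\log n)$ parallel copies per epoch, and bounds only the \emph{total} size $|D_{i,j}|$ (not per-vertex degrees) by a second-moment/Chebyshev argument: $\EE|D_{i,j}| \le 2n/\Delta$ and $\Var|D_{i,j}| \le 10n/\Delta$, so $\Pr[|D_{i,j}| \ge 7n/\Delta] \le 1/2$. Independence across the $P$ copies amplifies this to $n^{-10}$; at query time one picks any surviving copy. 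Each $D_{i,j}$ is explicitly capped at $7n/\Delta$ edges (marked $\bot$ if exceeded), so the space bound is \emph{deterministic}, and the only probabilistic claim is that some copy survives.

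Your proposed argument has a genuine gap. You want to re-run the martingale-style \Cref{lem:forward-concentration} on the sequence $(Z_{j,i})$ under limited independence, claiming that ``the number of prior evaluations of $h_i$ is at most $k-1$'' so the next value stays uniform. But in the lexicographic enumeration, the prior variables with the \emph{same} $i$ are $Z_{1,i},\ldots,Z_{j-1,i}$, and conditioning on them fixes $h_i(x),h_i(Y_1),\ldots,h_i(Y_{j-1})$---that is $j$ evaluations, with $j$ ranging up to $\Delta$. To keep $h_i(Y_j)$ uniform after that conditioning you would need $(\Delta{+}1)$-wise independence, not $\Theta(\log n)$-wise. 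Your ``at most $O(\log n)$ effective evaluations'' remark conflates two distinct things: the adversary indeed has \emph{no} information about $h_i$ (that is what the epoch structure buys, and it lets you treat the edge set fed to sketch $i$ as fixed), but the \emph{analysis} via \Cref{lem:forward-concentration} still conditions on all earlier $Z$'s, which reveals up to $\Delta$ hash values. A salvage is possible---once the input to sketch $i$ is fixed, bound $\Pr[\sum_j \indic_{h_i(x)=h_i(y_j)} \ge t]$ directly by a $k$th-moment tail inequality for $k$-wise independent Bernoullis and then combine across independent $h_i$'s---but that is a different argument from the one you sketched, and you would still need to handle the sum over all $\Delta$ epochs carefully. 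The paper sidesteps all of this: 4-wise independence plus Chebyshev on the global edge count, parallel repetition, and an explicit cap.
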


\begin{proof}
The only step of \Cref{alg:lowrandom} that an adversary could make fail is Line \ref{step:picking-empty-Dcj}.

By \Cref{lem:adv-d3-fail}, this happens with $1/\poly(n)$ probability. Assuming Line \ref{step:picking-empty-Dcj} does not fail, \Cref{lem:adv-d3-correct} proves that the output of the algorithm is a valid $(\Delta+1) \Delta^2$ coloring. Finally, \Cref{lem:adv-d3-space} verifies that \Cref{alg:lowrandom} uses at most $\tO(n)$ bits of space and of randomness.
\end{proof}

\begin{algorithm}[!htb]
  \caption{Randomness-Efficient Adversarially Robust $O(\Delta^{3})$-Coloring in Semi-Streaming Space \label{alg:lowrandom}}
  \begin{algorithmic}[1]
  
  \Statex \textbf{Input}: Stream of edge insertions of an $n$-vertex graph $G=(V,E)$
  \Statex
    \Statex \underline{\textbf{Initialize}:}
    \Statex Define $P := \ceil{10 \log n}$, and let $\ell = 2^{\floor{\log \Delta}}$ be the greatest power of $2$ which is $\le \Delta$
    \Statex Let $\cU$ be a 4-independent family of hash functions from $V$ to $[\ell^2]$, of size $\poly(n)$
    
    \For{ $i \in [\Delta], j \in [P]$} 
      \State $h_{i,j}$ be a uniformly random function from $\cU$ mapping $V$ to $[\ell^2]$
      \State $D_{i,j} \gets \emptyset$  \Comment{Either a set of $h_{i,j}$-monochromatic edges, or $\bot$ after invalidation}
    \EndFor
    \State $B \gets \emptyset$ \Comment{buffer of edges from this epoch}
    \State $\curr \gets 1$ \Comment{current epoch number}
    
  \Statex 
  \Statex \underline{\textbf{Process}(edge $\{u,v\}$):}
  \If{$|B| = n$} \label{step:lowrand-epoch-reset}
    \State  $B \gets \emptyset$; $\curr \gets \curr + 1$ \Comment{End current epoch, switch to next}
  \EndIf
  \State $B \gets B \cup \{\{u,v\}\}$; \Comment{Update current buffer}
  \For{$i$ from $\curr+1$ to $\Delta$, and $j \in [P]$}
    \If{$h_{i,j}(u) = h_{i,j}(v)$} \Comment{For $h_{i,j}$-monochromatic edges...}
      \If{$D_{i,j} \ne \bot \land |D_{i,j}| < \frac{7 n}{\Delta}$} \label{step:wiping-overfull-Dcj-cond}
        \State $D_{i,j} \gets D_{i,j} \cup \{\{u,v\}\}$ \Comment{Record edge in $D_{i,j}$ if there is space}
      \Else
        \State $D_{i,j} \gets \bot$ \label{step:wiping-overfull-Dcj-action} \Comment{Wipe buffer $D_{i,j}$ if it gets too large}
      \EndIf
    \EndIf
  \EndFor
      
  \Statex
  \Statex \underline{\textbf{Query}():}
  \State Let $k = \min \{ j \in [P] : D_{\curr, j} \ne \bot \}$ \label{step:picking-empty-Dcj}  \Comment{This can fail if all $D_{\curr, j} = \bot$}
  \State Let $\chi = $ greedy coloring of $D_{\curr, k} \cup B$
  \State Output the coloring where $y \in V$ is assigned $(\chi(y), h_{\curr,j}(y)) \in [(\Delta+1)] \times [\ell^2]$

  
      
  \end{algorithmic}
\end{algorithm}

\begin{lemma}\label{lem:adv-d3-fail}
  Line \ref{step:picking-empty-Dcj} of \Cref{alg:lowrandom} will execute successfully, with high probability, on input streams provided by an adaptive adversary.
\end{lemma}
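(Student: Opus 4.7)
The plan is to show that for each fixed epoch index $i$, the probability of the event ``$D_{i, 1} = \cdots = D_{i, P} = \bot$ at the moment epoch $i$ begins'' is at most $n^{-\Omega(1)}$, and then to union-bound over the at most $\Delta$ possible epoch values. Since $D_{i, \cdot}$ is never modified once epoch $i$ starts --- the update loop runs over $i' \ge \curr + 1 > i$ --- this event coincides with the failure event for every query issued during epoch $i$, so this single bound settles the lemma.

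The first key step is an independence argument. The only place in the algorithm where values of $h_{i', \cdot}$ are revealed to the adversary is in the Query outputs during epoch $i'$, and no value of $h_{i, \cdot}$ is used in any output produced before epoch $i$ begins. Combined with the fact that $h_{i, 1}, \ldots, h_{i, P}$ are drawn independently from $\cU$, this implies that the joint distribution of $(h_{i, j})_{j \in [P]}$, conditional on the edge stream $E_{<i}$ that the algorithm has received before epoch $i$, is still that of $P$ independent uniform elements of $\cU$. I would formalize this by induction on the epoch index, tracking which random hashes can possibly appear in the algorithm's transcript up to any fixed point in time.

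Given this independence, the remaining calculation is routine. Condition on $E_{<i}$; since the graph's maximum degree is $\Delta$, we have $|E_{<i}| \le n\Delta/2$ deterministically. Because $\ell > \Delta/2$ by its definition, the expected number of $h_{i, j}$-monochromatic edges in $E_{<i}$ is at most $|E_{<i}|/\ell^2 \le 2n/\Delta$. By Markov's inequality, the probability that this count reaches the invalidation threshold of $7n/\Delta$ is at most $2/7$. Multiplying across the $P$ conditionally independent choices of $j$ gives
\[
  \Pr[\,\forall j \in [P] :\, D_{i, j} = \bot \mid E_{<i}\,] \;\le\; (2/7)^{P} \;\le\; (2/7)^{10 \log n} \;=\; n^{-\Omega(1)} \,,
\]
and taking expectations over $E_{<i}$ and then union-bounding over $i \in [\Delta]$ completes the argument. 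The main subtlety will lie in the independence step: an adaptive adversary chooses future edges based on past outputs, so one must check carefully that no output prior to epoch $i$ exposes any information about $(h_{i, j})_j$. Once this is nailed down, the analysis of each individual sketch $D_{i, j}$ reduces to the non-adaptive case in which Markov's inequality applies off the shelf.
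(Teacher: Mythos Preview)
Your proof is correct and follows the same overall skeleton as the paper's: reduce to a fixed epoch $i$ by arguing that nothing output before epoch $i$ depends on $(h_{i,j})_j$, bound the probability that a single $D_{i,j}$ overflows, and then use the independence of the $P$ copies.

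The one genuine difference is in the per-copy tail bound. The paper computes both $\EE|D_{i,j}|\le 2n/\Delta$ and, via the $4$-independence of $\cU$, $\Var|D_{i,j}|\le 10n/\Delta$, and then applies Chebyshev to get $\Pr[|D_{i,j}|\ge 7n/\Delta]\le 1/2$. You instead apply Markov's inequality directly to the expectation and obtain the sharper bound $2/7$ with no variance calculation at all. Your route is strictly simpler and shows that the $4$-independence assumption on $\cU$ is not actually needed for this lemma; $2$-universality (so that $\Pr[h_{i,j}(u)=h_{i,j}(v)]\le 1/\ell^2$) already suffices. The paper's approach buys nothing extra here, so your simplification is a genuine improvement.
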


\begin{proof}
  We first remark that the time range in which \Cref{alg:lowrandom} updates a given set $D_{i,j}$ is disjoint from and happens before \Cref{alg:lowrandom} first uses the set $D_{i,j}$. The set $D_{i,j}$ is only updated when $\curr < i$; and only used in the query routine when $\curr = i$. Consequently, looking at the outputs of the algorithm does not help an adversary ensure any property of $D_{i,j}$. It suffices, then, to prove that for a given $i$, that Line \ref{step:picking-empty-Dcj} succeeds with high probability on any fixed input stream.
  
  Let $G$ be the graph encoded by the first $n (i - 1)$ edges of the input stream. We will prove that for each $j \in [P]$,
  \begin{align}
    \Pr\left[D_{i,j} \ge \frac{7 n}{\Delta}\right] \le \frac{1}{2} \,. \label{eq:adv-d3-fail-claim}
  \end{align}
  Since the $h_{i,j}$ are chosen independently, the event from Eq. \ref{eq:adv-d3-fail-claim} is true for all values of $j \in P$ is $\le (1/2)^P \le 1/n^{10}$; thus Line \ref{step:picking-empty-Dcj} succeeds with high probability.
  
  Now fix $j$; for each $v \in V$, and $b \in [\ell^2]$, let $X_{v,b}$ be the indicator random variable which is $1$ if $h_{i,j}(v) = b$. We have
  \begin{align*}
    |D_{i,j}| = \sum_{\{u,v\} \in G} \sum_{b \in [\ell^2]} X_{u,b} X_{v,b} \,.
  \end{align*}
  Because $h_{i,j}$ is drawn from a 4-independent family, in particular we have $\Pr[h_{i,j}(u) = h_{i,j}(v) = b] = 1/\ell^4$, so
  \begin{align*}
    \EE |D_{i,j}| = \sum_{\{u,v\} \in G} \frac{1}{\ell^2} = \frac{ |G|}{\ell^2} \le \frac{4 |G|}{\Delta^2} \,,
  \end{align*}
  and, letting $P_3(G) = \{ \{u,v,w\} \in V^3 : \{u,v\} \in G \land \{v,w\} \in G \}$ be the set of $\le |G|\Delta$ paths of length 2,
  \begin{align*}
    \Var |D_{i,j}| &= \EE |D_{i,j}|^2 - (\EE |D_{i,j}|)^2  \\
        & =  \EE \left(\sum_{\{u,v\} \in G} \sum_{b \in [\ell^2]} X_{u,b} X_{v,b}\right)^2 - (\EE |D_{i,j}|)^2 \\
        & \le \sum_{\{u,v\} \in G} \sum_{ \{v ,y\} \in G : \{u,v\} \cap \{v,y\} = \emptyset } \sum_{b \in [\ell^2]} \sum_{c \in [\ell]^2} \EE X_{u,b} X_{v,b} X_{w,c} X_{y,c} \\
        & \qquad + \sum_{\{u,v,w\} \in P_3(G)} \sum_{b \in [\ell^2]} \EE[X_{u,b} X_{v,b}X_{w,b}] \\
        & \qquad  + \sum_{\{u,v\} \in P_3(G)} \sum_{b \in [\ell^2]} \EE[X_{u,b} X_{v,b}] - (\EE |D_{i,j}|)^2 \,.
  \end{align*}
  By the 4-independence of the family from which $h_{i,j}$ is drawn, we have $\EE[X_{u,b} X_{v,b} X_{w,c} X_{y,c}] = 1/\ell^8$, $\EE[X_{u,b} X_{v,b} X_{w,b}] = 1/\ell^6$ and $\EE[X_{u,b} X_{v,b}] = 1 / \ell^4$, so:
  \begin{align*}
     \Var |D_{i,j}| \le \frac{|G|^2}{\ell^4} - \left(\frac{|G|}{\ell^2}\right)^2 + \frac{|G|\Delta}{\ell^4} + \frac{|G|}{\ell^2} \le \frac{16 |G|}{\Delta^3} + \frac{4 |G|}{\Delta^2} \,.
  \end{align*}
  Because a graph of maximum degree $\Delta$ can contain at most $\frac{n \Delta}{2}$ edges, $|G| \le \frac{n \Delta}{2}$, so:
  \begin{align*}
      \EE |D_{i,j}| \le \frac{2 n}{\Delta} \qquad \text{and} \qquad \Var |D_{i,j}| \le \frac{10 n}{\Delta} \,.
  \end{align*}
  By Chebyshev's inequality:
  \begin{align*}
    \Pr\left[|D_i,j| \ge \frac{7 n}{\Delta}\right] \le \Pr\left[||D_i,j| - \EE|D_i,j| | \ge \frac{5 n}{\Delta}\right] \le \frac{(10 n)/\Delta}{((5 n)/\Delta)^2} \le \frac{10}{25} \frac{\Delta}{n} \le \frac{1}{2} \,.
  \end{align*}
  This is precisely Eq. \ref{eq:adv-d3-fail-claim}.
\end{proof}

\begin{lemma}\label{lem:adv-d3-correct}
  If Line \ref{step:picking-empty-Dcj} does not fail, then \Cref{alg:lowrandom}
  outputs a valid $(\Delta+1)(\Delta^2)$ coloring of the input graph.
\end{lemma}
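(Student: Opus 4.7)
The plan is to verify correctness by a case analysis on each edge $\{u,v\}$ of the current graph $G$, showing that the product coloring $y \mapsto (\chi(y), h_{\curr,k}(y))$ is proper, and then bounding the palette size.

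First I would establish the following invariant: for any pair $(i,j)$ with $D_{i,j} \ne \bot$ at a given moment in the stream, the set $D_{i,j}$ contains \emph{every} $h_{i,j}$-monochromatic edge that has been inserted while $\curr < i$. This is immediate from the update logic: the only moments $D_{i,j}$ changes are when an $h_{i,j}$-monochromatic edge is processed with $\curr < i$; at such a moment, if $D_{i,j}$ is still a set (not $\bot$) and is below capacity $7n/\Delta$, the edge is added. Once $D_{i,j}$ is set to $\bot$ it stays $\bot$, so the hypothesis ``$D_{i,j} \ne \bot$ at query time'' means $D_{i,j}$ was never wiped and hence has always accumulated every such edge.

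Next, at the query the algorithm selects $k$ with $D_{\curr,k} \ne \bot$ (guaranteed by the assumption that Line~\ref{step:picking-empty-Dcj} does not fail). The buffer $B$ explicitly stores every edge inserted during the current epoch $\curr$, and by the invariant $D_{\curr,k}$ contains every $h_{\curr,k}$-monochromatic edge inserted in epochs $1,\dots,\curr-1$. Therefore $D_{\curr,k}\cup B$ contains every edge $\{u,v\}\in G$ such that $h_{\curr,k}(u)=h_{\curr,k}(v)$. Since $D_{\curr,k}\cup B$ is a subgraph of $G$, its maximum degree is at most $\Delta$, so the greedy coloring $\chi$ is a proper $(\Delta+1)$-coloring of $D_{\curr,k}\cup B$.

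Finally I would check propriety of the output on an arbitrary edge $\{u,v\}\in G$. If $h_{\curr,k}(u)\ne h_{\curr,k}(v)$, the two vertices differ in the second coordinate of their output color and we are done. Otherwise $h_{\curr,k}(u)=h_{\curr,k}(v)$, in which case the previous paragraph places $\{u,v\}$ in $D_{\curr,k}\cup B$, and so $\chi(u)\ne\chi(v)$; the two vertices differ in the first coordinate. The palette is a subset of $[\Delta+1]\times[\ell^2]$, of size at most $(\Delta+1)\ell^2\le(\Delta+1)\Delta^2$. There is no real obstacle here: the only subtlety is confirming the ``never wiped'' characterization of $D_{\curr,k}\ne\bot$ and noting that the algorithm's output formula in the pseudocode should read $h_{\curr,k}(y)$ (the chosen index), not $h_{\curr,j}(y)$, which appears to be a typo.
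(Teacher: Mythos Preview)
Your proof is correct and follows essentially the same approach as the paper: you establish that $D_{\curr,k}\cup B$ contains every $h_{\curr,k}$-monochromatic edge of $G$ (using the ``never wiped'' invariant for $D_{\curr,k}\ne\bot$), then do the same two-case analysis on the second versus first coordinate of the output color, and finally bound the palette size by $(\Delta+1)\ell^2\le(\Delta+1)\Delta^2$. Your explicit justification that the greedy coloring uses at most $\Delta+1$ colors (since $D_{\curr,k}\cup B\subseteq G$ has maximum degree $\le\Delta$) and your observation of the $h_{\curr,j}$ versus $h_{\curr,k}$ typo are both accurate additions.
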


\begin{proof}
  We need to prove that for each edge $\{u,v\}$ in the graph, the coloring assigns
  different values to $u$ and to $v$. Let $k$ be the value of $k$ chosen at Line \ref{step:picking-empty-Dcj}, and let $c$ be the current value of $\curr$. Since
  $D_{\curr,k} \ne \bot$, the set $D_{c,k}$ contains all edges $\{a,b\}$ in 
  the graph for which $h_{c,k}(a) = h_{c,k}(b)$, and, at the time the edge was added, $\curr < c$. All edges for which $\curr = c$ held at the time the edge was added are stored in 
  $B$. If $h_{\curr,k}(u) \ne h_{\curr,k}(v)$, then the colors $(\chi(u),h_{\curr,k}(u))$
  and  $(\chi(v),h_{\curr,k}(v))$ assigned to $u$ and $v$ differ in the second coordinate.
  Otherwise, the edge $\{u,v\} \in D_{\curr,k} \cup B$, so the greedy coloring of 
  $D_{\curr,k} \cup B$ will assign different values to $\chi(u)$ and $\chi(v)$. This ensures the colors assigned to $u$ and $v$ differ in the first coordinate.
  
  Finally, the output color space $[(\Delta+1)] \times [\ell^2]$ has size $\le (\Delta+1) \Delta^2 = O(\Delta^3)$.
\end{proof}

\begin{lemma}\label{lem:adv-d3-space}
  \Cref{alg:lowrandom} requires only $\tO(n)$ bits of space; this includes
  random bits.
\end{lemma}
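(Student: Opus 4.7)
The plan is to bound each component of the algorithm's state separately, using the structure of the algorithm (most importantly the overflow check in Line~\ref{step:wiping-overfull-Dcj-cond}), and then sum. The four contributions to total space are: (i)~the buffer $B$, (ii)~the monochromatic-edge sets $D_{i,j}$, (iii)~the hash functions $h_{i,j}$ describing the randomness, and (iv)~miscellaneous state such as the epoch counter $\curr$. I would first observe that $(\Delta + 1)(\ell^2) = O(\Delta^3)$ so every color, edge endpoint, and counter fits in $O(\log n)$ bits.

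For (i), the buffer-reset condition in Line~\ref{step:lowrand-epoch-reset} forces $|B| \le n$ at all times, so $B$ uses $O(n \log n)$ bits. For (iv), $\curr \in [\Delta]$ takes $O(\log n)$ bits. The interesting accounting is for (ii) and (iii).

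For (ii), the key observation is that Line~\ref{step:wiping-overfull-Dcj-cond} deterministically enforces $|D_{i,j}| \le 7n/\Delta$ whenever $D_{i,j} \ne \bot$; once the threshold would be exceeded, Line~\ref{step:wiping-overfull-Dcj-action} collapses $D_{i,j}$ to the single symbol $\bot$, which costs $O(1)$ bits to represent. There are $\Delta \cdot P = O(\Delta \log n)$ such sets in total, each containing at most $7n/\Delta$ edges of $O(\log n)$ bits apiece. Summing gives at most $O(\Delta \log n) \cdot (n/\Delta) \cdot O(\log n) = O(n \log^2 n) = \tilde O(n)$ bits for all the $D_{i,j}$. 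This bound does not depend on any probabilistic argument, so it holds for every execution against every adversary.

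For (iii), since $\cU$ is a $4$-independent family of functions $V \to [\ell^2]$ with $|\cU| = \poly(n)$ (e.g., degree-$3$ polynomials over $\FF_q$ for a prime $q = \Theta(n)$), each $h_{i,j}$ is specified by $O(\log n)$ bits of seed. With $\Delta \cdot P = O(\Delta \log n)$ hash functions drawn from $\cU$, the total random-bit budget is $O(\Delta \log^2 n)$, and since we may assume $\Delta \le n$ (otherwise the bound on $\Delta$ is vacuous), this is $\tilde O(n)$. Adding (i)--(iv) gives $\tilde O(n)$ in total, completing the proof. I do not anticipate a real obstacle; the only subtlety is making sure the deterministic overflow cap on $|D_{i,j}|$ is invoked before any probabilistic concentration argument, so that the space bound holds unconditionally (unlike the correctness bound from Lemma~\ref{lem:adv-d3-fail}).
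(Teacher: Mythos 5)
Your proposal is correct and follows essentially the same decomposition and accounting as the paper's proof: bound $B$ by $O(n\log n)$, bound the $D_{i,j}$ collectively by $O(\Delta \log n)\cdot (n/\Delta)\cdot O(\log n)$ using the deterministic overflow cap, bound the hash seeds by $O(\Delta \log^2 n)$, and sum. Your explicit remark that the $D_{i,j}$ bound is deterministic (unlike the correctness guarantee) is a nice clarifying touch but does not change the argument.
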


\begin{proof}
  Because $|\cU| = O(\poly n)$, picking a random hash function from $\cU$ requires only $O(\log n)$ random bits. As the algorithm stores $\Delta P = O(\Delta \log n)$ of these hash functions as $(h_{i,j})_{i \in [\Delta], j \in [P]}$, the total space needed by these function is $O(\Delta (\log n)^2)$.
  
  Next, for each of the sets of edges $D_{i,j}$, for $i \in [\Delta], j \in [P]$, Lines \ref{step:wiping-overfull-Dcj-cond} through \ref{step:wiping-overfull-Dcj-action} ensure that $|D_{i,j}|$ is always $\le \frac{7 n}{\Delta} + 1$; sets that grow too large are replaced by $\bot$. Since edges can be stored using $O(\log n)$ bits, the total space
  usage of all the $D_{i,j}$ is $O\left(\frac{n}{\Delta} \right) \Delta P \cdot O(\log n) = O\left(n (\log n)^2\right)$. Similarly, the buffer $B$ never contains more than $n$ edges, since it is reset when the condition of Line \ref{step:lowrand-epoch-reset} is true; thus $B$ can be stored with $O(n \log n)$ bits. The counter $\curr$ is negligible.
  
  In total, the algorithm needs $O\left(\Delta (\log n)^2\right) + O\left(n (\log n)^2\right) = \tO(n)$ bits of space.
\end{proof}

\bibliographystyle{alpha}
\bibliography{refs}

\appendix

\section{Deferred Proofs of Technical Lemmas}\label{sec:deferred-proofs}

\begin{lemma}[Restatement of \Cref{lem:find-iset}]
    Given a graph $G$ with $m$ edges and $n$ vertices, one can find an independent set of size $\ge n^2 / (2m + n)$ in deterministic polynomial time.
\end{lemma}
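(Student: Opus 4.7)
The plan is to use a standard greedy algorithm: repeatedly identify a vertex $v$ of minimum degree in the current graph, add $v$ to the independent set $I$ being constructed, and then delete $v$ together with all of its neighbors from the graph. Continue until no vertices remain. Each iteration takes $O(n+m)$ time (find the min-degree vertex, then remove its closed neighborhood) and there are at most $n$ iterations, so the algorithm is clearly deterministic polynomial time.

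To analyze $|I|$, I would first show by induction on $|V(G)|$ that this procedure produces an independent set of size at least the Caro--Wei quantity $\sum_{v \in V(G)} 1/(d(v)+1)$. For the inductive step, let $v$ be the chosen minimum-degree vertex with degree $\delta$, and let $G' = G - N[v]$. Since every $u \in N[v]$ has $d_G(u) \ge \delta$ and $|N[v]| = \delta+1$, we get
\begin{align*}
  \sum_{u \in N[v]} \frac{1}{d_G(u)+1} \;\le\; \frac{|N[v]|}{\delta+1} \;=\; 1.
\end{align*}
Moreover, for each $u \notin N[v]$ the degree can only decrease from $G$ to $G'$, so $1/(d_{G'}(u)+1) \ge 1/(d_G(u)+1)$. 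Combining these two observations with the inductive hypothesis applied to $G'$ gives the required inequality $|I| \ge \sum_{v} 1/(d_G(v)+1)$.

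Finally, I would invoke Jensen's inequality for the convex function $x \mapsto 1/(x+1)$ on $[0,\infty)$: letting $\bar d = 2m/n$ denote the average degree,
\begin{align*}
  \sum_{v \in V(G)} \frac{1}{d(v)+1} \;\ge\; \frac{n}{\bar d + 1} \;=\; \frac{n}{(2m/n)+1} \;=\; \frac{n^2}{2m+n},
\end{align*}
which yields the claimed bound. No individual step is really a serious obstacle; the most delicate piece is the inductive accounting that shows the Caro--Wei sum decreases by at most $1$ whenever one vertex is added to $I$, and it hinges precisely on $v$ being chosen to have minimum degree.
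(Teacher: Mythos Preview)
Your proof is correct and follows essentially the same route as the paper: constructively attain the Caro--Wei bound $\sum_v 1/(\deg(v)+1)$ via a greedy closed-neighborhood-removal procedure, then apply Jensen's inequality to get $n^2/(2m+n)$. The only difference is the selection rule---you pick a minimum-degree vertex, while the paper picks $x$ minimizing $\sum_{y\in N[x]} 1/(\deg_{G[U]}(y)+1)$ and argues the minimum is at most the average (which equals $1$ by double counting); your rule is the standard one and makes the ``decrease by at most $1$'' step slightly more direct.
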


\begin{proof}
  We prove that we can in deterministic polynomial time find an independent set in graph $G$ of size $\ge \psi(G) := \sum_{x \in V} \frac{1}{\deg x + 1}$. By Jensen's inequality,
  \begin{align*}
     \psi(G) \ge \frac{|V|^2}{\sum_{x \in V} (\deg x + 1) } = \frac{n^2}{n + 2m}
  \end{align*}
  This is better than required for this lemma.
  
  The procedure is straightforward: let $U \gets V$ be the set of "uncovered" vertices, and $I \gets \emptyset$ the independent set, which we will progressively expand. While $U$ is not empty, pick $x \in U$ minimizing $\sum_{y \in N[x]} \frac{1}{\deg_{G[U]} (y) + 1}$, and remove the closed neighborhood $N[x]$ from $U$, and add $x$ to $I$.
  To prove that this produces a set $I$ of size $\ge \phi(G)$, we show that every time a new vertex is picked, $\phi(G[U])$ decreases by at most 1. First, note that:
  \begin{align*}
      \min_{x \in U} \sum_{y \in N[x]} \frac{1}{\deg_{G[U]} (y) + 1} \le \frac{1}{|U|} \sum_{x \in U} \sum_{y \in N[x]} \frac{1}{\deg_{G[U]} (y) + 1} = \frac{1}{|U|} \sum_{z \in U} \frac{|N[z]|}{\deg_{G[U]} (z) + 1} = \frac{|U|}{|U|} = 1
  \end{align*}
  Second, 
  \begin{align*}
    \phi(G[U]) - \phi(G[U \setm N[x]]) &= \sum_{z \in U} \frac{1}{\deg_{G[U\setm N[x]]} z + 1} - \sum_{z \in U\setm N[x]} \frac{1}{\deg_{G[U\setm N[x]]} z + 1} \\
        & = \sum_{z \in N[x]} \frac{1}{\deg_{G[U\setm N[x]]} z + 1} + \sum_{z \in U \setm N[x]} \left(\frac{1}{\deg_{G[U]}(z)+1} - \frac{1}{\deg_{G[U\setm N[x]}(z) + 1}\right) \\
        &\le \sum_{z \in U} \frac{1}{\deg_{G[U\setm N[x]]} z + 1} + \sum_{z \in U \setm N[x]} 0
  \end{align*}
  because $\deg_{G[U]}(z) \ge \deg_{G[U] \setm N[x]}(z)$. Combining these two inequalities gives $\phi(G[U \setm N[x]]) \ge \phi(G[U]) -  1$.

\end{proof}

\begin{lemma}[Restatement of \Cref{lem:forward-concentration}] Let $X_1,\ldots,X_k$ be a series of $\{0,1\}$ random variables, and $c_1,\ldots,c_k$ real numbers for which for all $i \in k$, $\EE[X_i \mid X_1,\ldots,X_{i-1}] \le c_i$. Then:
  \begin{align}
    \Pr\left[\sum_{i \in [k]} X_i \ge (1 + t) k c \right] \le 2^{- t k c} \qquad \text{assuming $t \ge 3$} \label{eq:fwd-conc-res}
  \end{align}
\end{lemma}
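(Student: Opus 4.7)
The plan is to prove this via the standard exponential moment (MGF) method, adapted to handle the conditional nature of the bound on $\EE[X_i \mid X_1,\ldots,X_{i-1}]$ rather than its unconditional analog. Concretely, for any $\lambda > 0$, I would apply Markov's inequality to the random variable $\exp(\lambda \sum_{i} X_i)$, and bound the MGF by unrolling the conditional expectations one variable at a time.

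The core estimate is that for each $i$, since $X_i \in \{0,1\}$, the random variable $e^{\lambda X_i}$ takes only values $1$ and $e^{\lambda}$; hence, conditional on the history, $\EE[e^{\lambda X_i} \mid X_1,\ldots,X_{i-1}] = 1 + (e^{\lambda}-1)\EE[X_i \mid X_1,\ldots,X_{i-1}] \le 1 + (e^{\lambda}-1)c \le \exp\bigl(c(e^{\lambda}-1)\bigr)$. Peeling off one factor at a time (tower rule) from the innermost conditional expectation outward gives $\EE\bigl[\exp(\lambda \sum_i X_i)\bigr] \le \exp\bigl(kc(e^{\lambda}-1)\bigr)$. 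Markov's inequality then yields
\[
    \Pr\Bigl[\sum_i X_i \ge (1+t)kc\Bigr] \le \exp\bigl(kc(e^{\lambda}-1) - \lambda (1+t)kc\bigr) \,.
\]
Optimizing over $\lambda$ by setting $\lambda = \ln(1+t)$ gives the classical Chernoff bound $\bigl(e^{t}/(1+t)^{1+t}\bigr)^{kc}$.

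The remaining step, and the only place where the hypothesis $t \ge 3$ is used, is to verify the numerical inequality $e^{t}/(1+t)^{1+t} \le 2^{-t}$ for $t \ge 3$, equivalently $(1+t)\ln(1+t) \ge t(1+\ln 2)$. At $t=3$ the LHS is $4\ln 4 \approx 5.545$ and the RHS is $3(1+\ln 2) \approx 5.079$, so the bound holds; for larger $t$ the LHS grows like $t \ln t$ whereas the RHS is linear in $t$, so the inequality is preserved (one can see this formally by taking a derivative and noting monotonicity for $t \ge 3$). This furnishes the stated bound $2^{-tkc}$.

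I do not expect any serious obstacle: the only subtlety is being careful that the bound on $\EE[X_i \mid \text{history}]$ is only in expectation (not almost sure dominance), which is exactly why the elementary identity $\EE[e^{\lambda X_i} \mid \cdot] = 1 + (e^\lambda -1)\EE[X_i \mid \cdot]$ for $\{0,1\}$-valued $X_i$ is the right tool, making the reduction to the standard unconditional Chernoff analysis entirely transparent. An alternative, equally clean route would be a coupling argument that produces i.i.d.\ Bernoulli$(c)$ variables $Y_i \ge X_i$ on an enlarged probability space and then invokes the standard Chernoff bound on $\sum Y_i$; I would mention this as a remark but prefer the self-contained MGF proof above.
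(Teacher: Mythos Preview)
Your proposal is correct and essentially identical to the paper's proof: both apply Markov's inequality to $e^{\lambda \sum X_i}$, peel off conditional expectations using $\EE[e^{\lambda X_i}\mid \cdot] \le e^{c(e^\lambda-1)}$, optimize with $\lambda=\ln(1+t)$ to reach $\bigl(e^t/(1+t)^{1+t}\bigr)^{kc}$, and then verify $(1+t)\ln(1+t)\ge(1+\ln 2)t$ for $t\ge 3$. The coupling alternative you mention is not in the paper, but your main argument matches it step for step.
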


\begin{proof}
  This mostly repeats the proof of the Chernoff bound, albeit using bounds on the conditional expectations instead of independence. First, note that for any $s > 1$, $i \in [k]$, because $\EE[X_i \mid X_1,\ldots,X_{i-1}] \le c$, we also have  $\EE[e^{s X_i} \mid X_1,\ldots,X_{i-1}] \le c (e^s - 1) + 1 \le e^{c (e^s - 1)}$. Then with $s = \ln(1 + t)$, 
  \begin{align*}
    \Pr[\sum_{i \in [k]} X_i \ge c k (1 + t)] &= \Pr[e^{s \sum_{i \in [k]} X_i} \ge e^{s c k (1 + t)}] \\
        &\le e^{-s c k (1 + t)} \EE\left[ e^{s \sum_{i \in [k]} X_i} \right] &&\text{by Markov} \\
        & = e^{-s c k (1 + t)} \EE\left[e^{s X_1} \EE\left[e^{s X_2} \cdots \EE[ e^{s X_k} \mid X_1,\ldots,X_{k-1} ] \cdots \mid X_1 \right] \right] \\
        & \le  e^{-s c k (1 + t)} (e^{c (e^s - 1)})^k \\
        & = \left( \frac{e^t}{(1+t)^{1+t}} \right)^{c k} \,.
  \end{align*}
  For all $t \ge 3$, we have $(1+t) \ln(1+t) \ge (1 + \ln(2)) t$, so:
  \begin{align*}
     \left( \frac{e^t}{(1+t)^{1+t}} \right) = e^{t - (1+t) \ln(1+t)} \le e^{-t \ln 2} = 2^{-t} \,,
  \end{align*}
  which implies Eq. \ref{eq:fwd-conc-res}.
\end{proof}

\begin{lemma}[Restatement of \Cref{lem:gw-property}]
For $p \ge 8 n \log n$, and $\bw = (w_{x,\bj})_{x \in U, \bj \in \b^k}$ there is a function $g_{\bw} \colon U \times [p] \to \b^k$ satisfying:
    \begin{align*}
          \frac{|g_{\bw}^{-1}(x,\bj)|}{p} \le 
          w_{x,\bj} \left(1 + \frac{1}{8\log n}\right) \,, \quad
          \forall~ \bj \in \b^k
    \end{align*}
\end{lemma}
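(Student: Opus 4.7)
\medskip

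\noindent\textbf{Proof Plan.} The plan is to construct $g_\bw$ explicitly, vertex by vertex, by partitioning $[p]$ into blocks whose sizes are near-integer roundings of the ideal continuous counts $p \cdot w_{x,\bj}$. The rounding will introduce an additive error of at most $1$ per block, and the crux of the argument is to show that this additive error is dominated by the multiplicative slack $p \cdot w_{x,\bj} \cdot \frac{1}{8 \log n}$ for every $\bj$ with $w_{x,\bj} > 0$, which reduces to establishing a uniform lower bound $w_{x,\bj} \ge 1/n$ on the nonzero entries.

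The key quantitative step, which I expect to be the main (minor) obstacle, is to argue that whenever $w_{x,\bj} > 0$, in fact $w_{x,\bj} \ge 1/n$. Recall $w_{x,\bj} = \slackwrt{P_{x,\bj}}{x}\big/\sum_{\bi} \slackwrt{P_{x,\bi}}{x}$. Since slacks are nonnegative integers by \cref{eq:slack-def}, the numerator is at least $1$ whenever it is positive. For the denominator, using $\slackwrt{P_{x,\bi}}{x} \le |P_{x,\bi} \cap L_x|$ from \cref{eq:slack-def} and the fact that $\{P_{x,\bi}\}_{\bi \in \b^k}$ is a partition of $P_x$, we obtain
\begin{align*}
  \sum_{\bi \in \b^k} \slackwrt{P_{x,\bi}}{x} \le \sum_{\bi \in \b^k} |P_{x,\bi} \cap L_x| = |P_x \cap L_x| \le |L_x| \le \Delta + 1 \le n \,,
\end{align*}
so indeed $w_{x,\bj} \ge 1/n$ whenever positive.

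With this lower bound in hand, I would define $g_\bw$ as follows. For each $x \in U$, let $S_x := \{\bj \in \b^k : w_{x,\bj} > 0\}$. Set the initial integer counts $n_{x,\bj} := \floor{p \cdot w_{x,\bj}}$ for $\bj \in S_x$ and $n_{x,\bj} := 0$ otherwise. Since each floor drops at most $1$, we have $\sum_\bj n_{x,\bj} \ge p - |S_x|$, so pick any subset of $S_x$ of size $p - \sum_\bj n_{x,\bj}$ and add $1$ to each of its counts, yielding $\sum_\bj n_{x,\bj} = p$ exactly. Now partition $[p]$ arbitrarily into blocks $I_{x,\bj}$ with $|I_{x,\bj}| = n_{x,\bj}$ for each $\bj \in S_x$, and define $g_\bw(x,t) := \bj$ for $t \in I_{x,\bj}$.

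Finally, I verify the claimed bound. For $\bj \notin S_x$, both sides equal $0$. For $\bj \in S_x$, we have $|g_\bw^{-1}(x,\bj)| = n_{x,\bj} \le \floor{p \cdot w_{x,\bj}} + 1 \le p \cdot w_{x,\bj} + 1$. Combining the lower bound $w_{x,\bj} \ge 1/n$ with the hypothesis $p \ge 8 n \log n$ gives $p \cdot w_{x,\bj} \ge 8 \log n$, equivalently $1 \le p \cdot w_{x,\bj} / (8 \log n)$. Therefore
\begin{align*}
  \frac{|g_\bw^{-1}(x,\bj)|}{p} \le w_{x,\bj} + \frac{1}{p} \le w_{x,\bj}\left(1 + \frac{1}{8 \log n}\right) \,,
\end{align*}
as desired.
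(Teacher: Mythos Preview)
Your proof is correct and takes essentially the same approach as the paper: an explicit rounding construction for each $x$, with the crux being the lower bound $w_{x,\bj} \ge 1/n$ for nonzero weights. The only cosmetic difference is in the rounding scheme---the paper allocates blocks of size $\floor{p\,w_{x,\bj}(1+1/(8\log n))}$ in order and stops once $[p]$ is covered, whereas you allocate $\floor{p\,w_{x,\bj}}$ and then distribute the leftover units; both yield the same inequality. Your justification of the denominator bound $\sum_{\bi}\slackwrt{P_{x,\bi}}{x}\le |P_x\cap L_x|\le n$ is in fact more explicit than the paper's one-line appeal to ``every slack value is at most $n$.''
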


\begin{proof}
As
$\sum_{\bj\in\b^k} w_{x,\bj} = 1$, we can do this by directing the first
$\floor{p w_{x,\bzero} (1 + 1/(8 \log n))}$ entries of $g_{\bw}(x, \cdot)$ to the
pattern $\bzero$; the next $\floor{p w_{x,\bone} (1 + 1/(8 \log n))}$ entries
to the pattern $\bone$; and so on (where $\bzero, \bone, \ldots$ is an
enumeration of $\b^k$), stopping as soon as all $p$ entries of $g_{\bw}(x, \cdot)$
are filled. 

We now argue that $g_{\bw}$ is well-defined, i.e., that every entry $g_{\bw}(x, \cdot)$ is
indeed filled. Examining \cref{eq:slack-def}, since every slack value is at most
$n$, every nonzero value $w_{x,\bj}$ is $\ge 1/n$. Recalling that $p \ge 8n
\log n$, we observe that for such $\bj$,
\begin{align*}
  \floor{p w_{x,\bj} \left(1+\frac{1}{8 \log n}\right)} 
  \ge p w_{x,\bj} + \frac{p w_{x,\bj}}{8 \log n} - 1
  \ge p w_{x,\bj} + \frac{(8n \log n) (1/n)}{8 \log n} - 1 
  = p w_{x,\bj} \,,
\end{align*}
so a total of $\ge \sum_{\bj \in \b^k} p w_{x,\bj} \ge p$ entries $g_{\bw}(x,
\cdot)$ will be covered.
\end{proof}

Finally, we provide the promised formal proof of a key claim made within our
proof of \Cref{lem:c-degree-limit}. We continue to use the notation and
terminology from that proof.

\begin{lemma}[Key claim in proof of \Cref{lem:c-degree-limit}]\label{lem:c-degree-limit-formal} 
    That $\EE[Z_{k,\ell} \mid Z_{\prec (k,\ell)}] \le 1/\Delta^{3/2}$.
\end{lemma}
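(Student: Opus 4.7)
\begin{prf}[Proof Proposal]
The plan is to prove the stronger claim that $\EE[Z_{k,\ell} \mid \tau, Z_{k,1},\ldots,Z_{k,\ell-1}] \le 1/\Delta^{3/2}$, where $\tau$ denotes the full transcript of adversary-algorithm interaction up to the instant just before the algorithm begins processing the edge $\{x,Y_k\}$. Once this pointwise bound is shown, the lemma follows by the tower property, since $\sigma(Z_{\prec(k,\ell)}) \subseteq \sigma(\tau, Z_{k,1},\ldots,Z_{k,\ell-1})$: the earlier random variables $Z_{k',\ell'}$ with $k' < k$ are measurable with respect to $\tau$, and $Z_{k,\ell'}$ for $\ell' < \ell$ are explicitly included.

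First, split into two cases based on whether $\tau$ already forces $Z_{k,\ell}=0$. Given $\tau$, the vertex $Y_k$ is determined (we may assume a deterministic adversary without loss of generality), and so are the values $d(x),d(Y_k)$ just after the edge is added. If the degree condition $\ell \ge \lceil \max(d(x),d(Y_k))/\sqrt{\Delta}\rceil + 1$ fails, then $Z_{k,\ell}=0$ deterministically and the bound holds trivially. Otherwise, the remaining contribution to $Z_{k,\ell}$ is the test $g_\ell(x) = g_\ell(Y_k)$, so it suffices to show $\Pr[g_\ell(x)=g_\ell(Y_k) \mid \tau, Z_{k,1},\ldots,Z_{k,\ell-1}] \le 1/\Delta^{3/2}$ in this case.

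The central step, and the main obstacle, is showing that $g_\ell(Y_k)$ is still uniformly distributed on $[\Delta^{3/2}]$, independent of $g_\ell(x)$, under this conditioning. The degree condition forces $d(Y_k) \le (\ell-1)\sqrt{\Delta}$ at the moment of processing, hence throughout $\tau$ the vertex $Y_k$ was at a level $\le \ell-1$. I will argue that algorithm outputs at any query during $\tau$ depend on $Y_k$'s hash values only through $g_{\ell(Y_k,t)}(Y_k)$, where $\ell(Y_k,t) \le \ell-1$ is $Y_k$'s level at query time $t$: indeed, inspecting the Query routine, a vertex $w$ appears in a block $F(\ell',c)$ only when its current level equals $\ell'$, and edges incident to $w$ are used in the degeneracy$+1$ coloring of $F(\ell',c)$ only when \emph{both} endpoints are in $F(\ell',c)$, so no $g_\ell(Y_k)$ with $\ell > \ell(Y_k,t)$ ever influences the output. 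Similarly, $g_\ell(x)$ was not revealed by $\tau$ for the same reason applied to $x$. The internal tests the algorithm may have performed involving $g_\ell(Y_k)$ (when processing earlier edges incident to $Y_k$) only affect the hidden sets $C_\ell$ and do not leak into $\tau$. Finally, conditioning on $Z_{k,1},\ldots,Z_{k,\ell-1}$ only adds information about $g_{\ell'}$ for $\ell' < \ell$, and these functions are mutually independent of $g_\ell$ by construction.

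Putting these observations together, the joint distribution of $(g_\ell(x), g_\ell(Y_k))$ under the conditioning remains that of two independent uniform draws from $[\Delta^{3/2}]$ (using $x \ne Y_k$, which follows from the simple-graph assumption). Hence $\Pr[g_\ell(x) = g_\ell(Y_k) \mid \tau, Z_{k,1},\ldots,Z_{k,\ell-1}] = 1/\Delta^{3/2}$, completing the argument. The delicate point to write carefully will be the formal claim that the transcript is a function of the algorithm's randomness only through the hashes $g_{\ell'}$ for $\ell' \le \ell(Y_k,t)$ at each query time $t \le $ current time, and through $g_{\ell'}(x)$-values with analogous constraints; this can be made rigorous by induction over the stream, tracking which hash evaluations are ``exposed'' at each step.
\end{prf}
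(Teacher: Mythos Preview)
Your overall strategy---condition on a richer $\sigma$-field containing the transcript, argue that neither $g_\ell(x)$ nor $g_\ell(Y_k)$ has been ``exposed'' because both endpoints sit below level $\ell$, and then apply the tower property---is the same idea the paper uses. The informal reasoning about why query outputs cannot depend on $g_\ell(Y_k)$ (the induced subgraph on $F(\ell,c)$ drops any edge with an endpoint outside $F(\ell,c)$) is also right.

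However, there is a real error in the tower step. You assert $\sigma(Z_{\prec(k,\ell)}) \subseteq \sigma(\tau, Z_{k,1},\ldots,Z_{k,\ell-1})$ on the grounds that each $Z_{k',\ell'}$ with $k'<k$ is $\tau$-measurable. That is false: $Z_{k',\ell'}$ records whether the algorithm stored $\{x,Y_{k'}\}$ in $C_{\ell'}$, which hinges on the internal test $g_{\ell'}(x)=g_{\ell'}(Y_{k'})$. The transcript $\tau$ contains only the adversary's edges and the algorithm's output colorings, not the contents of the sets $C_{\ell'}$; in general the outcome of this test is not recoverable from $\tau$ (for instance, if neither $x$ nor $Y_{k'}$ was fast at level $\ell'$ at any query time before the $k$th edge). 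So the sigma-algebra containment fails, and the tower property as you wrote it does not deliver the conclusion.

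The fix is simple: condition on $\tau$ together with \emph{all} of $Z_{\prec(k,\ell)}$, not just $Z_{k,1},\ldots,Z_{k,\ell-1}$. The extra variables $Z_{k',\ell}$ with $k'<k$ only involve $g_\ell(x)$ and $g_\ell(Y_{k'})$; since the $Y_{k'}$ are distinct from $Y_k$, none of them touches $g_\ell(Y_k)$. Thus $g_\ell(Y_k)$ remains uniform on $[\Delta^{3/2}]$ and independent of $g_\ell(x)$ under this enlarged conditioning, giving $\Pr[g_\ell(x)=g_\ell(Y_k)\mid \tau, Z_{\prec(k,\ell)}]=1/\Delta^{3/2}$ in the nontrivial case. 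The tower property then goes through. This is essentially how the paper structures the argument (it works directly with $Z_{\prec(k,\ell)}$ and a case split on whether $g_\ell(v)$ has been used), so after this correction your proof and the paper's are on the same footing.
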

\begin{proof}
  To express this more formally, we first apply the law of total probability, and expand the definition of $Z_{k,\ell}$:
  \begin{align}
      \EE[Z_{k,\ell} \mid& Z_{\prec (k,\ell)}] \\
      &= \Pr\left[ g_\ell(Y_k) = g_\ell(x) \land \ell \ge \ceil{\frac{\max(d(x),d(Y_k))}{\sqrt{\Delta}}} + 
            1 \land k \le D ~\Big\vert~ Z_{\prec (\ell,\ell)} \right] \nonumber \\
      &= \sum_{v \in V\setm\{x\}} \sum_{c \in [\Delta]^{3/2}} 
            \Pr\left[g_\ell(v) = c \land Y_k = v \land g_\ell(x) = c \land \ell \ge \ceil{\frac{\max(d(x),d(Y_k))}{\sqrt{\Delta}}} + 
            1 \land k \le D ~\Big\vert~ Z_{\prec (\ell,\ell)} \right] \nonumber \\
      &= \sum_{v \in V\setm\{x\}} \sum_{c \in [\Delta]^{3/2}} 
            \Pr\left[g_\ell(v) = c \mid E_{\ell,k,v,c}, Z_{\prec (k,\ell)} \right] 
            \Pr[E_{\ell,k,v,c} \mid Z_{\prec (k,\ell)} ] \,. \label{eq:cond-fast-ub-semiexp}
  \end{align}
  The last step abbreviates the event $\{Y_k = v \land g_\ell(x) = c \land \ell \ge \ceil{\frac{\max(d(x),d(Y_k))}{\sqrt{\Delta}}} + 1 \land k \le D \} =: E_{\ell,k,v,c}$. (In plain terms, this event occurs if it is true that "whether $\{x,Y_k\}$ is stored is determined by the check $g_\ell(v) \overset{?}{=} c$" .) We will now prove that $\Pr\left[g_\ell(v) = c \mid E_{\ell,k,v,c}, Z_{\prec (k,\ell)} \right] = \Pr[g_\ell(v) = c]$ -- in other words, that the event $\{g_\ell(v) = c\}$ is mutually independent of the event $E_{\ell,k,v,c}$ and the random variable $Z_{\prec (k,\ell)}$. This will be done in two steps: first we will show that conditioned on the event $E_{\ell,k,v,c}$ being true, $\{g_\ell(v) = c\}$ and $Z_{\prec (k,\ell)}$ are independent of each other. Then we will prove $\{g_\ell(v) = c\}$ is independent of whether the event $E_{\ell,k,v,c}$ holds.
  
  If $E_{\ell,k,v,c}$ holds, then by definition we have $Y_k = v$. Because the endpoints of the edges $\{x,Y_2\}$, ... $\{x,Y_k \}$ are disjoint, this ensures that $x$, $v$, and $Y_1$ through $Y_k$ are all distinct; consequently $g_\ell(x)$, $g_\ell(v)$,
  $g_\ell(Y_1)$, through $g_\ell(Y_{k-1})$, and  $g_\ell(v)$ are all mutually independent of each other, as are all the functions $g_1, g_2,\ldots,g_{\sqrt{\Delta}}$. Next, because $E_{\ell,k,v,c}$ implies $\ell \ge \ceil{\frac{\max(d(x),d(v))}{\sqrt{\Delta}}} + 1$, we observe that the value of $g_\ell(v)$ has not been revealed to the adversary. According to the code of  \Cref{alg:robust} near Line \ref{step:fast-color}, the value of $g_\ell$ will only be used to produce colorings for vertices $w$ that satisfy $\ceil{d(w) / \sqrt{\Delta}} = \ell$; but $d(v)$ is too low for this to occur. As $g_\ell(v)$ does not affect the output of the algorithm, it also can not affect the behavior of the adversary. Consequently, the sequence $Y_1,\ldots,Y_\ell$, and way in which the degrees of these vertices change, must have been chosen independently of $g_\ell(v)$, conditioning on the event $E_{\ell,k,v,c}$. Because $Z_{\prec (k,\ell)}$ is determined by the algorithm input and the values $g_{\ell'}(x)$, $g_{\ell'}(Y_{k'})$ for all $(k',\ell') \prec (k,\ell)$, and we have shown the latter are mutually independent of $g_\ell(v)$, it follows that $Z_{\prec (k,\ell)}$ is independent of $g_\ell(v)$, conditioned on the event $E_{\ell,k,v,c}$. Thus $\Pr\left[g_\ell(v) = c \mid E_{\ell,k,v,c}, Z_{\prec (k,\ell)} \right] = \Pr[g_\ell(v) = c \mid E_{\ell,k,v,c}]$.
  
  We now prove that $\{g_\ell(v) = c\}$ is independent of the event $E_{\ell,k,v,c}$. We can split $E_{\ell,k,v,c}$ into the intersection of two smaller events; that $\{g_\ell(x) = c\}$, and the event $E_{\ell,k,v} := \{Y_k = v \land \land \ell \ge \ceil{\frac{\max(d(x),d(Y_k))}{\sqrt{\Delta}}} + 1 \land k \le D \}$. Since $v \ne x$, the values of $g_\ell(v)$ and $g_\ell(x)$ are independent; throughout the following argument, we will condition on the event that $g_\ell(x) = c$. The event $E_{\ell,k,v}$ depends only on $Y_k, d(x), d(Y_k)$ and $D$: that is, values derived purely from the input stream the adversary creates, and not otherwise dependent on the random bits of \Cref{alg:robust}. Let $F$ be the event that the value of $g_\ell(v)$ is used to compute a coloring provided to the adversary. If, the event $F$ is does not occur, then the input stream is independent of $g_\ell(v)$, so $E_{\ell,k,v}$ is independent of $\{g_\ell(v) = c\}$. On the other hand, if $F$ does occur, then
  $\ceil{d(v)/\sqrt{\Delta}} = \ell$ must have been true at some point, which
  means the condition $\ell \ge \ceil{\frac{\max(d(x),d(Y_k))}{\sqrt{\Delta}}} + 1$ is false, and $E_{\ell,k,v}$ does not occur. Either way, $g_\ell(v)$ is independent of $E_{\ell,k,v}$. Since this is true no matter whether $g_\ell(x) = c$ holds, it follows that $\Pr[g_\ell(v) = c \mid E_{\ell,k,v,c}] = Pr[g_\ell(v) = c]$.
  
  It remains to finish the upper bound on Eq. \ref{eq:cond-fast-ub-semiexp}. As $Pr[g_\ell(v) = c] = 1/\Delta^{3/2}$,
  \begin{align*}
      \EE[Z_{k,\ell} \mid Z_{\prec (k,\ell)}] &=
         \sum_{v \in V\setm\{x\}} \sum_{c \in [\Delta]^{3/2}} 
            \Pr[g_\ell(v) = c ] \Pr[E_{\ell,k,v,c} \mid Z_{\prec (k,\ell)} ] \\
         &\le 
         \frac{1}{\Delta^{3/2}} \left( \sum_{v \in V\setm\{x\}} \sum_{c \in [\Delta]^{3/2}} 
            \Pr[E_{\ell,k,v,c} \mid Z_{\prec (k,\ell)} ] \right) \\
        &\le 
         \frac{1}{\Delta^{3/2}} \left( \sum_{v \in V\setm\{x\}} \sum_{c \in [\Delta]^{3/2}} 
            \Pr[Y_k = v \land g_\ell(x) = c \mid Z_{\prec (k,\ell)} ] \right) \\
        &= \frac{1}{\Delta^{3/2}} \,. \qedhere
  \end{align*}
\end{proof}

\end{document}